\newif\iffinal
\newif\ifwithextensions
  \definecolor{lightblue}{rgb}{.8,.95,1}
\let\note\relax
\newcommand{\sj}[1]{\note[$S\!J$]{#1}}
\newcommand{\ak}[1]{\note[$A\!K$]{#1}}
\newcommand{\listOfKeywords}{parameterized model checking, reactive synthesis, distributed systems, guarded protocols, deadlocks, fairness}
\newcommand{\mailto}[1]{\href{mailto:#1}{\nolinkurl{#1}}}
\newcommand{\mC}{\mathcal{C}}
\newcommand{\mD}{\mathcal{D}}
\newcommand{\mF}{\mathcal{F}}
\newcommand{\mP}{\mathcal{P}}
\newcommand{\mB}{\mathcal{B}}
\newcommand{\bbN}{\mathbb{N}}
\newcommand{\bbB}{\mathbb{B}}
\newcommand{\init}{{\sf init}\xspace}
\newcommand{\templateI}{A}
\newcommand{\templateII}{B}
\newcommand{\inputs}{\Sigma}
\newcommand{\localin}{\sigma}
\newcommand{\globIn}{E}
\newcommand{\visited}{\mathsf{Visited}}
\newcommand{\visInf}[2]{\mathsf{Visited}^\inf_{#1}\!(#2)}
\newcommand{\visFin}[2]{\mathsf{Visited}^\fin_{#1}\!(#2)}
\newcommand{\witfirst}{\mathsf{wit\_first}}
\newcommand{\witsecond}{\mathsf{wit\_second}}
\newcommand{\witlast}{\mathsf{wit\_last}}
\newcommand{\first}{\mathsf{first}}
\newcommand{\destutter}{\ensuremath{\mathsf{destutter}}\xspace}
\newcommand{\bound}{b}
\newcommand{\sched}{{{\sf move}}}
\newcommand{\enabled}{{{\sf en}}}
\newcommand\LTLmX{\ensuremath{\mbox{\textsf{LTL}}\backslash\textsf{X}}}
\newcommand\LTL{\ensuremath{\mbox{\textsf{LTL}}}}
\definecolor{darkgreen}{rgb}{0,0.5,0}
\definecolor{darkblue}{rgb}{0,0,.5}
\definecolor{mygray}{gray}{.3}
\newcommand{\state}{q}
\newcommand{\initstate}{{\sf init}\xspace}
\newcommand{\stateset}{\expandafter\MakeUppercase\expandafter{\state}}
\newcommand{\State}{s}
\newcommand{\Stateset}{\expandafter\MakeUppercase\expandafter{\State}}
\newcommand{\trans}{\ensuremath{\delta}}
\newcommand{\Trans}{\ensuremath{\Delta}}
\newcommand{\labeling}{L}
\newcommand{\labelings}{\mathcal{\labeling}}
\renewcommand{\time}{m}
\newcommand{\dead}{{\sf dead}\xspace}
\newcommand{\myloop}[1]{\ensuremath{#1 \rightarrow \initstate \rightarrow #1}\xspace}
\newcommand{\card}[1]{\left| {#1} \right|}
\newcommand{\Nat}{\ensuremath{\mathbb{N}}}
\renewcommand{\|}{\mid}
\newcommand{\smi}{\!\setminus\!}
\newcommand{\cupdot}{\mathbin{\dot{\cup}}}
\newcommand{\smartpar}[1]{\medskip \noindent {\bf #1}}
\newcommand{\spec}{\Phi}
\newcommand{\impl}{\rightarrow}
\newcommand{\Impl}{\Rightarrow}
\newcommand{\Implied}{\Leftarrow}
\renewcommand{\iff}{\Leftrightarrow}
\declaretheorem[name=Theorem]{thm}
\newcommand{\disj}[1]{\ensuremath{\exists\{#1\}}}
\newcommand{\cutoffsys}{\ensuremath{(A,B)^{(1,c)}}\xspace}
\newcommand{\largesys}{\ensuremath{(A,B)^{(1,n)}}\xspace}
\newcommand{\interleave}{\ensuremath{\textit{interleave}}\xspace}
\newcommand{\last}{\mathsf{last}}
\newcommand{\qstar}{{\ensuremath{q^\star}}}
\newcommand{\fin}{\textit{fin}}
\newcommand{\pref}{\mathsf{pref}}
\renewcommand{\inf}{\textit{inf}}
\newcommand{\gray}[1]{}
\newcommand{\gray}[1]{{\color{black!50} #1}}
\newcommand{\transition}[3]{#1 \stackrel{#3}{\rightarrow} #2}
\newcommand{\slice}[2]{\ensuremath{[{#1}\!:\!{#2}]}}
\renewcommand{\fair}{\textit{fair}}
\newcommand{\specialcellC}[2][c]{%
  \begin{tabular}[#1]{@{}c@{}}#2\end{tabular}}
\newcommand{\li}{\begin{itemize}}
\newcommand{\il}{\end{itemize}}
\renewcommand{\-}{\item}
\newcommand{\myparagraphraw}[1]{\smallskip\noindent{\bf#1}}
\newcommand{\myparagraph}[1]{\myparagraphraw{#1.}}
	 	\tikzstyle{proc} = [rectangle,draw=black,fill=green!20,thick,inner sep=10pt]
	  	\tikzstyle{state} = [circle,draw=black,thick,inner sep=3pt]
	  	\tikzstyle{noproc} = [circle]
	  	\tikzstyle{lbl} = [rectangle,node distance=2cm]
  		\tikzstyle{pre} = [ <-,shorten <=2pt,shorten >=2pt, >=stealth', semithick]
	  	\tikzstyle{post} = [ ->,shorten <=2pt,shorten >=2pt, >=stealth', semithick]
  	\tikzstyle{pstate} = [circle, fill=black,thick, inner sep=2pt, minimum size=2mm]
    \tikzstyle{hiddenstate} = [circle]
  	\tikzstyle{trans-notsched} = [ ->,shorten <=2pt,shorten >=2pt, >=stealth', semithick]
  	\tikzstyle{trans-sched} = [ ->,shorten <=2pt,shorten >=2pt, >=stealth', very thick] 
\title{Tight Cutoffs for\\ Guarded Protocols with Fairness}
\author{Simon Au{\ss}erlechner\inst{1}, Swen Jacobs\inst{2}, Ayrat Khalimov\inst{1}}
\institute{$^1$ IAIK, Graz University of Technology, Austria\\
					 $^2$ Reactive Systems Group, Saarland University, Germany}
\authorrunning{S. Au{\ss}erlechner, S. Jacobs and A. Khalimov}%mandatory. First: Use abbreviated first/middle names. Second (only in severe cases): Use first author plus 'et. al.'
\begin{document}
\iffinal
\else
  \textbf{TODOs:}
  \begin{itemize}
	\item \textbf{check conjunctive proofs}
	\item add (short) correctness arguments to proofs
	\item {\bf presentation of proof ideas in main part:}
	  \begin{itemize}
      \item pictures of main constructions?
	  \item explain why proofs do not work in general (for conj)
	  \end{itemize}
	\item \textbf{Notation}: 
	\begin{itemize}
	\item quantification over processes: unify $\forall i$ vs. $\forall p$, or explain difference
    \- we use 'conjunctive guarded system' and 'conjunctive system' -- unify SJ: we only have it in related work now, and cannot define 'conjunctive system' before
	\end{itemize}
	\item synthesize the running example
	\end{itemize}
	\textbf{Optional/Extensions:}
	\begin{itemize}
	\item tighten disjunctive deadlock detection
	\item conjunctive systems: extend to disjoint guards
	\item label-based cutoffs
	\item do we run into undecidability somewhere?\\ (label-based local deadlock detection?) (maybe: PMCP of push-down systems communicating via conj guards -- because for comm primitive $=$ unnested locks this is undec -- see Kahlon's CAV05 (reasoning about threads...) )
    \item disj: disj guards can be expressed in Petri nets. We can reduce to petri nets if we want to solve Global Deadlocks problem (anything else?)
    \item we consider the problem ``no global deadlocks $\land$ no local deadlocks under strong fairness $\land$ properties under undcond fairness''.
          An alternative is: ``no global deadlocks $\land$ properties under strong fairness''.
	\end{itemize}

\newpage

  \setcounter{tocdepth}{3}
  \tableofcontents
\fi

%
%
%
%\mainmatter              % start of the contributions
%

\maketitle              % typeset the title of the contribution
 \begin{abstract}
Guarded protocols were introduced in a seminal
paper by Emerson and Kahlon (2000), and describe systems of processes whose transitions
are enabled or disabled depending on the existence of other processes in
certain local states. We study parameterized model checking and synthesis of
guarded protocols, both aiming at formal correctness arguments for systems with any number of processes. 
Cutoff results reduce reasoning about systems with an arbitrary 
number of processes to systems of a determined, fixed size. Our work stems 
from the observation that existing cutoff results for guarded protocols i) are 
restricted to closed systems, and ii) are of limited use for liveness 
properties because reductions do not preserve fairness. We close these gaps
and obtain new cutoff results for open systems with liveness properties under
fairness assumptions. Furthermore, 
we obtain cutoffs for the detection of global and local deadlocks, which are of 
paramount importance in synthesis. Finally, we prove tightness or asymptotic tightness for the new cutoffs.
%, and show two examples automatically synthesized 
%by a prototype synthesizer.
\end{abstract} 

\section{Introduction}
\label{sec:intro}

% Motivation: (synthesis is especially valueable in concurrent settings)
Concurrent hardware and software systems are notoriously hard to get correct.
Formal methods like model 
checking or synthesis can be used to guarantee correctness, but the state explosion 
problem prevents us from using such methods for systems with a large number 
of components. Furthermore, correctness properties are often expected to hold 
for an \emph{arbitrary} number of components. Both problems can be solved by 
\emph{parameterized} model checking and synthesis approaches, which give correctness 
guarantees for systems with any number of components without considering every 
possible system instance explicitly.

While parameterized model checking (PMC) is undecidable in general~\cite{Suzuki88}, 
there exist a number of methods that decide the problem for specific classes 
of systems~\cite{German92,Emerson00,Emerso03}, as well as semi-decision 
procedures that are successful in many interesting 
cases~\cite{Kurshan95,Clarke08,KaiserKW10}.
In this paper, we consider the \emph{cutoff} method that can guarantee properties of 
systems of arbitrary size by considering only systems of up to a certain 
fixed size, thus providing a decision procedure for PMC if components are finite-state.

We consider systems that are composed of an arbitrary number of processes,
each an instance of a process template from a given, finite set.  
Process templates can be viewed as synchronization 
skeletons~\cite{EmersonC82}, i.e., program abstractions that suppress information not 
necessary for synchronization. In our system model, processes communicate by guarded updates, 
where guards are statements about other processes that are interpreted either 
conjunctively (``every other process satisfies the guard'') or disjunctively 
(``there exists a process that satisfies the guard''). Conjunctive guards can 
model atomic sections or locks, disjunctive guards can model token-passing or to some extent pairwise rendezvous (cf.~\cite{EmersonK03}). 

This class of systems has been studied by Emerson and 
Kahlon~\cite{Emerson00}, and cutoffs that depend on the 
size of process templates are known for specifications of the form 
$\forall{\bar{p}}.\ \spec(\bar{p})$, 
where $\spec(\bar{p})$ is an $\LTLmX$ property over the local states of one or more processes 
$\bar{p}$. Note that this does not allow us to specify fairness 
assumptions, for two reasons: i) to specify fairness, additional atomic propositions for enabledness and scheduling of processes are needed, and ii) specifications with global fairness assumptions are of the form 
$(\forall{\bar{p}}.\ \fair(\bar{p})) \impl (\forall{\bar{p}}.\ \spec(\bar{p}))$.
Because neither is supported by \cite{Emerson00}, the existing cutoffs are of 
limited use for 
reasoning about liveness properties. 

Emerson and Kahlon~\cite{Emerson00} mentioned this limitation and illustrated it
using the process template on the figure on the right.
Transitions from the initial state 
\begin{wrapfigure}{r}{0.32\linewidth}
\vspace{-20pt}
\hspace{-10pt}
\scalebox{0.8}{
\begin{tikzpicture}[->,>=stealth',shorten >=1pt,auto,node distance=2cm,
                    semithick]
  \tikzstyle{every state}=[align=center, anchor=center, inner sep=2.2pt,minimum size=0.5mm]
  \tikzstyle{every edge} = [align=center,draw=black,]
  
  \node[state] (N) [label={below:$$},] {$N$};
  \node[state] (T) [right of=N, label={below:$$},] {$T$};
  \node[state] (C) [right of=T, label={below:$$},] {$C$};

  \path 
  (N) edge  node {$true$} (T)
  (T) edge node {$\forall \{T,N$\}} (C)
  (C) edge [bend left=24] node {$true$} (N);
\end{tikzpicture}
}
\label{fig:mutex}
\vspace{-20pt}
\hspace{-10pt}
\end{wrapfigure}
$N$ to the ``trying'' state $T$, and from the critical state $C$ to $N$ are always 
possible, while the transition from $T$ to $C$ is only possible if no other 
process is in $C$. The existing cutoff results can be 
used to prove safety properties like mutual exclusion for systems composed of 
arbitrarily many copies of this template. However, they cannot be used to prove 
starvation-freedom properties like 
$\forall{p}. \pforall \always (T_p \rightarrow \eventually C_p)$,
stating that every process $p$ that enters its local state $T_p$ will 
eventually enter state $C_p$, because without fairness 
of scheduling the property does not hold. 

Also, Emerson and Kahlon~\cite{Emerson00} consider only closed systems. 
Therefore, in this example, processes always try to enter $C$. 
In contrast, in open systems the transition to $T$ might be a reaction 
to a corresponding input from the environment that makes entering $C$ necessary. While it is possible to convert an open system to a closed system that is equivalent under \LTL\ properties, this comes at the cost of a blow-up.

\smallskip
\noindent {\bf Motivation.} Our work is inspired by applications in
parameterized synthesis~\cite{Jacobs14}, where the goal is to automatically
construct process templates such that a given specification is satisfied in
systems with an arbitrary number of components. In this setting, one generally
considers \emph{open systems} that interact with an uncontrollable environment,
and most specifications contain liveness properties that cannot be guaranteed
without fairness assumptions. Also, one is in general interested in synthesizing
deadlock-free systems. \emph{Cutoffs} are essential for parameterized
synthesis, and we will show in Sect.~\ref{sec:paramsynt} how size-dependent
cutoffs can be integrated into the parameterized synthesis approach.

\smallskip
\noindent
{\bf Contributions.}
\begin{itemize}
\item We show that existing cutoffs for model checking of $\LTLmX$ properties are in general not sufficient for systems with \emph{fairness assumptions}, and provide new cutoffs for this case.
\item We improve some of the existing cutoff results, and give separate cutoffs for the problem of \emph{deadlock detection}, which is closely related to fairness. 
\item We prove \emph{tightness} or asymptotical tightness for all of our cutoffs, showing that smaller cutoffs cannot exist with respect to the parameters we consider.
\end{itemize}
Moreover, all of our cutoffs directly support \emph{open systems}, where each process may communicate with an adversarial environment. This makes the blow-up incurred by translation to an equivalent closed system unnecessary. 
%Finally, we have a prototype implementation of \emph{parameterized synthesis}~\cite{Jacobs14} based on these new cutoff results, and demonstrate feasibility on two small examples.
The results presented here are based on a more detailed preliminary version of this paper~\cite{AJK15}.

\section{Related Work}
\label{sec:related}
\iffinal
\else
\hrule
\li
  \- cite ``Parameterized Model-Checking of Timed Systems with Conjunctive Guards'' and another Sasha's paper that contain results for disj guards (including undec result for some CTL*-like logic).

  \- Delzanno in ``Towards Automatic Verification of Java Programs'' studied Monitor objects, and had some kind of boolean guards -- related and compare with our conjunctive guards.

  \- relate with Kahlon's papers on locks, bounded lock chains, etc.

  \- relate to ``Using branching time temporal logic to synthesize synchronization skeletons''

  \- The paper `Automatic Deductive Verification with invisible invariants' \sj{mentioned it below}\ak{yes, but my point was different -- they do have cutoffs. (minor: not for the conference paper)} \sj{seems like they have cutoffs, but only for checking inductive invariants, not finding them} contains cutoff $2b+2$ for 1-indexed properties and $2b+3$ for 2-indexed (pp.7,8) . Is there a connection with our model/cutoffs?

  \- Benedikt Bollig has a paper on ParamSynt: comm is some form of comm automata.
     Can we reduce disj guards to comm automata?

\il
\hrule
\

\fi

As mentioned, we extend the results of Emerson and Kahlon~\cite{Emerson00} who study PMC of guarded protocols, but do not support fairness assumptions, nor provide cutoffs for 
deadlock detection.
In \cite{EmersonK03} they extended their work to systems with limited forms of guards and broadcasts, and also proved undecidability 
of PMC of conjunctive guarded protocols wrt. $\LTL$ (including $\nextt$), 
and undecidability wrt. $\LTLmX$ for systems with both 
conjunctive and disjunctive guards.
%Their line of work has been extended to obtain decision procedures for 
%a combination of limited forms of guarded updates with broadcast communication,
%as well as undecidability results for conjunctive 
%systems with specifications from $\LTL$ (with the $\nextt$ operator), and for 
%systems with both conjunctive and disjunctive guards with specifications from 
%$\LTLmX$~\cite{EmersonK03}.
%, and use the resulting 
%procedures to verify safety of high-level cache coherence 
%protocols~\cite{EmersonK03,Emerson03}. Emerson and Kahlon~\cite{EmersonK03} 
%\remove{Esparza et al. {\protect\cite{EsparzaFM99}} consider the closely related class of 
%systems that communicate by pairwise rendezvous and broadcast in 
%cliques. They 
%show undecidability for liveness specifications, and give a decision 
%procedure for safety specifications that is not based on cutoffs.}

Bouajjani et al.~\cite{Bouajjani08} study parameterized model checking of
resource allocation systems (RASs). Such systems have a bounded number of resources, 
each owned by at most one process at any time. Processes are pushdown automata, 
and can request resources with high or normal priority.
RASs are similar to conjunctive guarded protocols in that certain
transitions are disabled unless a processes has a certain resource. 
RASs without priorities and with processes being finite state automata can 
be converted to conjunctive guarded protocols (at the price of blow up), 
but not vice versa. The authors study parameterized model checking wrt. 
$\LTLmX$ properties on arbitrary or on strong-fair runs, and (local or global) deadlock detection. The proof structure resembles that of \cite{Emerson00}, as does ours.
 
German and Sistla~\cite{German92} considered global deadlocks and strong fairness properties for systems with pairwise rendezvous communication in a clique.
Emerson and Kahlon~\protect\cite{EmersonK03} have shown that disjunctive guard systems can be reduced to such pairwise rendezvous systems.
However, German and Sistla \cite{German92} do not provide cutoffs, 
nor do they consider local deadlocks, 
and their specifications can talk about one process only.
Aminof et al.~\cite{AminofKRSV14} have 
recently extended these results to more general topologies, and have 
shown that for some decidable PMC problems there are 
no cutoffs, even in cliques. \ak{update the paragraphs, they also have some
  results on disj?}

\sj{we might remove this paragraph:}\ak{may be worth to compare with conj guards}
Emerson and Namjoshi provide cutoffs for systems that pass a valueless token 
in a ring~\cite{Emerso03}, 
which is essentially resource allocation of a 
single resource with a specific allocation scheme. Their results have been extended to more general topologies~\cite{Clarke04c,AminofJKR14}.
All of these results consider fairness of token passing in the sense that 
every process receives the token infinitely often.

Many of the decidability results above have recently been surveyed by Bloem
et al~\cite{BloemETAL15}. In addition, there are many methods based on
semi-algorithms. 

``Dynamic cutoff''
approaches~\cite{KaiserKW10,AbdullaHH13} support larger 
classes of systems, and try to find cutoffs for a 
concrete system and specification. These 
methods can find smaller cutoffs than those that are statically determined for 
a whole class of systems and specifications, but are currently limited to safety
properties.
The invisible invariants method~\cite{PnueliRZ01} tries to find
invariants in small systems, and applies a specialized cutoff result to prove
correctness of all instances, including an extension to liveness properties~\cite{FangPPZ06}.

Finally, there are methods that work completely without cutoffs, like regular model
checking~\cite{Bouajjani00}, network invariants~\cite{WolperL89,Kurshan95,KestenPSZ02},
and counter abstraction~\cite{PnueliXZ02}. They are in general
incomplete, but may provide decision procedures for certain classes of systems
and specifications, and support liveness to some extent.

%Regular model checking~\cite{Bouajjani00} represents the state space of an infinite 
%number of system instances as regular expressions over a finite alphabet, and 
%provides a decision procedure for certain classes of systems and specifications.
%The invisible 
%invariants method~\cite{PnueliRZ01,FangPPZ06} automatically detects candidate 
%invariants in small system instances, and then tries to verify them for the 
%parameterized case. 
%The approach of network invariants~\cite{WolperL89,Kurshan95,KestenPSZ02} 
%uses a combination of abstraction and parallel composition
%to find a finite system that exhibits all behaviours of the 
%parameterized system and satisfies the property to be 
%verified. The counter 
%abstraction approach~\cite{PnueliXZ02} tries to verify the 
%parameterized system by only keeping track of whether there are $0$, $1$, or 
%more processes in any given local state. 

\section{Preliminaries}
\label{sec:prelim}\label{sec:definitions}

\sj{should we add a remark on which definitions are old, and which are new?}

\subsection{System Model}
\label{sec:model}

We consider systems $A {\parallel} B^n$, usually written $\largesys$, 
consisting of one copy of a process template $A$ and $n$ copies of a process template $B$, 
in an interleaving parallel composition.%
%% AK: moved to a separate note, to be able to explain why we _cannot_ generalize for 1-conj
%\footnote{As shown in \cite{Emerson00}, cutoffs for this case generalize to cutoffs 
%          for systems of the form $A^m {\parallel} B^n$, and further 
%          to systems with an arbitrary number of process templates 
%          $U_1^{n_1} {\parallel} \ldots {\parallel} U_m^{n_m}$.} 
We distinguish objects that belong to different templates by indexing them with
the template. E.g., for process template $U \in \{A,B\}$, $Q_U$ is the set of
states of $U$. For this section, fix two disjoint finite sets $Q_A$, $Q_B$ as
sets of states of process templates $A$ and $B$, and a positive integer $n$.

\smartpar{Processes.} A \emph{process template} 
 is a transition system
  $U=(\stateset, \init, \inputs, \trans)$ with 
	\begin{itemize}
	\item $\stateset$ is a finite set of states including the
  initial state $\init$,
	\item $\inputs$ is a finite input alphabet,
	\item $\trans: \stateset \times \inputs \times \mP(Q_A \cupdot Q_B) \times \stateset$ is a guarded transition relation.
	\end{itemize}
A process template is \emph{closed} if $\inputs = \emptyset$, and otherwise \emph{open}.

We define the size $\card{U}$ of a process template $U \in \{A,B\}$ as $\card{\stateset_U}$. A copy of template $U$ will be called a \emph{$U$-process}.
Different $B$-processes are distinguished by subscript, i.e., for $i \in [1..n]$, $B_i$ is the $i$th copy of $B$, and $\state_{B_i}$ is a state of $B_i$. A state of the $A$-process is denoted by $q_A$. 

For the rest of this subsection, fix templates $A$ and $B$. We assume that $\inputs_A \cap \inputs_B = \emptyset$. We will also write $p$ for a process in $\{ A, B_1, \ldots, B_n\}$, unless $p$ is specified explicitly.

\smartpar{Disjunctive and Conjunctive Systems.}
In a system $\largesys$, consider global state $s = (\state_A,\state_{B_1},\ldots,\state_{B_n})$ and global input $e=(\localin_A,\localin_{B_1},\ldots,\localin_{B_n})$.
We also write $s(p)$ for $q_p$, and $e(p)$ for $\sigma_p$.
A local transition $(\state_p,\localin_p,g,\state_p') \in \trans_U$ of $p$ is \emph{enabled for $s$ and $e$} if its \emph{guard} $g$ is satisfied for $p$ in $s$, written $(s,p) \models g$. 
Disjunctive and conjunctive systems are distinguished by the \emph{interpretation of guards}:
\begin{align*}
\text{In disjunctive systems: } & (s,p) \models g \text{~~~iff~~~} 
\exists p' \in \{A,B_1,\ldots,B_n\} \setminus \{p\}:\ \ \state_{p'} \in g. \\
\text{In conjunctive systems: } & (s,p) \models g \text{~~~iff~~~} 
\forall p' \in \{A,B_1,\ldots,B_n\} \setminus \{p\}:\ \ \state_{p'} \in g.
\end{align*}

Note that we check containment in the guard (disjunctively or conjunctively) 
only for local states of processes \emph{different from} $p$. A process is \emph{enabled} for $s$ and $e$ if at least one of its transitions is enabled for $s$ and $e$, otherwise it is \emph{disabled}.

Like Emerson and Kahlon~\cite{Emerson00}, 
we assume that in conjunctive systems $\init_A$ and $\init_B$ are contained in all guards,
i.e., they act as neutral states.
Furthermore, we call a conjunctive system \emph{$1$-conjunctive} if every guard is of the form $(Q_A \cupdot Q_B) \setminus \{q\}$ for some $q \in Q_A\cupdot Q_B$.

Then, \largesys is defined as the transition 
system $(S,\init_S,\globIn,\Trans)$ with 
\begin{itemize}
\item set of global states $S = (\stateset_A) \times (\stateset_B)^{n}$, 
\item global initial state $\init_S = (\initstate_A,\initstate_B,\ldots,\initstate_B)$, 
\item set of global inputs $\globIn = (\inputs_A) \times (\inputs_B)^{n}$,
\item and global transition relation $\Trans \subseteq S \times \globIn \times S$ with $(s,e,s') \in \Trans$ iff 
\begin{enumerate}[label=\roman*)] 
  \item $s=(\state_A,\state_{B_1},\ldots,\state_{B_n})$, 
  \item $e=(\localin_A, \localin_{B_1},\ldots,\localin_{B_n})$, and 
  \item $s'$ is obtained from $s$ by replacing one local state $\state_p$ with a new local state $\state_p'$, where $p$ is a $U$-process with local transition $(\state_{p},\localin_{p},g,\state_p') \in \trans_U$ and $(s,p) \models g$. 
\end{enumerate}
\end{itemize}
We say that a system $\largesys$ is \emph{of type} $(A,B)$. It is called a
\emph{conjunctive system} if guards are interpreted conjunctively, and a
\emph{disjunctive system} if guards are interpreted disjunctively. \sj{is
  previous sentence necessary? conj. and disj. systems are already defined before}
A system is \emph{closed} if all of its templates are closed.
We often denote the set $\{B_1,...,B_n\}$ as $\mB$.

\smartpar{Runs.} 
A \emph{configuration} of a system is a triple $(s,e,p)$, where $s \in S$, $e 
\in \globIn$, and $p$ is either a system process, or the special symbol $\bot$.
 A \emph{path} of a system is a configuration sequence 
$x = (s_1,e_1,p_1),(s_2,e_2,p_2),\ldots$ such that for all $\time < |x|$ there is a 
transition $(s_\time,e_\time,s_{\time+1}) \in \Trans$ based on a local 
transition of process $p_\time$. We say that process 
$p_\time$ \emph{moves} at \emph{moment} $\time$. 
Configuration $(s,e,\bot)$ appears
 iff all processes are disabled for $s$ and $e$.
Also, for every $p$ and $\time < |x|$: 
either $e_{\time+1}(p) = e_\time(p)$ or process $p$ moves at moment $\time$. 
That is, the environment keeps input to each process unchanged until 
the process can read it.\footnote{By only considering inputs that are actually processed, we 
approximate an 
action-based semantics. Paths that do not fulfill this requirement are not 
very interesting, since the environment can violate any interesting 
specification that involves input signals by manipulating them when the 
corresponding process is not allowed to move.} 

A system \emph{run} is a maximal path starting in the initial state. Runs are either infinite, or they end in a configuration $(s,e,\bot)$. We say that a run is \emph{initializing} if every 
%$B$-process 
process
that moves infinitely often also visits 
%$\initstate_B$ 
its $\initstate$ 
infinitely often.

Given a system path $x = (s_1,e_1,p_1),(s_2,e_2,p_2),\ldots$ and a process $p$, the \emph{local path} of $p$ in $x$ is the projection $x(p) = (s_1(p),e_1(p)),(s_2(p),e_2(p)),\ldots$ of $x$ onto local states and inputs of $p$.
Similarly define the projection on two processes $p_1,p_2$ denoted by $x(p_1,p_2)$.

%The \emph{destuttering} $\destutter(x)$\ak{make it work with inf runs} of a (local) path \sj{local path not defined} $x=x_0,x_1,\ldots$ is obtained by removing stuttering steps from the sequence, i.e., $\destutter(x)$ is the maximal subsequence $x'$ of $x$ such that for every $\time$ we have $x'_\time \neq x'_{\time+1}$. Two (local) paths $x$ and $y$ are \emph{stutter-equivalent}, written $x \simeq y$, if $\destutter(x)=\destutter(y)$. Define an extension of $\destutter$ to sets of paths in the obvious way. Then two systems $S_1, S_2$ are \emph{stutter-equivalent}, written $S_1 \simeq S_2$, if $\destutter(X_1) = \destutter(X_2)$, where $X_i$ is the set of all infinite runs of system $S_i$.

\smartpar{Deadlocks and Fairness.}
A run is \emph{globally deadlocked} if it is finite.
An infinite run is \emph{locally deadlocked} for process $p$ if there exists $\time$ such that $p$ is disabled for all $s_{\time'},e_{\time'}$ with $\time'\ge \time$. A run is \emph{deadlocked} if it is locally or globally deadlocked.
A system \emph{has a (local/global) deadlock} if it has a (locally/globally) deadlocked run. Note that absence of local deadlocks for all $p$ implies absence of global deadlocks, but not the other way around.

A run $(s_1,e_1,p_1), (s_2,e_2,p_2),...$ is \emph{unconditionally-fair} if every process moves infinitely often. 
A run is \emph{strong-fair} if it is infinite and for every process $p$, if $p$ is enabled infinitely often, then $p$ moves infinitely often.
%\sj{weak fairness needed?} Finally, $x$ is \emph{weak-fair} if it is infinite and for every process $p$, if there exists $t$ such that $p$ is enabled for every $s_{\time'}, e_{\time'}$ with $\time' \ge \time$, then $p$ moves infinitely often.
We will discuss the role of deadlocks and fairness in synthesis in Sect.~\ref{sec:paramsynt}.

\begin{remark}
Why do we consider systems $A {\parallel} B^n$?
Emerson and Kahlon~\cite{Emerson00} showed how to generalize cutoffs 
for such systems to systems of the form $A^m {\parallel} B^n$, 
and further to systems with an arbitrary number of process templates 
$U_1^{n_1} {\parallel} \ldots {\parallel} U_m^{n_m}$.
This generalization also works for our new results, except for the cutoffs for deadlock detection that are restricted to 1-conjunctive systems (see Section~\ref{sec:cutoffs}).
% AK: i had to change the def of initializing runs: 
% now A is also initializing -- although this is used only in the proof of Monotonicity(conj,dead,fair)
% SJ: if this unnecessarily reduces generality of our results for (conj,prop,fair), then we should not use the more restrictive version for both cases
\end{remark}

\subsection{Specifications}
\label{sec:semantics}
Fix templates $(A,B)$. We consider formulas in $\LTLmX$, i.e., $\LTL$ without the next-time operator $\nextt$.
Let $h(A,B_{i_1},\ldots,B_{i_k})$ be an $\LTLmX$ formula over atomic propositions from $Q_A \cup \Sigma_A$ and indexed propositions from $(Q_B \cup \Sigma_B) \times \{i_1,\ldots,i_k\}$. For a system $\largesys$ with $n \geq k$ and $i_j \in [1..n]$, satisfaction of $\pforall h(A,B_{i_1},\ldots,B_{i_k})$ and $\pexists h(A,B_{i_1},\ldots,B_{i_k})$ is defined in the usual way (see e.g. \cite{PrinciplesMC}).

\smartpar{Parameterized Specifications.} 	
\label{sec:parameterized}
A \emph{parameterized specification} is a temporal logic formula
with indexed atomic propositions and quantification over indices. 
We consider formulas of the forms
$\forall{i_1,\ldots,i_k.} \pforall h(A,B_{i_1},\ldots,B_{i_k})$ and\\ 
$\forall{i_1,\ldots,i_k.} \pexists h(A,B_{i_1},\ldots,B_{i_k})$. 
For given $n \geq k$, 
$$\largesys {\models} \forall{i_1,{\ldots},i_k.} \pforall h(A,B_{i_1},{\ldots},B_{i_k})$$
~iff~
$$\largesys {\models} \bigwedge_{j_1 \neq {\ldots} \neq j_k \in [1..n]} \pforall h(A,B_{j_1},{\ldots},B_{j_k}).$$ 
By symmetry of guarded protocols, this is equivalent 
(cp.\cite{Emerson00})
to $\largesys \models \pforall h(A,B_1,\ldots,B_k)$. 
The latter formula is denoted by $\pforall h(A,B^{(k)})$, 
and we often use it instead of the original $\forall{i_1,\ldots,i_k.} \pforall h(A,B_{i_1},...,B_{i_k})$. For formulas with path quantifier $\pexists$, satisfaction is defined analogously, and equivalent to satisfaction of $\pexists h(A,B^{(k)})$.

\smartpar{Specification of Fairness and Local Deadlocks.}
It is often convenient to express fairness assumptions and local deadlocks 
as parameterized specifications.
To this end,
define auxiliary atomic propositions $\sched_p$ and $\enabled_p$ for every process $p$ of system $(A,B)^{(1,n)}$. At moment $\time$ of a given run $(s_1,e_1,p_1),(s_2,e_2,p_2), \ldots$, let $\sched_p$ be true whenever $p_\time = p$, and let $\enabled_p$ be true if $p$ is enabled for $s_\time, e_\time$. Note that we only allow the use of these propositions to define fairness, but not in general specifications.
Then, an infinite run is 
\begin{itemize}
\item \emph{local-deadlock-free} if it satisfies $\forall{p}. \GF \enabled_p$, abbreviated as $\spec_{\neg dead}$,
\item \emph{strong-fair} if it satisfies $\forall{p}. \GF \enabled_p \impl \GF \sched_p$, abbreviated as $\spec_{strong}$, and 
\item \emph{unconditionally-fair} if it satisfies $\forall{p}. \GF \sched_p$, abbreviated as $\spec_{uncond}$.
%\item \sj{needed?:}\emph{weak-fair} if it satisfies $\forall{p}. \pforall \spec_{weak}$, where $\spec_{weak} = \FG \enabled_p \impl \GF \sched_p$.
\end{itemize}

If \emph{fair} is a fairness notion and 
$\pforall h(A,B^{(k)})$ 
a specification, then we write 
$\pforall_{fair} h(A,B^{(k)})$ for $\pforall (\spec_{fair} 
\rightarrow h(A,B^{(k)}))$. Similarly, we write $\pexists_{fair} h(A,B^{(k)})$ for $\pexists (\spec_{fair} \land h(A,B^{(k)}))$.

\subsection{Model Checking and Synthesis Problems}
\label{sec:nonparameterized_synthesis}
For a given system $\largesys$ and specification $h(A,B^{(k)})$ with $n \ge k$,
\begin{itemize}
\item the \emph{model checking problem} is to decide whether $\largesys \models \pforall h(A,B^{(k)})$,
\item the \emph{deadlock detection problem} is to decide whether $\largesys$
      does not have global nor local deadlocks,
%\item the \emph{deadlock detection problem} is to decide whether all runs of $\largesys$
%are infinite and $\largesys \models \pforall \spec_{\neg dead}$, 
%i.e., there are no local deadlocks,
\item the \emph{parameterized model checking problem} (PMCP) is to decide whether $\forall m \ge n:\ (A,B)^{(1,m)} \models \pforall h(A,B^{(k)})$, and 
%\item the \emph{parameterized deadlock detection problem} is to decide whether for all $m \ge n$, all runs of $(A,B)^{(1,m)}$ are infinite and $(A,B)^{(1,m)} \models \pforall \spec_{\neg dead}$.
\item the \emph{parameterized deadlock detection problem} is to decide whether 
      for all $m \ge n$, $(A,B)^{(1,m)}$ does not have global nor local deadlocks.
\end{itemize}
For a given number $n \in \bbN$ and specification $h(A,B^{(k)})$ with $n \ge k$,
\begin{itemize}
\item the \emph{template synthesis problem} is to find process templates $A,B$ such that
$\largesys \models \pforall h(A,B^{(k)})$ and $\largesys$ does not have global deadlocks. 
\item 
the \emph{bounded template synthesis problem} for a pair of bounds $(\bound_A,\bound_B) \in \bbN \times \bbN$ 
is to solve the template synthesis problem with 
$\card{A} \leq \bound_A$ and $\card{B} \leq \bound_B$.
\item the \emph{parameterized template synthesis problem} is to find process templates $A,B$ such that $\forall m \ge n:\ (A,B)^{(1,m)} \models \pforall h(A,B^{(k)})$ and $(A,B)^{(1,m)}$ does not have global deadlocks.
\end{itemize}
These definitions can be flavored with different notions of fairness 
(and similarly for the $\pexists$ path quantifier).
In the next section we clarify the problems studied.

\section{Reduction Method and Challenges}
\label{sec:paramsynt}
 
We show how to use existing cutoff results of Emerson and Kahlon~\cite{Emerson00} to reduce the PMCP to a standard model checking problem, and parameterized synthesis to template synthesis. We note the limitations of the existing results that are crucial in the context of synthesis.

\smallskip
\noindent
{\bf Reduction by Cutoffs.}
A \emph{cutoff} for a system type $(\templateI,\templateII)$ and a specification $\spec$ is a number $c \in \bbN$ such that:
\[ 
\forall n \ge c: \left( \largesys \models \spec ~~\iff~~ \cutoffsys \models \spec \right).
\]
Similarly, $c \in \bbN$ is a \emph{cutoff for (local/global) deadlock detection}
if $\forall n \ge c:$ $\largesys$ has a (local/global) deadlock iff $\cutoffsys$ has a (local/global) deadlock.
For the systems and specifications presented in this paper, cutoffs can be computed from 
the size of process template $B$ and the number $k$ of copies of $B$ 
mentioned in the specification, and are given as expressions like 
$\card{B}+k+1$.

\begin{remark}\label{re:EK_cutoffs}
Our definition of a cutoff is different from that of Emerson and Kahlon~\cite{Emerson00}, and instead similar to, e.g., Emerson and Namjoshi~\cite{Emerso03}. The reason is that we want the following property to hold for any $(A,B)$ and $\Phi$: 

\noindent \emph{if $n_0$ is the smallest number such that ~$\forall n \geq n_0:\ \largesys \models \Phi$},

\noindent \emph{then any $c<n_0$ is not a cutoff, any $c\geq n_0$ is a cutoff.}

\smallskip\noindent 
We call $n_0$ the \emph{tight} cutoff.
The definition in \cite[page 2]{Emerson00} requires that $\forall{n\leq c}. \largesys \models \Phi$ if and only if $\forall{n \geq 1}: \largesys \models \Phi$, and thus allows stating $c<n_0$ as a cutoff if $\Phi$ does not hold for all $n$. \qed
\end{remark}
In model checking, a cutoff allows us to check whether any ``big'' system satisfies the specification by checking it in the cutoff system. 
As noted by Jacobs and Bloem~\cite{Jacobs14}, a 
similar reduction applies to the parameterized synthesis problem. 
For guarded protocols, we obtain the following 
\emph{semi-decision procedure for parameterized synthesis}\ak{is it decidable?}:
\begin{enumerate}
  \item[0.] set initial bound $(\bound_A,\bound_B)$ on size of process templates;
  \item[1.] determine cutoff for $(\bound_A,\bound_B)$ and $\spec$;
  \item[2.] solve bounded template synthesis problem for cutoff, size bound, and $\spec$;
  \item[3.] if successful return $(A,B)$ else increase $(\bound_A,\bound_B)$ and goto (1).
\end{enumerate}

\myparagraph{Existing Cutoff Results}
Emerson and Kahlon~\cite{Emerson00} have shown:

\begin{thm}[Disjunctive Cutoff Theorem] \label{thm:disj-cutoff-pairs}
    For closed disjunctive systems, $\card{B}+2$ is a cutoff {$^{(\dagger)}$} for formulas of the
    form $\pforall h(A,B^{(1)})$ and $\pexists h(A,B^{(1)})$, and for global
    deadlock detection.
\end{thm}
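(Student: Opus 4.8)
The plan is to establish the cutoff equivalence by two ``simulation'' constructions, one in each direction, and to reduce all three claims (the $\pforall$- and $\pexists$-specifications and global deadlock detection) to transferring the existence of a run with a prescribed projection onto the processes named in the property. Since closed systems have $\inputs = \emptyset$, a run is determined by the scheduling alone, and since $h(A,B^{(1)})$ mentions only $A$ and one $B$-process, I fix that process to be $B_1$ (using symmetry of guarded protocols, as in \cite{Emerson00}). For $\pforall h$ one must transfer existence of a \emph{violating} run, for $\pexists h$ a \emph{satisfying} run; in both cases the truth value of $h$ depends only on the local path $x(A,B_1)$, and because $h \in \LTLmX$ is stutter-invariant, it suffices to produce in the other system a run whose projection onto $(A,B_1)$ agrees with the given one \emph{up to stuttering}. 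This stutter-freedom is exactly what will let the two constructions insert or delete moves of the remaining $B$-processes.

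\emph{Monotonicity (cutoff $\to$ large).} Given a run of $\cutoffsys$, I would obtain a run of $\largesys$ for $n \ge c$ by a copycat/flooding construction: each of the extra $n-c$ processes duplicates the local path of some existing $B$-process and is scheduled immediately after its model. The point is that in a disjunctive system a guard is satisfied whenever \emph{some} other process occupies a guard state, so the set of enabled transitions is monotone in the set of present processes; hence every move of an original process stays enabled, and every copied move is witnessed by the same process that witnessed its model. The added moves only stutter the $(A,B_1)$-projection, so the property value is preserved. This direction needs no bound on $n$ and is the easy half.

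\emph{Projection (large $\to$ small).} This direction determines the value $\card{B}+2$. Given a run of $\largesys$, I would keep $A$ and $B_1$ and select a bounded set of ``witness'' $B$-processes that together supply every guard needed by the retained processes. Because interleaving moves only one process per step, each retained move needs only a single witness; a witness in a $Q_A$-state can only be $A$ itself (which is retained and follows its original path), so the real work is to cover the $Q_B$-witness-states, of which there are at most $\card{B}$. The key combinatorial step is a bootstrapping argument: order the $Q_B$-states by the first moment some process enters them along the large run; the transition achieving the $i$-th first entry is disjunctively witnessed by a process already in an \emph{earlier} such state. I would therefore dedicate one helper per $Q_B$-witness-state, drive it to its state along the path of the corresponding first-entering process (its own witnesses being helpers already parked in earlier states, or $A$), and then leave it unscheduled so it stays available forever. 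Counting the spec process $B_1$, the at most $\card{B}$ witness helpers, and one auxiliary $B$-process to start the bootstrapping, we retain $\card{B}+2$ copies of $B$ (besides the single $A$), which is the claimed cutoff.

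\emph{Deadlock detection and the main obstacle.} For global deadlocks (finite maximal runs ending in an all-disabled configuration) I would reuse the projection to map a deadlocked large run to the cutoff: removing processes only removes disjunctive witnesses, so every retained process disabled in the final state stays disabled, and the projected prefix reaches that state by the witnessing argument above. The reverse (cutoff $\to$ large) is where the standard flooding must be repaired, and I expect this to be the main obstacle: in a disjunctive system, adding a copy of a process in state $q$ can \emph{re-enable} a previously deadlocked process whenever some guard out of its state contains $q$, so the copies cannot be placed blindly but must be positioned so as to satisfy no guard of any disabled process. The same care is needed in the large $\to$ small projection to guarantee that the parked helpers are themselves disabled in the final configuration. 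Verifying these placement constraints, rather than the simulations themselves, is where the genuine bookkeeping of the proof lies.
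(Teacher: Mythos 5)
Your treatment of the two $\LTLmX$ claims is essentially the paper's proof. The monotonicity direction is even easier than your copycat scheme: the paper simply parks the added process in $\init_B$ forever, which cannot falsify a disjunctive guard (your mimicking variant also works, and is what the paper uses later for the fairness case). Your ``bootstrapping'' helpers are the paper's flooding construction: one process per state $q \in \visited_\mB(x)$ copying the prefix of the first visitor of $q$ and then parking. But note a timing subtlety your write-up glosses over: the construction is only immediate if each helper copies its model's prefix \emph{in lockstep with the original global schedule} (the paper copies everything at the original moments and only afterwards interleaves simultaneous moves and destutters). Your first-entry ordering correctly shows that all $Q_B$-witnesses lie in earlier-flooded states, but if you instead drive helpers \emph{sequentially} in that order, witnesses in $Q_A$-states break: $A$ follows its own path and need not be in the required state when a helper's intermediate transition is fired out of its original time slot. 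Also, the paper adds one more process $B_\infty$ copying an infinitely moving $B$-process precisely to keep $y$ infinite --- all your helpers park forever and neither $A$ nor $B_1$ need move infinitely often, so without it $y$ may degenerate into a finite (globally deadlocked) run; your ``auxiliary process to start the bootstrapping'' does not serve this purpose, though the final count $\card{B}+2$ coincides.

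The genuine gap is the deadlock third of the theorem: you name both obstacles accurately but resolve neither, and these resolutions are exactly the content of the proof. For the cutoff-to-large direction the paper's fix (its Monotonicity Lemma for deadlocks, valid for $n \geq \card{B}+1$, which suffices since $c = \card{B}+2$) is a pigeonhole placement: in the final state of a globally deadlocked run with more than $\card{B}$ processes of type $B$, some $q \in Q_B$ holds at least two processes; let the added process mimic one deadlocked in $q$. Then the set of occupied states is unchanged, and since every process already saw \emph{another} process in $q$, no unsatisfied guard can contain $q$ --- so adding one more occupant of $q$ re-enables nothing. For the large-to-cutoff direction, parking helpers blindly fails exactly as you fear (a helper parked in a transiently visited state may be enabled in the final configuration); the paper's construction partitions the deadlocked states into $\dead_1$ and $\dead_2$ according to whether an extra occupant would unlock them, copies verbatim the local runs of processes deadlocked in $\dead_1$, floods the states of $\dead_2$, and uses \emph{flooding with evacuation} to move helpers for transiently visited states into $\dead_2$, so that every helper ends disabled --- at the price of the larger count $2\card{B}-1$ in the paper's open-systems version (the $\card{B}+2$ figure for closed-system global deadlock rests on Emerson--Kahlon's separate argument for that direction). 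Without some such placement-and-evacuation mechanism, your proposal for the deadlock claim is a description of the difficulty rather than a proof of it.
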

 
\begin{thm}[Conjunctive Cutoff Theorem] \label{thm:conj-cutoff}
    For closed conjunctive systems, $2\card{B}$ is a cutoff {$^{(\dagger)}$} for formulas of the
    form $\pforall h(A)$ and $\pexists h(A)$, and for global deadlock detection.
    For formulas of the form $\pforall h(B^{(1)})$ and $\pexists h(B^{(1)})$,
    $2\card{B}+1$ is a cutoff.
\end{thm}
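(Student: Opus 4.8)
The plan is to prove both cutoff claims through a pair of simulation lemmas relating runs of $\largesys$, for arbitrary $n \ge c$, to runs of the cutoff system $\cutoffsys$, exploiting the basic \emph{conjunctive monotonicity} of the model: since a conjunctive guard requires \emph{all} other processes to lie in the guard set, removing processes can only \emph{enable} additional transitions, adding processes can only \emph{disable} them, and a process parked in $\init_B$ never falsifies any guard (init is neutral). Because the specifications here refer only to $A$ (resp. $A$ together with $B_1$) and carry no fairness assumption, it suffices to reproduce, in the smaller system, the exact local path of the referenced process(es) together with the same enabling/disabling pattern along the run.

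For the easy direction (large satisfies $\spec$ $\Rightarrow$ cutoff satisfies $\spec$, and its $\exists$-dual) I would show a \emph{lifting lemma}: any run of $\cutoffsys$ extends to a run of $\largesys$, $n>c$, by adjoining $n-c$ extra $B$-copies that remain in $\init_B$ forever. By neutrality these copies never change the enabledness of any original process, and since runs need not be fair they may simply stutter; the local paths of $A$ and $B_1$ are unchanged, so any violation of $\spec$ in the cutoff re-appears in the large system. For global deadlock detection this lifting must be adjusted: rather than parking extras at $\init_B$ (where they might stay enabled), I park each extra copy in a state already occupied in the deadlocked configuration, so that the resulting global configuration is again one in which every process is disabled.

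The heart of the argument, and the step where the constants $2\card{B}$ and $2\card{B}+1$ are forced, is the converse \emph{projection lemma}: from a run $x$ of $\largesys$ I must extract a run of $\cutoffsys$ preserving the local path of $A$ (resp. $A,B_1$). I would keep $A$ (and, for $h(B^{(1)})$, a dedicated copy playing the role of $B_1$) and select a bounded set of $B$-processes that reproduces, at every moment, exactly which transitions of the retained processes are enabled. A transition with conjunctive guard $g$ is blocked precisely when some other process sits outside $g$; hence to reproduce the blocking pattern it suffices to retain, for each local state of $B$ that must act as a blocker, a small number of representatives. Using conjunctive monotonicity the retained processes can follow the very same state sequences they had in $x$ (removing processes never disables their moves). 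The counting is that at most two copies per $B$-state are ever needed simultaneously — one to remain stationary as a witness/blocker while a second covers the same state during a hand-off when the first must move — giving $2\card{B}$ blockers, plus one extra retained process when the specification additionally references $B_1$, i.e. $2\card{B}+1$.

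I expect the projection lemma to be the main obstacle: one must argue that the two-copies-per-state budget genuinely suffices across the entire (possibly infinite) run, managing the hand-offs so that no blocking configuration seen by $A$ is ever lost, and simultaneously that each retained $B$-process's own local path remains a legal run once the deleted processes are removed. For global deadlock detection one combines the two lemmas: project a deadlocked large run to a candidate finite run of the cutoff and verify that, after the selection, all retained processes are simultaneously disabled — which again relies on keeping enough blockers — while the lifting direction uses the disabled-compatible parking described above. The $\exists$-quantified and $B^{(1)}$ variants require only re-indexing bookkeeping on top of the same constructions.
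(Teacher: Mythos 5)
Your proposal has the right skeleton (a monotonicity/lifting lemma plus a bounding/projection lemma, with extra processes parked in the neutral state $\init_B$, and two copies per occupied state to preserve a global deadlock), but the heart of your argument misplaces where the constant $2\card{B}$ comes from, and the construction you build around that misplacement does not go through. For the formula part, the paper's proof (following Emerson--Kahlon's Lemma 5.2) does \emph{not} preserve the enabling/disabling pattern of $A$'s (or $B_1$'s) transitions at all: it simply copies the local runs $x(A)$ and $x(B_1)$ into the small system, plus at most one further local run of an infinitely moving process to keep the run infinite. Conjunctive monotonicity guarantees every copied transition stays enabled, and the moments where the original process was blocked become stuttering, which is invisible to $\LTLmX$ --- note that $h$ ranges over local states and inputs only (the propositions $\enabled_p$, $\sched_p$ are reserved for fairness, which is absent here), so there is simply nothing in the specification that can observe disabledness. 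Hence the bounding direction for $\pexists h(A)$ and $\pexists h(B^{(1)})$ needs only two or three processes; $2\card{B}$ and $2\card{B}+1$ are forced solely by the global-deadlock construction, where disabledness \emph{must} be preserved and two copies of each state occupied in the final configuration $s_d$ suffice because nothing moves after moment $d$.

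This last point is exactly why your version fails: your blocker hand-offs require a representative to \emph{move} while its twin occupies the same state, and in conjunctive systems such a move can itself be blocked --- a transition $q \rightarrow q'$ guarded by $\forall\neg q$ or $\forall\neg q'$ is disabled by the stationary twin. The paper states explicitly (Sect.~\ref{sec:ideas-conj-fair}) that letting one process mimic another is unsound under conjunctive guards for precisely this reason. Maintaining an arbitrary run's full disabling pattern with a $2\card{B}$ budget over an infinite run is essentially the \emph{local}-deadlock preservation problem, which the paper solves only for $1$-conjunctive systems with a dedicated setup-plus-looping construction; it is not available at the generality of this theorem, so the ``main obstacle'' you flag is a genuine dead end rather than a gap to be filled. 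The same mimicry obstacle undermines your deadlock lifting (escorting an extra copy along a path into an already-occupied state): the paper instead keeps the extra process in $\init_B$ and schedules it arbitrarily after the deadlock moment, accepting that a global deadlock may degrade to a local one --- and indeed Remark~\ref{re:EK_cutoffs} notes that the ``global deadlock in the cutoff system implies global deadlock in a large system'' direction is exactly the subtlety that Emerson and Kahlon's original cutoff definition never required, and that this paper proves only for general (global \emph{or} local) deadlocks.
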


\begin{remark}
$^{(\dagger)}$ Note that Emerson and Kahlon \cite{Emerson00} proved these results for
a different definition of a cutoff (see Remark \ref{re:EK_cutoffs}).  
Their results also hold for our definition, except possibly for
global deadlocks.  For the latter case to hold with the new cutoff definition, one 
also needs to prove the direction ``global deadlock in the cutoff system implies global
deadlock in a large system'' (later called Monotonicity Lemma).  In Sect.~\ref{sec:ideas-disj-deadlock} and \ref{sec:ideas-conj-deadlock} we
prove these lemmas for the case of general deadlock (global \emph{or} local).
\end{remark}

\myparagraph{Challenge: Open Systems}
For any open system $S$ there exists a closed system $S'$ such that $
S$ and $S'$ cannot be distinguished by $\LTL$ specifications 
(cp. Manna and Pnueli~\cite{Manna92}). Thus, one approach to PMC for open 
systems is to use a translation between open and closed systems, and then use the 
existing cutoff results for closed systems.

%While such an approach works in theory, it is not feasible when cutoffs 
%depend on the size of process templates: in this case the conversion not only results in a 
%blowup of the local state space of each process, but also in the number of 
%processes that we need to consider. 
While such an approach works in theory, it might not be feasible in practice:
since cutoffs depend on the size of process templates,
and the translation blows up the process template,
it also blows up the cutoffs.
Thus, cutoffs that directly support open systems are important.

\myparagraph{Challenge: Liveness and Deadlocks under Fairness}
We are interested in cutoff results that support liveness properties.
In general, we would like to consider only runs where all processes move infinitely often, i.e., use the unconditional fairness assumption $\forall{p}. \GF \sched_p$. However, this would mean that we accept all systems that always go into a local deadlock, since then the assumption is violated. This is especially undesirable in synthesis, because the synthesizer usually tries to violate the assumptions in order to satisfy the specification. To avoid this, we require the absence of local deadlocks under the strong fairness assumption $\forall{p}. (\GF \enabled_p \impl \GF \sched_p)$. Since strong fairness and absence of local deadlocks imply unconditional fairness, we can then use the latter as an assumption for the original specification.

%In these systems, processes may be disabled depending on their input and the global state. \change{Thus, strong fairness $\forall{p}. (\GF \enabled_p \impl \GF \sched_p)$ is an insufficient assumption, since the environment can easily violate liveness properties by choosing inputs and scheduling such that some process is only enabled finitely often.}
%{Thus, strong fairness $\forall{p}. (\GF \enabled_p \impl \GF \sched_p)$ is an
%insufficient assumption, since the environment can easily violate a process's liveness
%property by choosing inputs and scheduling such that the process never moves 
%after some moment.}
%%
%Moreover, using the unconditional fairness assumption $\forall{p}. \GF \sched_p$
%\remove{for the complete specification} is also undesirable, since then we would accept all systems that always go into a local deadlock. This is especially undesirable in synthesis.
%To exclude this case, we require absence of local deadlocks under the strong fairness assumption. 

In summary, for a parameterized specification $\spec$, we consider satisfaction of
\[
\begin{array}{lllll}
\textit{``all runs are infinite''} &~~\land~~& \pforall_{strong} \spec_{\neg dead} & ~~\land~~ & \pforall_{uncond} \spec.
\end{array}
\]
This is equivalent to $\textit{``all runs are infinite''} \land \pforall_{strong} (\spec_{\neg dead} \,\land\, \spec)$, but by considering the form above we can separate the tasks of deadlock detection and of model checking $\LTLmX$-properties, and obtain modular cutoffs. 

In the following, we present cutoffs for problems of the forms 
(i) $\pforall_{uncond} \spec$,
(ii) $\pforall_{strong} \spec_{\neg dead}$ and no global deadlocks 
(and the variants with $\pexists$ path quantifier).
%, as well as for the detection of global deadlocks.
% AK: the previous version reads like we provide cutoffs for three problems -- we have only two

\section{New Cutoff Results}
\label{sec:cutoffs}

We present new cutoff results that extend Theorems~\ref{thm:disj-cutoff-pairs} and \ref{thm:conj-cutoff}, summarized in the table below. We distinguish between disjunctive and conjunctive systems, non-fair and fair executions, as well as between the satisfaction of $\LTLmX$ properties $h(A,B^{(k)})$ and the existence of deadlocks. All results hold for open systems, and for both path quantifiers $\pforall$ and $\pexists$. Cutoffs depend on the size of process template $B$ and the number $k \geq 1$ of $B$-processes a property talks about:
\begin{table}[h]
\centering
\vspace{-10pt}
\label{table:cutoffs}
\centering
\setlength{\tabcolsep}{2pt}
{%\resizebox{\linewidth}{!}{
\begin{tabular}{ r|c|c|c|c }
   & \specialcellC{$h(A,B^{(k)})$ \\ no fairness} & \specialcellC{deadlock detection \\ no fairness} & \specialcellC{$h(A,B^{(k)})$ \\ uncond. fairness} & \specialcellC{deadlock detection \\ strong fairness}  \\[9pt]
\hline
Disjunctive~ & $|B| + k + 1$ &
        $2|B| - 1$ & 
        $2|B| + k - 1$ &
        $2|B| - 1$ 
          \\[4pt]
\hline
Conjunctive~ & 
        $k+1$ &  
        $2|B|-2~(*)$ & 
        $k+1~(*)$ &  
        $2|B|-2~(*)$
\end{tabular}
}
\vspace{-10pt}
\end{table}

Results marked with a $(*)$ are for a restricted class of systems:
For conjunctive systems with fairness, we require infinite runs to be 
initializing, i.e., all non-deadlocked 
%$B$-processes 
processes 
return to 
$\init$
%$\initstate_B$ 
infinitely often.\footnote{This assumption is in the same 
flavor as the restriction that $\initstate_A$ and $\initstate_B$ appear in 
all conjunctive guards. Intuitively, the additional restriction makes sense 
since conjunctive systems model shared resources, and everybody who takes a 
resource should eventually release it.} Additionally, the cutoffs for 
deadlock detection in conjunctive systems only support 
$1$-conjunctive systems. The reason for this restriction will be
explained in Remark~\ref{rem:general-conj-tough}\iffinal.\else (and Appendix~\ref{sec:app-conj}).\fi

All cutoffs in the table are tight -- no smaller cutoff can exist for 
this class of systems and properties -- except for the case of deadlock 
detection in disjunctive systems without fairness. There, the cutoff is 
asymptotically tight, i.e., it must increase linearly with the size 
of the process template.

%\myparagraph{Proof Structure}
\subsection*{Proof Structure}
To justify the entries in the table,
we first recapitulate the proof structure of the original 
Theorems~\ref{thm:disj-cutoff-pairs} and \ref{thm:conj-cutoff}.
The proofs are based on two lemmas, Monotonicity and Bounding.
We give some basic proof ideas of the lemmas from~\cite{Emerson00} and mention extensions to the cases with fairness and deadlock detection. For cases where this extension is not easy, we will introduce additional proof techniques and explain how to use them in Sections~\ref{sec:ideas-disj} and \ref{sec:ideas-conj}. 
Note that we only consider properties of the form $h(A,B^{(1)})$ --- the
proof ideas extend to general properties $h(A,B^{(k)})$ without
difficulty. Similarly, in most cases the proof ideas extend to open systems
without major difficulties --- mainly because when we construct a simulating
run, we have the freedom to choose the input that is needed. 
% AK: in the main text of the paper no difficulties are seen
Only for the case of deadlock detection we have to handle open systems explicitly.

\smallskip
\noindent
{\bf 1) \emph{Monotonicity} lemma:} if a behavior is possible in a system with $n \in \Nat$ copies of $B$, then it is also possible in a system with one additional process:
\[
\largesys \models \pexists h(A,B^{(1)}) 
~\implies~
(A,B)^{(1, n+1)} \models \pexists h(A,B^{(1)}), 
\]
and if a deadlock is possible in $(A,B)^{(1, n)}$, then it is possible in $(A,B)^{(1, n+1)}$.

\begin{proof}[Proof ideas] The lemma is easy to prove for properties 
$\pexists h(A,B^{(1)})$ in both disjunctive and conjunctive systems, by letting the 
additional process stay in its initial state $\init_B$ forever 
(cp.~\cite{Emerson00}). This cannot disable transitions with disjunctive guards, as 
these check for \emph{existence} of a local state in another process (and we 
do not remove any processes), and it cannot disable conjunctive guards since 
they contain $\init_B$ by assumption. 
However, this construction violates fairness, since the new process 
never moves. This can be resolved in the disjunctive case by letting the 
additional process mimic all transitions of an existing process. But in 
general this does not work in conjunctive systems (due to the non-reflexive 
interpretation of guards).
For this case and for deadlock detection, the proof is not 
trivial and may only work for $n \geq c$, for some lower bound $c \in \Nat$ 
(see Sect.~\ref{sec:ideas-disj}, \ref{sec:ideas-conj}).
\end{proof}

\smallskip
\noindent
{\bf 2) \emph{Bounding} lemma:} for a number $c \in \Nat$, a behavior is
possible in a system with $c$ copies of $B$ if it is possible in a system with
$n \geq c$ copies of process $B$:
\[
(A,B)^{(1, c)} \models \pexists h(A,B^{(1)})
~\impliedby~
(A,B)^{(1, n)} \models \pexists h(A,B^{(1)})
,
\]
and a deadlock is possible in \cutoffsys if it is possible in \largesys.

\begin{proof}[Proof ideas] 
  For disjunctive systems, the main difficulty is that removing processes might falsify guards of the local transitions of $A$ or $B_1$ in a given run (see Sect.~\ref{sec:ideas-disj}).
  For conjunctive systems, removing processes from a run is easy for the case of infinite runs, since a transition that was enabled before cannot become disabled. Here, the difficulty is in preserving deadlocks, because removing processes may enable processes that were deadlocked before (Sect.~\ref{sec:ideas-conj}).
\end{proof}

% AK: this remark was moved to the Section with the definition of cutoffs
%\begin{remark}
%Note that \cite{Emerson00} do not consider deadlocked runs in their Monotonicity lemmas, but only in proofs of Bounding lemmas. Therefore, they do not obtain cutoffs for deadlock detection that satisfy the property in Remark~\ref{re:EK_cutoffs}.
%\end{remark}

\section{Proof Techniques for Disjunctive Systems}
\label{sec:ideas-disj}
\ak{to simplify the notation, can we remove `$(x)$' in $\visInf{\mB}{x}$?} \sj{I think it is much clearer if we leave it - the notation is self-explanatory currently}

\subsection{\LTLmX\ Properties without Fairness: Existing Constructions}
%\paragraph*{\LTLmX\ Properties without Fairness}
\label{sec:ideas-disj-nofair}

We revisit the main technique of the original proof 
of Theorem~\ref{thm:disj-cutoff-pairs}~\cite{Emerson00}. 
It constructs an infinite run $y$ of $\cutoffsys$ 
with $y \models h(A,B^{(1)})$, 
based on an infinite run $x$ of $\largesys$ with $n>c$ and $x \models h(A,B^{(1)})$. 
The idea is to copy local runs $x(A)$ and $x(B_1)$ into $y$, 
and construct runs of other processes in a way 
that enables all transitions along $x(A)$ and $x(B_1)$. 
The latter is achieved with the flooding construction.

\myparagraph{Flooding Construction \cite{Emerson00}}
Given a run $x = (s_1,e_1,p_1), (s_2,e_2,p_2) \ldots$ of $\largesys$, let
$\visited_\mB(x)$ be the set of all local states visited by $B$-processes in $x$,
i.e., $\visited_\mB(x) = \{ q \in Q_B \| \exists m \exists i.\ s_m(B_i) = q \}$. 

For every $q \in \visited_\mB(x)$ there is a local run of \largesys, say $x(B_i)$,
that visits $q$ first, say at moment $m_q$. Then, saying that process 
$B_{i_q}$ of \cutoffsys \emph{floods $q$} means:
$$y(B_{i_q}) = x(B_i)\slice{1}{m_q}(q)^\omega.$$ 
In words: the run $y(B_{i_q})$ is the same as $x(B_i)$ until moment $\time_q$,
and after that the process never moves.

The construction achieves the following. 
If we copy local runs of $A$ and $B_1$ from $x$ to $y$, 
and in $y$ for every $q \in \visited_\mB(x)$ introduce one process that floods $q$, 
then: 
if in $x$ at some moment $\time$ there is a process in state $q'$, 
then in $y$ at moment $\time$ there will also be a process (different from 
$A$ and $B_1$) in state $q'$. Thus, every transition of $A$
 and $B_1$, which is enabled at moment $\time$ in $x$, will also be enabled in $y$. 

%\begin{proof}[Proof idea of the bounding lemma]
\myparagraph{Proof idea of the bounding lemma}
The lemma for disjunctive systems without fairness can be proved by
copying local runs $x(A)$ and $x(B_1)$, and flooding all states in
$\visited_\mB(x)$. To ensure that at least one process moves infinitely often
in $y$, we copy one additional (infinite) local run from $x$. Finally, it may
happen that the resulting collection of local runs violates the interleaving 
semantics requirement. To resolve this, we add stuttering steps into local 
runs whenever two or more processes move at the same time, and we 
remove global stuttering steps in $y$. Since the only difference between 
$x(A,B_1)$ and $y(A,B_1)$ are stuttering steps, $y$ and $x$ satisfy the same $
\LTLmX$-properties $h(A,B^{(1)})$. 
Since $\card{\visited_\mB(x)} \leq 
\card{B}$, we need at most $1+\card{B}+1$ copies of $B$ in \cutoffsys.
%\end{proof}

\subsection{\LTLmX\ Properties with Fairness: New Constructions}
%\paragraph*{\LTLmX\ Properties with Fairness}
\label{sec:ideas-disj-fair}

The flooding construction does not preserve fairness, 
 and also cannot be used to construct deadlocked runs since it does not 
preserve  disabledness of transitions of processes $A$ or $B_1$. 
For these cases, we provide new proof constructions.

Consider the proof task of the bounding lemma for disjunctive systems with 
fairness: given an unconditionally fair run $x$ of 
\largesys with 
$x \models h(A,B^{(1)})$, we want to construct an unconditionally fair run $y$ 
of \cutoffsys with $y \models h(A,B^{(1)})$. In contrast to unfair systems, we 
need to ensure that all processes move infinitely often in $y$. 
The insight is 
that after a finite time all processes will start looping 
around some set $\visited^\inf$ of states. We construct a run $y$ that
mimics this. To this end, we introduce two constructions. \emph{Flooding with
evacuation} is similar to flooding, but instead of keeping
processes in their flooding states forever it evacuates the processes into 
$\visited^\inf$. \emph{Fair extension} lets all processes move infinitely 
often without leaving $\visited^\inf$.

\myparagraph{Flooding with Evacuation}
Given a subset $\mF \subseteq \mB$
%\footnote{In this section we will use only $\mP_1 = \{B_1\}$, 
%          but for the case of deadlocks we will need a set.  
%          $\mP_1$ is a set of processes whose local runs we will copy 
%          from $x$ to $y$ later in the proof.} 
and an infinite run $x=(s_1,e_1,p_1)\ldots$ of \largesys, 
define
\begin{align}
& \visInf{\mF}{x} = \{ q \|\! \exists \text{ infinitely many } ~~~~ m\!:  
s_m(B_i) = q 
\text{ for some } B_i \in \mF \} \label{disj:def_vinf_wrt} \\
& \visFin{\mF}{x} = \{ q \|\! \exists \text{ only finitely many } m\!:  
s_m(B_i) = q
\text{ for some } B_i\in \mF \} \label{disj:def_vfin_wrt}
\end{align}
Let $q \in \visFin{\mF}{x}$.
In run $x$ there is a moment $f_q$ when $q$
is reached for the first time by some process from $\mF$, denoted $B_{\first_q}$. 
Also, in run $x$ there is a moment $l_q$ such that:
$s_{l_q}(B_{\last_q})=q$ for some process $B_{\last_q} \in \mF$, 
and $s_t(B_i)\neq q$ for all $B_i \in \mF$,
$t > l_q$ 
--- i.e., when some process from $\mF$ is in state $q$ for the last time in $x$. 
Then, saying that process $B_{i_q}$ of \cutoffsys 
\emph{floods $q \in \visFin{\mF}{x}$ and then evacuates into $\visInf{\mF}{x}$} 
means: 
$$
y(B_{i_q}) = x(B_{\first_q})\slice{1}{f_q} \ \cdot\ (q)^{(l_q - f_q + 1)} \cdot \ 
x(B_{\last_q})\slice{l_q}{m} \ \cdot \ (q')^\omega,
$$
where $q'$ is the state in $\visInf{\mF}{x}$ that $x(B_{\last_q})$ reaches first, 
at some moment $\time \geq l_q$.
In words, process $B_{i_q}$ mimics process $B_{\first_q}$ until it reaches $q$, 
then does nothing until process $B_{\last_q}$ starts leaving $q$, 
then it mimics $B_{\last_q}$ until it reaches $\visInf{\mF}{x}$.

The construction ensures: 
if we copy local runs of all processes not in $\mF$ from $x$ to $y$, 
then all transitions of $y$ are enabled. 
This is because: 
for any process $p$ of $\cutoffsys$ that takes a transition in $y$ at any moment, 
the set of states visible to process $p$ is a superset of the set of states 
visible to the original process in \largesys whose transitions process $p$ copies.

\myparagraph{Fair Extension} 
\ak{explain intuition about those three sets}
\ak{adapt to dead}
Here, we consider a path $x$ that is the postfix of an unconditionally fair run $x'$ of $\largesys$, 
starting from the moment where no local states from $\visFin{\mB}{x'}$ are visited anymore. 
We construct a corresponding unconditionally-fair path $y$ of $\cutoffsys$, 
where no local states from $\visFin{\mB}{x'}$ are visited.

Formally, let $n \geq 2|B|$, and $x$ an unconditionally-fair path of $\largesys$ such that
$\visFin{\mB}{x}=\emptyset$.
Let $c \geq 2|B|$, and $s_1'$ a state of \cutoffsys
with
\li
\- $s_1'(A_1)=s_1(A_1)$, $s_1'(B_1)=s_1(B_1)$

\- for every $q \in \visInf{B_2..B_n}{x} \smi \visInf{B_1}{x}$,
   there are two processes $B_{i_q}, B_{i_q'}$ of \cutoffsys
   that start in $q$, i.e., $s_1'(B_{i_q})=s_1'(B_{i_q'})=q$

\- for every $q \in \visInf{B_2..B_n}{x} \cap \visInf{B_1}{x}$,
   there is one process $B_{i_q}$ of \cutoffsys
   that starts in $q$

\- for some $\qstar \in \visInf{B_2..B_n}{x} \cap \visInf{B_1}{x}$,
   there is one additional process of \cutoffsys, 
   different from any in the above, 
   called $B_{i_\qstar'}$,
   that starts in $\qstar$.

\- any other process $B_i$ of \cutoffsys 
   starts in some state of $\visInf{B_2..B_n}{x}$.
\il
Note that if $\visInf{B_2..B_n}{x}\cap \visInf{B_1}{x} = \emptyset$, 
then the third and fourth pre-requisites are trivially satisfied.

The fair extension extends state $s_1'$ of \cutoffsys 
to an unconditionally-fair path $y=(s'_1,e'_1,p'_1)\ldots$ 
with $y(A_1,B_1) = x(A_1,B_1)$ as follows:
\li
\-[(a)] $y(A_1)=x(A_1)$, $y(B_1)=x(B_1)$

\-[(b)] for every $q \in \visInf{B_2..B_n}{x} \smi \visInf{B_1}{x}$: 
       in run $x$ there is $B_i \in \{B_2..B_n\}$ 
       that starts in $q$ and visits it infinitely often. 
       Let $B_{i_q}$ and $B_{i'_q}$ of \cutoffsys mimic $B_i$ in turns: 
       first $B_{i_q}$ mimics $B_i$ until it reaches $q$, 
       then $B_{i'_q}$ mimics $B_i$ until it reaches $q$, and so on.

\-[(c)] arrange states of $\visInf{B_2..B_n}{x}\cap \visInf{B_1}{x}$ 
       in some order $(\qstar, q_1, \ldots, q_l)$.  
       The processes $B_{i_\qstar'}, B_{i_\qstar}, B_{i_{q_1}}, \ldots, B_{i_{q_l}}$ 
       behave as follows.
       Start with $B_{i_\qstar'}$: 
       when $B_1$ enters $\qstar$ in $y$, it carries%
       \footnote{``Process $B_1$ starting at moment $m$ carries process $B_i$ 
                 from $q$ to $q'$'' means: process $B_i$ mimics 
                 the transitions of $B_1$ starting at moment $m$ at $q$ 
                 until $B_1$ first reaches $q'$.}
       $B_{i_\qstar'}$             from $\qstar$ to $q_1$, 
       then carries $B_{i_{q_1}}$ from $q_1$ to $q_2$, \ldots, 
       then carries $B_{i_{q_l}}$ from $q_l$ to $\qstar$, 
       then carries $B_{i_\qstar}$ from $\qstar$ to $q_1$, 
       then carries $B_{i_\qstar'}$ from $q_1$ to $q_2$, 
       then carries $B_{i_{q_1}}$ from $q_2$ to $q_3$,
       and so on.

%\-[c2.] otherwise, $\visited_{\inf\cap B_1}{x} = \{ q^\star \}$. Then $B_1$ only ever makes transitions $q^\star \to q^\star$, thus let process $B_{i_{q^\star}}$ mimic this.

\-[(d)] any other $B_i$ of \cutoffsys,
       starting in $q \in \visInf{B_2..B_n}{x}$,
       mimics $B_{i_q}$.
\il
Note that parts (b) and (c) of the constrution ensure that there is always at
       least one process in every state from $\visInf{B_2..B_n}{x}$. This
       ensures that the guards of all transitions of the construction are satisfied.
Excluding processes in (d), the fair extension uses up to $2|B|$ copies of $B$.%
\footnote{A careful reader may notice that if
          $|\visInf{B_1}{x}|=1$ and $|\visInf{B_2..B_n}{x}|=|B|$,
          then the construction uses $2|B|+1$ copies of $B$.
          But one can slightly modify the construction for this special case,
          and remove process $B_{i_\qstar'}$ from the pre-requisites.}

\myparagraph{Proof idea of the bounding lemma}
\sj{for weak or strong fairness, the same construction can be used; evacuation is not necessary, but also doesn't increase the cutoff if we use it; difficulty: show that cutoff is still tight
}
Let $c=2\card{B}$. 
Given an unconditionally-fair run $x$ of $\largesys$ 
we construct an unconditionally-fair run $y$ of the cutoff system $\cutoffsys$ 
such that $y(A,B_1)$ is stuttering equivalent to $x(A,B_1)$.

Note that in $x$ there is a moment $m$ such that all local states that are visited after $m$ are in $\visInf{\mB}{x}$.

The construction has two phases. In the first phase, we apply flooding for states in $\visInf{\mB}{x}$, and flooding with evacuation for states in $\visFin{\mB}{x}$:
\li
\-[(a)] $y(A)=x(A)$, $y(B_1)=x(B_1)$

\-[(b)] for every $q \in \visInf{B_2..B_n}{x} \smi \visInf{B_1}{x}$, 
       devote two processes of $\cutoffsys$ that flood $q$

\-[(c)] for some $\qstar \in \visInf{B_2..B_n}{x} \cap \visInf{B_1}{x}$,
       devote one process of \cutoffsys that floods $\qstar$

\-[(d)] for every $q \in \visFin{B_2..B_n}{x}$, 
       devote one process of $\cutoffsys$ that 
       floods $q$ and evacuates into $\visInf{B_2..B_n}{x}$

\-[(e)] let other processes (if any) mimic process $B_1$
\il
The phase ensures that at moment $m$ in $y$, 
there are no processes in $\visFin{\mB}{x}$, 
and all the pre-requisites of the fair extension are satisfied.

The second phase applies the fair extension, 
and then establishes the interleaving semantics 
as in the bounding lemma in the non-fair case.
The overall construction uses up to $2|B|$ copies of $B$.
% Indeed: 
% fin&smi=0, fin&cap=0,
% smi = InfB2..Bn - cap
% InfB2..Bn <= B - fin
% Then:
% 1+2smi+1cap+1+fin <= 1+2(InfB2..Bn-cap)+cap+1+fin = 
%                      2+2InfB2..Bn-cap-fin <=
%                      2+2B-cap-fin
% now split case:
% everywhere fin>0 (otherwise 2|B| follows from the analysis of the fair extension)
% note: if cap=0, then we actually have (recall fair pre)
%       1+2smi+fin =< 1+2(B-InfB1-fin)+fin = 1+2B-2InfB1-2fin =< 2B-3
% thus cap>0,fin>0
% then 2+2B-cap-fin =< 2B

%\end{proof}

\subsection{Detection of Local and Global Deadlocks: New Constructions}
\label{sec:ideas-disj-deadlock}
%\paragraph*{Detection of Local and Global Deadlocks}

\myparagraphraw{Monotonicity Lemmas.}
The lemma for deadlock detection, for fair and unfair cases,
is proven for $n \geq |B|+1$.
In the case of local deadlocks, 
process $B_{n+1}$ mimics a process that moves infinitely often in $x$.
In the case of global deadlocks, 
by pigeon hole principle, 
in the global deadlock state there is a state $q$ with at least two processes in it---let process $B_{n+1}$ mimic a process that deadlocks in $q$.

\myparagraphraw{Bounding Lemmas.}
For the case of global deadlocks, fairness does not affect the proof of the bounding lemma. 
The insight is to divide deadlocked local states into two disjoint sets, 
$\dead_1$ and $\dead_2$, as follows.
Given a globally deadlocked run $x$ of \largesys, 
for every $q \in \dead_1$, 
there is a process of \largesys deadlocked in $q$ with input $i$,
that has an outgoing transition guarded ``$\exists q$''
-- hence, adding one more process into $q$ would unlock the process.
%\sj{do we always consider inputs correctly? what if $q \in \dead_1$ for some $e$, but $q \in \dead_2$ for $e'$?}\ak{thanks, modified, now it is impossible}
In contrast, $q \in \dead_2$ if any process deadlocked in $q$
stays deadlocked after adding more processes into $q$.
Let us denote the set of $B$-processes deadlocked in $\dead_1$ by $\mD_1$.
Finally, abuse the definition in Eq.~\ref{disj:def_vfin_wrt}
and denote by $\visFin{\mB\smi\mD_1}{x}$ the set of states
that are visited by $B$-processes not in $\mD_1$ before reaching a deadlocked state.

Given a globally deadlocked run $x$ of \largesys with $n\geq 2|B|-1$, 
we construct a globally deadlocked run $y$ of \cutoffsys with $c = 2|B|-1$ as follows:
\li
\- copy from $x$ into $y$ the local runs of processes in $\mD_1 \cup \{A\}$
\- flood every state of $\dead_2$
\- for every $q \in \visFin{\mB\smi\mD_1}{x}$, flood $q$ and evacuate into $\dead_2$.
\il
The construction ensures: 
(1) for any moment and any process in $y$,
    the set of local states that are visible to the process includes all the states that were visible 
    to the corresponding process in \largesys whose transitions we copy;
(2) in $y$, there is a moment when all processes deadlock in $\dead_1 \cup \dead_2$.

For the case of local deadlocks, 
the construction is similar but slightly more involved, 
and needs to distinguish between unfair and fair cases.
%The construction for local deadlocks is similar but slightly more involved, and needs to distinguish between unfair and fair cases.
In the unfair case, we also copy the behaviour of an infinitely moving process. 
In the strong-fair case,
we continue the runs of non-deadlocked processes with the fair extension. 
\iffinal \else See details in Appendix~\ref{sec:app-disj}.\fi

\ak{put here the tightness picture for deadlocks under fairness?}

%--- global deadlocks: fair and unfair:
%C: processes that dead1
%F: processes that dead2
%We copy local runs of dead1.
%We flood deadlocked states of dead2, and flood and evacuate non-deadlocked states of dead2.
%
%--- local deadlocks: unfair:
%I: processes that move infinitely often
%D: processes that dead
%copy local run of one process from I, 
%copy one local run of process from C,
%flood and evacuate finitely visited states by processes except copied
%
%--- local deadlocks: fair:
%I: processes that move infinitely often
%C: processes that dead1
%F: processes that dead2
%Copy local runs of C, 
%flood and evacuate finitely visited states of F\\C, 
%flood dead or infinitely often visited states of F\\C.

\section{Proof Techniques for Conjunctive Systems}
\label{sec:ideas-conj}

\subsection{\LTLmX\ Properties without Fairness: Existing Constructions}
%\paragraph*{\LTLmX\ Properties without Fairness}
\label{sec:ideas-conj-nofair}
Recall that the Monotonicity lemma is proven by keeping the additional process
in the initial state.
To prove the bounding lemma, 
Emerson and Kahlon \cite{Emerson00} suggest to simply copy the local runs $x(A)$ and $x(B_1)$ into $y$. 
In addition, we may need one more process that moves infinitely often to ensure that an infinite run of \largesys will result in an infinite run of \cutoffsys. All transitions of copied processes will be enabled because removing processes from a conjunctive system cannot disable a transition that was enabled before.

\subsection{\LTLmX\ Properties with Fairness: New Constructions}
%\paragraph*{\LTLmX\ Properties with Fairness}
\label{sec:ideas-conj-fair}

The proof of the 
Bounding lemma is the same as in the non-fair case, noting that if the 
original run is unconditional-fair, then so will be the resulting run.

Proving the Monotonicity lemma is more difficult, since the fair extension 
construction from disjunctive 
systems does not work for conjunctive systems
-- if an additional process mimics the transitions of an existing process
   then it disables transitions of the form 
   $\transition{q}{q'}{\textit{``\,}\forall\neg q\textit{\!''}}$ or
   $\transition{q}{q'}{\textit{``\,}\forall\neg q'\textit{\!''}}$.
%an additional process 
%in the same state may disable an outgoing transition that was enabled before.
Hence, we add the restriction of initializing runs, 
which allows us to construct a fair run as follows.
The additional process $B_{n+1}$ ``shares'' a local run $x(B_i)$ 
with an existing process $B_i$ of $(A,B)^{(1,n+1)}$: 
one process stutters in $\init_B$ while the other makes transitions from $x(B_i)$, 
and whenever $x(B_i)$ enters $\init_B$
(this happens infinitely often),
the roles are reversed. 
Since this changes the behavior of $B_i$, 
$B_i$ should not be mentioned in the formula, 
i.e., we need $n\geq 2$ for a formula $h(A,B^{(1)})$.

\subsection{Detection of Local and Global Deadlocks: New Constructions}
\label{sec:ideas-conj-deadlock}
\myparagraphraw{Monotonicity lemmas} for both fair and unfair cases 
are proven by keeping process $B_{n+1}$ in the initial state, 
and copying the runs of deadlocked processes.
If the run of \largesys is globally deadlocked, then process $B_{n+1}$ may keep moving in the constructed run, i.e., it may only be locally deadlocked. 
In case of a local deadlock in \largesys, distinguish two cases: 
there is an infinitely moving $B$-process, or all $B$-processes are deadlocked 
(and thus $A$ moves infinitely often).
In the latter case, use the same construction as in the global deadlock case
(the correctness argument uses the fact that systems are 1-conjunctive, 
 runs are initializing, and there is only one process of type $A$).
In the former case, copy the original run, and let $B_{n+1}$ share
a local run with an infinitely moving $B$-process.

\myparagraphraw{Bounding lemma (no fairness).}
In the case of global deadlock detection, 
Emerson and Kahlon~\cite{Emerson00} suggest to copy a subset of the original local runs.
For every local state $q$ that is present in the final state of the run, 
we need at most two local runs that end in this state. 
In the case of local deadlocks, 
our construction uses the fact that systems are 1-conjunctive.
In 1-conjunctive systems, if a process is deadlocked, 
then there is a set of states $DeadGuards$ that all need to be populated by other processes
in order to disable all transitions of the deadlocked process. 
Thus, 
the construction copies: 
(i) the local run of a deadlocked process, 
(ii) for each $q \in DeadGuards$, the local run of a process 
     that is in $q$ at the moment of the deadlock, and
(iii) the local run of an infinitely moving process.
\iffinal \else See the details in Appendix~\ref{sec:app-conj}.\fi

\myparagraphraw{Bounding lemma (strong fairness).}
We use a construction that is similar to that of properties under fairness for disjunctive systems (Sect.~\ref{sec:ideas-disj-fair}):
  in the setup phase, 
  we populate some ``safe'' set of states with processes,
  and then we extend the runs of non-deadlocked processes 
  to satisfy strong fairness, 
  while ensuring that deadlocked processes never get enabled.

Let $c=2|Q_B\smi \{ \init_B \}|$. 
Let $x= (s_1,e_1,p_1)\ldots$ be a locally deadlocked strong-fair intitializing run 
of $\largesys$ with $n>c$. 
We construct a locally deadlocked strong-fair initializing run $y$ of $\cutoffsys$.

Let $\mD \subseteq \mB$ be the set of deadlocked $B$-processes in $x$. 
Let $d$ be the moment in $x$ starting from which every process in $\mD$ is deadlocked.
Let $\dead(x)$ be the set of states in which processes $\mD$ of \largesys
are deadlocked.
Let $\dead_2(x) \subseteq \dead(x)$ be the set of deadlocked states such that: 
for every $q \in \dead_2(x)$, 
there is a process $B_i \in \mD$ with $s_d(B_i) = q$ 
and that for input $e_{\geq d}(B_i)$ has a transition guarded with ``$\forall \neg q$''.
Thus, a process in $q$ is deadlocked with $e_d(B_i)$
only if there is another process in $q$ in every moment $\geq d$.
Let $\dead_1(x) = \dead(x)\smi\dead_2(x)$.
%I.e., 
%for any $q \in \dead_1(x)$, there is a process $P$ of \largesys 
%which is deadlocked in $s_d(P) = q$ with input $e_d(P)$,
%and no transitions from $q$ with input $e_d(P)$ are guarded with ``$\forall \neg q$''.
Define $DeadGuards$ to be the set
$$
\{\ q \| \exists B_i \in \mD
         \textit{ with a transition guarded ``\,}
         {\forall \neg q} 
         \textit{\!'' in } (s_d(B_i),e_d(B_i))\ \}.
$$ 
% AK: 'in (s_d(B_i), e_d(B_i))' is confusing: does it mean that q is in (s_d(B_i), e_d(B_i))?
%$$
%DeadGuards = \left\{ 
%q ~\left|~ \begin{array}{l} \exists B_i \in \mD
%                      \textit{ with a transition guarded by ``\,}
%                      {\forall \neg q}
%                      \textit{\!''}\\
%											\textit{in } (s_d(B_i),e_d(B_i))
%											\end{array}\right. \right\}
%$$
Figure~\ref{fig:ideas:conj-deadlocks-venn} illustrates properties of sets 
$DeadGuards$, $\dead_1$, $\dead_2$, $\visInf{\mB\smi\mD}{x}$.
\begin{figure}[t]
\vspace{-0.4cm}
\centering
\includegraphics[width=0.7\textwidth]{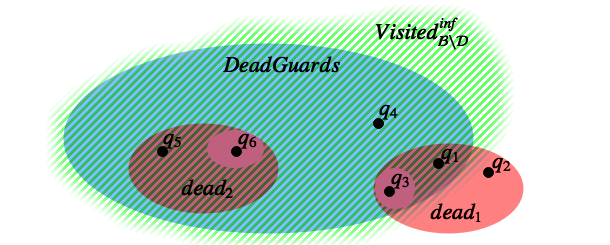}
\caption[fig:ideas:conj-deadlocks-venn]{%
Bounding lemma (strong fairness):
Venn diagram for 
$\dead_1$, $\dead_2$, $DeadGuards$, $\visInf{\mB\smi\mD}{x}$.
States $q_1,...,q_6$ are to illustrate that the corresponding sets may be non-empty.
E.g., 
in $x$, 
a process may be deadlocked in $q_1 \in (DeadGuards \cap \dead_1 \cap \visInf{\mB\smi\mD}{x})$, 
%and another process in $q_2 \in \dead_1 \smi (DeadGuards \cup \visInf{\mB\smi\mD}{x})$.%
and another process in $q_3 \in \dead_1 \cap DeadGuards \smi \visInf{\mB\smi\mD}{x}$.%
%E.g., 
%in $x$, 
%there may be at least one process $P \in \mD$ deadlocked in 
%$q_2 \in \dead_1 \smi (DeadGuards \cup \visInf{\mB\smi\mD}{x})$ with input $e_d(P)$,
%no non-deadlocked processes visit $q_2$ infinitely often, 
%and no process $P' \in \mD$ has an outgoing deadlocked transition from 
%$s_d(P')$ with $e_d(P')$ guarded ``$\forall \neg q_2$''.
%Or: 
%in $x$, 
%there may be at least two processes from $\mD$ deadlocked in 
%$q_6 \in (\dead_2 \cap DeadGuards) \smi \visInf{\mB\smi\mD}{x}$, 
%no non-deadlocked processes visit $q_6$ infinitely often,
%and $q_6$ for an input $i$ has an outgoing transition guarded ``$\forall \neg q_6$''.%
}
\label{fig:ideas:conj-deadlocks-venn}
\vspace{-0.4cm}
\end{figure}

In the {\bf setup phase}, we copy from $x$ into $y$: 
\li
\- the local run of $A$; 
\- for every $q \in \dead_1$, the local run of one process deadlocked in $q$;
\- for every $q\in \dead_2$, the local runs of two%
   \footnote{Strictly speaking, 
             in $x$ we might not have two deadlocked processes 
             in a state in $dead_2$
             -- one process may be deadlocked, 
                others enter and exit the state infinitely often.
             In such case, there is always a non-deadlocked process in the state.
             Then, copy the local run of such infinitely moving process 
             until it enters the deadlocked state, and then deadlock it 
             by providing the same input as the deadlocked process receives.} 
   processes deadlocked in $q$;
\- for every $q \in DeadGuards \smi \dead$, the local run of a process that reaches $q$ after moment $d$.
\- Finally, we keep one $B$-process in $\init_B$ until moment $d$.  
\il
The setup phase ensures: 
in every state $q \in \dead$,
there is at least one process deadlocked in $q$ at moment $d$ in $y$. 
Now we need to ensure that the non-deadlocked processes in $DeadGuards\smi \dead$ and $\init_B$
move infinitely often, which is done using the looping extension described bellow.

Order arbitrarily 
$DeadGuards \smi \dead = (q_1,\ldots,q_k) \subseteq \visInf{\mB\smi\mD}{x}$.
Let $\mP \subseteq \{B_1,...,B_c\}$ be the non-deadlocked processes of \cutoffsys that we moved 
into $(q_1,\ldots,q_k) \cupdot \{\init_B\}$ in the setup phase.
Note that $|\mP| = |(q_1,...,q_k)| + 1$.

The {\bf looping phase} is: set $i=1$, and repeat infinitely the following.
\li
  \- let $B_\init \in \mP$ be the process of \cutoffsys that is currently in $\init_B$, 
     and $B_{q_i} \in \mP$ be the process of \cutoffsys that is currently in $q_i$
     
  \- let $\tilde{B}_{q_i} \in \visInf{\mB\smi\mD}{x}$ be a process of \largesys 
     that visits $q_i$ and $\init_B$ infinitely often.
     Let $B_\init$ of \cutoffsys copy transitions of $\tilde{B}_{q_i}$
     on some path $\init_B \to \ldots \to q_i$,
     then let $B_{q_i}$ copy transitions of $\tilde{B}_{q_i}$ on some path 
     $q_i \to \ldots \to \init_B$. 
     For copying we consider only the paths of $\tilde{B}_{q_i}$ that happen after moment $d$.

  \- $i=i \oplus 1$
\il
\iffinal \else For more details, including tightness observations, see Appendix~\ref{sec:app-conj}.\fi
\begin{remark}\label{rem:general-conj-tough}
In 1-conjunctive systems, the set $DeadGuards$ is ``static'', i.e.,
  there is always at least one process in \emph{each state} of $DeadGuards$ 
  starting from the moment of the deadlock.
In contrast, in general conjunctive systems where guards can overlap, 
  there is no such set. However, there is a similar set of sets of states, such that \emph{one state from each set} always needs to be populated to ensure the deadlock. 
% AK: not sure about the conjecture for the general case -- for 2-conj-disjoint fine,
%     but it appears that this is too technical for this paper.
%  We conjecture that our proof techniques can be used to obtain cutoffs for such systems, 
%  albeit with a significantly more complicated construction.
\end{remark}
  
%See the details in Appendix~\ref{sec:app-conj}.
%
%Lacking a proof for the general case, we provide cutoffs for $1$-conjunctive systems. 
%In 1-conjunctive systems, if some processes are deadlocked, then there is the set $Block$ of states
%  which needs to be populated by processes, in order to disable all transitions of all the 
%  deadlocked process. 
%Note that in 1-conjunctive systems this set is `static' meaning
%  there is always at least one process in each state of $Block$ starting from the moment of the deadlock.
%In contrast, in general conjunctive systems where guards can overlap, 
%  there is no such set (but there is a similar \emph{set of sets} of states).
%For 1-conjunctive systems, the proof of the Bounding lemma has a similar structure 
%  to that of properties under fairness for disjunctive systems:
%  first, we ensure that all non-deadlocked processes move into a `safe' set of states 
%  that contains $Block$, 
%  and then we use an analogue of fair extension to extend their runs to satisfy 
%  the strong fairness, also ensuring that there is always at least one process in every state 
%  of $Block$. See the details in Appendix~\ref{sec:app-conj}.
%

% \input{cutoffs-disj}
%\input{appendix-disj}
%\input{appendix-conj}
% \input{cutoffs-conj}

\iffinal
\else
 \section{Experiments}
\label{sec:experiments}

We used our new cutoff results in the context of parameterized synthesis to 
automatically construct process templates with safety and lifeness guarantees 
for guarded systems with an arbitrary number of components.
Our prototype is an extension of the parameterized synthesis tool 
{\sc Party}~\cite{Khalimov13a}.
It synthesizes process templates based on the semi-decision procedure 
described in Section~\ref{sec:paramsynt}. 
Bounded template synthesis is 
implemented as an extension of the bounded synthesis 
approach~\cite{FinkbeinerS13}, with
LTL3BA~\cite{BabiakKRS12} for translation of specifications to automata, and
SMT solver Z3~\cite{DeMour08} as backend. The method supports additional state labels that can be used as output 
signals in 
specifications. Cutoffs are detected automatically and applied modularly for 
different properties and deadlock detection (cp. \cite{KhalimovJB13b}).

%
%In the cutoff detection step, we apply the cutoff results from 
%Section~\ref{sec:cutoffs} modularly, i.e., we calculate cutoffs and convert 
%to universal co-Büchi tree automaton (UCT)\sj{remind/use long version} for every conjunct of the specification separately (an optimization 
%mentioned in~\cite{KhalimovJB13b}). Finally, we produce the SMT encoding, 
%conjoining the SMT constraints for all parts. 
%
As a 
proof of concept, we synthesized implementations for two small 
examples (on a single core of a 3.5GHz Intel i7 CPU with 4GB RAM):
\begin{itemize}
\item a closed disjunctive system $\largesys$ where $A$ controls an output 
signal $w$, and every $B_i$ a signal $g_i$, with fairness assumption and specification
$\GF(\neg w) \wedge \GF w \wedge 
\bigwedge_{i} \left[\GF(w \wedge g_i) \wedge  \GF g_i \wedge \GF(\neg g_i) 
\right]$,
i.e., all processes must toggle their signals 
infinitely often, and every process $B^i$ must ``meet'' infinitely often with 
$A$ when both signals are enabled. The implementation below has been synthesized with a cutoff of $2\card{B}+1-1=4$ for the property and $2\card{B}-1=3$ for the deadlock detection within one minute.
%\begin{figure}[ht]
%\vspace{-10pt} 
%\centering
%\scalebox{0.75}{\input{img/disjunctive_lts}}
%%\caption{Example: Disjunctive Guarded Implementation}
%\label{fig:disjunctiveLTS}
%\vspace{-10pt} 
%\end{figure}

\item an open conjunctive system $(B)^{(n)}$ (i.e., $A$ can be arbitrary and $B$ does not react to it), where each process $B_i$ gets requests of high and low priority as input ($rh_i, rl_i$), and controls ouputs that represent grants to these requests ($gh_i, gl_i$). The specification requires i) `every request $rh$ should eventually be granted ($gh_i$)', ii) `every request $rl_i$ should eventually be granted ($gl_i$) unless there is a simultaneous request $rh_i$', iii) `there should be no spurious grants', iv) `mutual exclusion of grants (locally and globally)'. The implementation above has been synthesized with a cutoff of $3$ for properties, and $4$ for deadlock detection within about $45$ minutes. Note that we do not restrict the implementation to $1$-conjunctive systems, and therefore the cutoff does not guarantee absence of local deadlocks in systems of arbitrary size.
\end{itemize}

\begin{figure}
\vspace{-20pt}
\centering
\subfloat[Disjunctive Implementation]{
\centering
%\makebox[0.55\textwidth][c]{
\scalebox{0.75}{\begin{tikzpicture}[->,>=stealth',shorten >=1pt,auto,node distance=1.6cm,
                    semithick]
  \tikzstyle{every state}=[align=center,anchor=center]
  \tikzstyle{every edge} = [align=center,draw=black]
  \tikzstyle{boxLabel} = [yshift=-0.4cm]
  \tikzstyle{box} = [draw=black, inner sep=0.75cm]
	
  \node[state] (t00) [label={left:$\neg w$}]                    {$0_A$};
  \node[state] (t01) [below of=t00,label={left:$w$}] {$1_A$};
  \coordinate (t00') at ($(t00.west) + (-1.0cm,0.5cm)$);
  \coordinate (t01') at ($(t01.east) + (+0.75cm,-0.25cm)$);
  \node (U1) [draw=black, fit=(t00') (t01'), inner sep=0.75cm] {} ;
  \node [boxLabel] at (U1.north) {Template $A$};
  
  \path (t00) edge [bend right,left] node {$\exists \left\{0_B, 1_B\right\}$} (t01)
            (t01) edge [bend right, right] node {$\exists \left\{1_B\right\}$} (t00);

  \node[state] (t10) [label={left:$\neg g$},right of=t00,node distance=4.5cm] {$0_B$};
  \node[state] (t11) [label={left:$g$}, below of=t10] {$1_B$};
  \coordinate (t10') at ($(t10.west) + (-0.75cm,0.5cm)$);
  \coordinate (t11') at ($(t11.east) + (+0.75cm,-0.25cm)$);
  \node (U2) [box,fit=(t10') (t11')] {} ;
  \node [boxLabel] at (U2.north) {Template $B$};
  
  \path (t10) edge [bend right,left] node {$\exists \left\{1_A\right\}$} (t11)
            (t11) edge [bend right, right] node {$\exists \left\{1_A\right\}$} (t10);
  %\draw[ltsBox,label={above:x},anchor=center] ($(t10.north west)+(-1.7,0.6)$)  rectangle ($(t11.south east)+(1.7,-0.6)$); 
\end{tikzpicture}}
\label{fig:disjunctiveLTS}
}%}
\hspace{1cm}
\subfloat[Conjunctive Implementation]{
\centering
%\makebox[0.45\textwidth][c]{
\scalebox{0.75}{\begin{tikzpicture}[->,>=stealth',shorten >=1pt,auto,node distance=1.5cm,
                    semithick]
  \tikzstyle{every state}=[align=center, anchor=center]
  \tikzstyle{every edge} = [align=center,draw=black,]

  \node[state] (A) [label={below:$$}] {$0_A$};
  \node[] (BC) [below of=A] {$$};
  \node[state] (B) [left of=BC, label={below:$gh$}] {$1_A$};
  \node[state] (C) [right of=BC, label={below:$gl$},] {$2_A$};

  \path 
  (A) edge [loop above] node {$\neg rh \wedge \neg rl : \forall \{0_A\}$} (A)
  (A) edge [left, bend right=10] node {$rh: \forall \{0_A\}$} (B)
  (B) edge [right, pos=0.3, bend right=10] node {$*: \forall \{0_A\}$} (A)
  (A) edge [right, bend left=10] node {$\neg rh \wedge rl: \forall \{0_A\}$} (C)
  (C) edge [right, bend left=10] node {$$} (A);
\end{tikzpicture}}
\label{fig:conjunctiveLTS3}
}%}
% \caption{Templates $(A,B)$ used to prove the tightness of the cutoffs for properties $\pexists h(A^1,B^1)$ (Observation~\ref{obs:disj:tight_prop})\ak{check me}.}
\label{fig:experiments}
\end{figure}
\vspace{-20pt}
\sj{figure misplaced, no caption}

\fi
\section{Conclusion}
\label{sec:concl}

We have extended the cutoff results for guarded protocols of Emerson and 
Kahlon~\cite{Emerson00} to support local deadlock detection, fairness 
assumptions, and open systems. In particular, our results imply decidability of the parameterized model checking problem for this class of systems and specifications, which to the best of our knowledge was unknown before. 
%Our results allow us to model check
%guarded protocols that satisfy not 
%only safety, but also liveness conditions, for an arbitrary number of 
%components. 
Furthermore, the cutoff results can easily be integrated into 
the parameterized synthesis approach~\cite{Jacobs14,Khalimov13a}.
%~\cite{Jacobs14,Khalimov13,Khalimov13a}.

%An approach for using cutoff results in 
%synthesis has been introduced by 
%Jacobs and Bloem~\cite{Jacobs14}. It has been described in detail for the 
%case of 
%token-passing systems. Follow-up papers have shown how to make the approach 
%more efficient~\cite{KhalimovJB13b}, and how to use it for the synthesis of a 
%large 
%case study, the AMBA bus arbiter~\cite{BloemJK14}.

Since conjunctive guards can model atomic sections and read-write locks, 
and disjunctive guards can model pairwise rendezvous 
(for some classes of specifications, cp.~\cite{EmersonK03}), 
our results apply to a wide spectrum of systems models.
But the expressivity of the model %and flexibility of the results 
comes at a high cost: cutoffs are linear in the size of a process, and 
are shown to be tight (with respect to this parameter).
For conjunctive systems, our new results are restricted to systems with
1-conjunctive guards, effectively only allowing to model a single shared
resource. 
We conjecture that our proof methods can be extended to systems with
more general conjunctive guards, at the price of even bigger cutoffs.
We leave this extension and the question of finding cutoffs that are independent of the size of processes for future research.
%
%
%We are working on a prototype implementation (\url{https://bitbucket.org/parsy/guarded_synthesis/}), which however is currently limited to very small systems.\sj{we should re-formulate or remove the comment on implementation} 
%In future work, we will try to lift the restrictions of our results for conjunctive systems, and investigate cutoffs that are independent of the size of the components' state spaces.
%\ak{remove this promise?}
\ak{note that EK have better complexities for 'for all paths' properties. 
As a future work, one can look if our cutoffs can be improved.}
%This is due 
%to the growth of the cutoff (linearly) and the set of possible transition 
%guards (doubly exponential) in the size of process templates. 
%In the future, 
%we will look into cutoffs that are independent of the size of process 
%templates.

\begingroup
\footnotesize
\smallskip\noindent\textbf{Acknowledgment.}
We thank Roderick Bloem, Markus Rabe and Leander Tentrup for comments on drafts of this paper.
This work was supported by the Austrian Science Fund (FWF) through the 
RiSE project (S11406-N23, S11407-N23) and grant nr.~P23499-N23,
as well as by the German Research Foundation (DFG) through SFB/TR 14 AVACS and
project ASDPS (JA 2357/2-1).
%
%\endgroup

%\bibliographystyle{plain}
\bibliographystyle{splncs03}
\bibliography{paper,local,references,crossrefs}
%\iffinal
%\else
\appendix
\newpage
\ak{change $B_q$ to $B_{i_q}$}
\section{Additional Definitions and Notation}
\ak{
\li
  \- check again that all proofs work when account for inputs -- some state sets that i thought are disjoit, may be not disjoint
\il
}

For a global state $s$ of system $\largesys$ and a local state $q$ (of template $A$ or $B$), 
we write $q \in s$ as shorthand for $\exists{p \in \{A,B_1,..,B_n\}} s(p) = q$.\ak{not used?} 

For a sequence $x=x_1,x_2,\ldots$ denote the subsequence between the $i$th and $j$th element 
of the sequence as $x\slice{i}{j}=x_i,\ldots,x_j$.
%and the subsequence of all elements starting from $i$ as $x\slice{i}{}=x_i, \ldots$\ak{the second is not used?}

By $\transition{q_i}{q_j}{e:g}$ denote a process transition from $q_i$ to $q_j$ 
for input $e$ and guarded by guard $g$. 
We skip the input $e$ and guard $g$ if they are not important or can be inferred from the context.

Given system state $s$, let $Set(s)$ be the set $ \{ q \| \exists p: s(p)=q \}$.

\section{Cutoffs for Disjunctive Systems}
\label{sec:app-disj}

\subsection{Disjunctive Systems without Fairness}

\begin{restatable}[Monotonicity: Disj, Properties, Unfair]{lem}{DisjMonoLemma}
\label{disj:le:NonFairDisjunctiveMono}
    For disjunctive systems:
    \begin{align*}
    &\forall n \geq 1:\\
    &(A,B)^{(1,n)} \models \pexists h(A,B_1)
    \ \Impl \
    (A,B)^{(1,n+1)} \models \pexists h(A,B_1).
    \end{align*}
\end{restatable}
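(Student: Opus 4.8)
The plan is to take a witnessing run of the smaller system and extend it by a single process that never leaves its initial state. Concretely, suppose $\largesys \models \pexists h(A,B_1)$, witnessed by an (infinite) run $x = (s_1,e_1,p_1),(s_2,e_2,p_2),\ldots$ of $\largesys$ with $x \models h(A,B_1)$. I would build a run $y$ of $\sys{1,n+1}$ as follows: copy every local run together with its inputs verbatim, i.e.\ set $y(p) = x(p)$ for each $p \in \{A,B_1,\ldots,B_n\}$, and let the fresh process $B_{n+1}$ sit in $\init_B$ forever, reading a fixed input $\localin \in \inputs_B$ and never moving. Formally, put $s'_m = (s_m,\init_B)$, $e'_m = (e_m,\localin)$, and keep $p_m$ as the process that moves at moment $m$ in $y$.

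The key step --- and the main obstacle --- is to check that $y$ is a genuine path, i.e.\ that every transition taken along $x$ remains enabled after the extra process is inserted. This is exactly where the disjunctive interpretation is essential. At each moment $m$, the global state $s'_m$ of $y$ differs from $s_m$ only by the additional coordinate $\init_B$ for $B_{n+1}$; hence the set of local states visible to the moving process $p_m$ in $y$ (namely, the states of all processes other than $p_m$) is a superset of the one visible to $p_m$ in $x$. A disjunctive guard $g$ asserts the \emph{existence} of some other process in a state of $g$, so if $(s_m,p_m) \models g$ held in $x$, the witnessing process is still present in $y$ and $(s'_m,p_m) \models g$ holds as well. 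Thus no transition used along $x$ is disabled, and the same local transition of $p_m$ is available in $y$. The interleaving/input-consistency requirement also holds: for $p \neq B_{n+1}$ it is inherited from $x$, and for $B_{n+1}$ it is satisfied vacuously, since its input is constant and it never moves.

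Finally, because $y(A) = x(A)$ and $y(B_1) = x(B_1)$, the joint projection $y(A,B_1)$ equals $x(A,B_1)$, so $y$ and $x$ agree on all atomic propositions occurring in $h(A,B_1)$; hence $y \models h(A,B_1)$, giving $\sys{1,n+1} \models \pexists h(A,B_1)$. For an infinite witness $x$ the run $y$ is again infinite, hence maximal, and we are done immediately. The one point I would still want to verify carefully is maximality when the witness $x$ is a finite (deadlocked) run: inserting a process in $\init_B$ can only enable, never disable, transitions under the disjunctive semantics, so it may re-enable some previously disabled process and thereby destroy the deadlock. Since the lemma only asserts the existence of \emph{some} maximal path of $\sys{1,n+1}$ satisfying $h(A,B_1)$, I expect this to be handled by continuing the run without touching $A$ or $B_1$; but the clean and essential content of the argument is the freezing construction for infinite witnesses described above.
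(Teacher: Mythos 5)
Your construction is exactly the paper's proof: the paper disposes of this lemma in one line ("copy $x$ into $y$ and keep the additional process in the initial state"), relying precisely on your key observation that a disjunctive guard asserts existence of a witness state in \emph{another} process, so adding a frozen process in $\init_B$ can never disable a transition. Your closing worry about finite witnesses is moot under the paper's semantics, where $\pexists h(A,B_1)$ is evaluated in the usual way over (infinite) runs and deadlocked runs are treated by the separate deadlock-detection lemmas.
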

\begin{proof}
Given run $x$ of $(A,B)^{(1,n)}$ we construct a run $y$ of $(A,B)^{(1,n+1)}$: 
copy $x$ into $y$ and keep the additional process in the initial state.
\end{proof}

\begin{restatable}[Bounding: Disj, Properties, Unfair]{lem}{DisjBoundingLemma}
\label{disj:le:NonFairDisjunctiveBounding}
    For disjunctive systems:
    \begin{align*}
    \forall n \geq |B|+2:\ 
    (A,B)^{(1,|B|+2)} \models \pexists h(A,B_1)
    \ \ \Implied \ \ 
    \largesys \models \pexists h(A,B_1).
    \end{align*}
\end{restatable}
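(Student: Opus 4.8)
The plan is to establish the bounding direction by the flooding construction outlined in Section~\ref{sec:ideas-disj-nofair}. Assume $\largesys \models \pexists h(A,B_1)$ for some $n \geq \card{B}+2$, and fix a witnessing run $x = (s_1,e_1,p_1),(s_2,e_2,p_2),\ldots$ of $\largesys$ with $x \models h(A,B_1)$; I treat $x$ as infinite, which is the natural setting for an $\LTLmX$ property (deadlocked runs are the business of the separate deadlock lemmas). The goal is to construct a run $y$ of the cutoff system $(A,B)^{(1,\card{B}+2)}$ whose projection $y(A,B_1)$ is stuttering-equivalent to $x(A,B_1)$. Since $\LTLmX$ formulas are invariant under finite stuttering, this yields $y \models h(A,B_1)$ and hence $(A,B)^{(1,\card{B}+2)} \models \pexists h(A,B_1)$, which is the claim.

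First I would fix the local runs of $y$. Copy $y(A) := x(A)$ and $y(B_1) := x(B_1)$. For each state $q \in \visited_{\mB}(x)$ let $m_q$ be the first moment some $B$-process visits $q$ in $x$, say process $B_i$, and devote one cutoff process to \emph{flood} $q$, i.e.\ to follow $x(B_i)\slice{1}{m_q}$ and then remain in $q$ forever. Finally, since $x$ is infinite some process moves infinitely often; if it is neither $A$ nor $B_1$, devote one further process to copy that process's entire local run, which keeps $y$ infinite and therefore maximal. This uses one copy for $B_1$, at most $\card{\visited_{\mB}(x)} \le \card{B}$ flooders, and at most one infinite mover, for a total of at most $\card{B}+2$ copies of $B$; when strictly fewer are needed I pad with processes that stay in $\init_B$, which in a disjunctive system can never disable a guard.

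The heart of the argument is enabledness, and here the disjunctive (existence) semantics does the work through a monotone-visibility invariant: for every moment $m$ and every process $p$ of $y$ that copies a move of some $x$-process $\hat p$ at that index, the set of states visible to $p$ in $y$ (those held by processes other than $p$) contains the set of states visible to $\hat p$ in $x$. Indeed, any state $r$ visible to $\hat p$ in $x$ is occupied in $x$ at moment $m$; if $r \in Q_A$ it is held by the copied $y(A)$, and if $r \in \visited_{\mB}(x)$ then $m \ge m_r$, so the flooder for $r$ already resides in $r$. A short check shows this witness is distinct from $p$: the flooder for $r$ could coincide with $p$ only if $p$ floods $r$ and is still en route ($m < m_r$), contradicting $m \ge m_r$. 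Given the invariant, every disjunctive guard satisfied for $\hat p$ in $x$ is satisfied for $p$ in $y$, so every copied transition stays enabled. This is precisely where the cutoff size enters: the guards that must be kept alive can only require witnesses among the $\le \card{B}$ states of $\visited_{\mB}(x)$, so $\card{B}$ flooders preserve all of them simultaneously.

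The remaining step, which I expect to be the most delicate, is repairing the interleaving semantics. The local runs above are indexed by the global time of $x$, and at a single moment $m$ several $y$-processes may move at once (e.g.\ two flooders still mimicking the same $x$-process, or a flooder together with the infinite mover), which the one-process-per-step semantics forbids. I would serialize each such simultaneous super-step into a sequence of individual moves, inserting local stuttering for the processes not currently moving and then deleting any resulting global stutter steps; the monotone-visibility invariant guarantees each individual move is still enabled at the instant it is taken. Because $A$ and $B_1$ are copied verbatim and only stutter steps are inserted into or removed from their local runs, $y(A,B_1)$ remains stuttering-equivalent to $x(A,B_1)$, and stuttering-invariance of $\LTLmX$ closes the proof. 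Verifying that this serialization produces a genuine maximal path is the main piece of careful-but-routine bookkeeping.
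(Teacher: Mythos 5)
Your proposal is correct and follows essentially the same route as the paper's proof: copy $x(A)$ and $x(B_1)$, flood every state in $\visited_\mB(x)$, add one copy of an infinitely moving process, then repair the interleaving semantics via serialization and destuttering, with the same $\card{\visited_\mB(x)}+2 \leq \card{B}+2$ count. Your only deviations are cosmetic --- you make the monotone-visibility invariant (and the distinctness of the guard witness from the moving process) explicit where the paper leaves it informal, and you park leftover processes in $\init_B$ instead of letting them mimic $B_1$, both of which are sound.
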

\noindent 
The proof is from \cite[Lemma 4.1.2]{Emerson00}. 
We recapitulate it to introduce the notion of ``a process floods a state'', 
\destutter, \interleave, and ``process mimics another process'' 
which are used in our proofs later.
\ak{we could remove the proof and define these notions outside as i did for other constructions}

\begin{proof}
Let $c = |B|+2$ and $n \geq c$. Let $x=(s_1,e_1,p_1), (s_2,e_2,p_2) \ldots$ be a run of $\largesys$ that satisfies $\pexists h(A,B_1)$. We construct a run $y$ of the cutoff system $\cutoffsys$ with $y(A, B_1) \simeq x(A, B_1)$.

Let $\visited(x)$ be the set of all visited states by B-processes in run $x$: $\visited(x) = \{ q \| \exists m \exists i: s_m(B_i) = q \}$. 

Construct the run $y$ of \cutoffsys as follows:
\li
  \-[a.] copy runs of $A$ and $B_1$ from $x$ to $y$: $y(A)=x(A)$, $y(B_1)=x(B_1)$
  \-[b.] $x$ is infinite, hence it has at least one infinitely moving process, denoted $B_\infty$. Devote one unique process $B_\infty$ in \cutoffsys that copies the behaviour of $B_\infty$ of \largesys: $y(B_\infty)=x(B_\infty)$.
  \-[c.] for every $q \in \visited$ there is a process of \largesys, denoted $B_i$, that visits $q$ first, at moment denoted $m_q$. Then devote one unique process in \cutoffsys, denoted $B_{i_q}$, that \emph{floods $q$}: set $y(B_{i_q}) = x(B_i)\slice{1}{m_q}(q)^\omega$. In words: the run $y(B_{i_q})$ repeats exactly that of $x(B_i)$ till moment $m_q$, after which the process is never scheduled.
  \-[d.] let any other process $B_i$ of \cutoffsys not used in the previous steps (if any) \emph{mimic} the behavior of $B_1$ of \cutoffsys: $y(B_i) = y(B_1)$.
\il
The figure illustrates the construction.\ak{\init should be flooded}
\begin{figure}
\centering
\scalebox{0.7}{
\usetikzlibrary{calc,arrows,shapes,fit}

\begin{tikzpicture}
\tikzstyle{every label} = [font=\normalsize];
\tikzstyle{state} = [circle, inner sep=2pt, minimum size=2mm, draw=black, node distance=0.5cm]
\tikzstyle{hiddenstate} = [state, circle,inner sep=0pt,outer sep=0pt, minimum size=0pt, draw=gray]
\tikzstyle{topstate} = [hiddenstate, node distance=0.75cm]
\tikzstyle{state21} = [state, rectangle]
\tikzstyle{state22} = [state, diamond]
\tikzstyle{state23} = [state, regular polygon,circle]

%----------------------------------------------------------------------------------------------------------------------------

%T1_1
\node (t11_top) [topstate, label={[name=t11_top-label] $A_1$}] {};
% \draw ($(t11_top)-(.2,0)$) -- ($(t11_top)+(.2,0)$);
\node (t11_1) [hiddenstate, below of=t11_top] {};
\node (t11_2) [hiddenstate, below of=t11_1] {};
\node (t11_3) [hiddenstate, below of=t11_2] {};
\node (t11_4) [hiddenstate, below of=t11_3] {};
\node (t11_5) [hiddenstate, below of=t11_4] {};
\node (t11_6) [hiddenstate, below of=t11_5] {};
\draw[|-] (t11_top) -- (t11_6);

%T2_1
\node (t21_top) [topstate, right of=t11_top, label={[name=t21_top-label] $B_1$}] {};
% \draw ($(t21_top)-(.2,0)$) -- ($(t21_top)+(.2,0)$);
\node (t21_1) [hiddenstate, below of=t21_top] {};
\node (t21_2) [hiddenstate, below of=t21_1] {};
\node (t21_3) [hiddenstate, below of=t21_2] {};
\node (t21_4) [hiddenstate, below of=t21_3] {};
\node (t21_5) [hiddenstate, below of=t21_4] {};
\node (t21_6) [hiddenstate, below of=t21_5] {};
\draw[|-] (t21_top) -- (t21_6);

%T2_2
\node (t22_top) [topstate, right of=t21_top, label={[name=t22_top-label] $B_2$}] {};
\node (t22_1) [state21, below of=t22_top] {};
\node (t22_2) [hiddenstate, below of=t22_1] {};
\node (t22_3) [state22, below of=t22_2] {};
\node (t22_4) [hiddenstate, below of=t22_3] {};
\node (t22_5) [state23, below of=t22_4] {};
\node (t22_6) [hiddenstate, below of=t22_5] {};
\draw[|-] (t22_top) -- (t22_1);
\draw[] (t22_1) -- (t22_3);
\draw[] (t22_3) -- (t22_5);
\draw[] (t22_5) -- (t22_6);

% other processes placeholder
\node (t2i_top) [topstate, right of=t22_top] {};

% T_2_m
\node (t2m_top) [topstate, right of=t2i_top, label={[name=t2m_top-label] $B_{n-1}$}] {};
\node (t2m_1) [hiddenstate, below of=t2m_top] {};
\node (t2m_2) [state23, below of=t2m_1] {};
\node (t2m_3) [hiddenstate, below of=t2m_2] {};
\node (t2m_4) [state22, below of=t2m_3] {};
\node (t2m_5) [hiddenstate, below of=t2m_4] {};
\node (t2m_6) [hiddenstate, below of=t2m_5] {};
\draw[|-] (t2m_top) -- (t2m_2);
\draw[] (t2m_2) -- (t2m_4);
\draw[] (t2m_4) -- (t2m_6);

% T_2_n
\node (t2n_top) [topstate, right of=t2m_top, label={[name=t2n_top-label] $B_n$}] {};
\node (t2n_1) [hiddenstate, below of=t2n_top] {};
\node (t2n_2) [hiddenstate, below of=t2n_1] {};
\node (t2n_3) [hiddenstate, below of=t2n_2] {};
\node (t2n_4) [hiddenstate, below of=t2n_3] {};
\node (t2n_5) [hiddenstate, below of=t2n_4] {};
\node (t2n_6) [hiddenstate, below of=t2n_5, label={[name=t2n_6-label]below:$\infty$}] {};
\draw[|-] (t2n_top) -- (t2n_6);

%\draw[dashed] (t2n_2) -- (t2n_4);
%\draw[dashed] (t2n_4) -- (t2n_6);

\coordinate (t11_old-top) at (t11_top);
\coordinate (t11_old-bottom) at (t11_6);
\coordinate (t21_old-top) at (t21_top);
\coordinate (t21_old-bottom) at (t21_6);
\coordinate(t22_old-bottom) at (t22_6);
\coordinate (t2n_old-top) at (t2n_top);
%\coordinate(t2n_old-bottom) at (t2n_6);
\coordinate(t2m_old-bottom) at (t2m_6);
\coordinate(t2n_old-bottom) at ($(t2n_6-label)-(0,0.1)$);

% other processes line
\coordinate (t22_center) at ($(t22_3.center)$);
\coordinate (t2m_center) at ($(t2m_3.center)$);
\node (othrerssss) at ($(t22_center)+(0.8,0)$) {\textbf{\ldots}};
% \draw[dotted,semithick] ($(t22_center)+(0.4,0)$) to ($(t2m_center)-(0.3,0)$); % a bit cheating

% box
\node (box_outer) [draw=black, fit=(t11_top-label.center) (t2n_6), inner sep=0.5cm] {} ;
%\draw[solid] ($(t11_top-label)!0.5!(t21_top-label)+(0,0.5)$) to ($(t11_6)!0.5!(t21_6)-(0,0.5)$) ;
% \draw[solid] ($(t21_top-label)!0.5!(t22_top-label)+(0,0.5)$) to ($(t21_6)!0.5!(t22_6)-(0,0.5)$) ;

%----------------------------------------------------------------------------------------------------------------------------

%T1_1
\node (t11_top) [topstate, label={[name=t11_top-label] $A_1$}, right of=t2n_old-top, node distance=1.5cm] {};
% \draw ($(t11_top)-(.2,0)$) -- ($(t11_top)+(.2,0)$);
\node (t11_1) [hiddenstate, below of=t11_top] {};
\node (t11_2) [hiddenstate, below of=t11_1] {};
\node (t11_3) [hiddenstate, below of=t11_2] {};
\node (t11_4) [hiddenstate, below of=t11_3] {};
\node (t11_5) [hiddenstate, below of=t11_4] {};
\node (t11_6) [hiddenstate, below of=t11_5] {};
\draw[|-] (t11_top) -- (t11_6);

%T2_1
\node (t21_top) [topstate, right of=t11_top, label={[name=t21_top-label] $B_1$}] {};
% \draw ($(t21_top)-(.2,0)$) -- ($(t21_top)+(.2,0)$);
\node (t21_1) [hiddenstate, below of=t21_top] {};
\node (t21_2) [hiddenstate, below of=t21_1] {};
\node (t21_3) [hiddenstate, below of=t21_2] {};
\node (t21_4) [hiddenstate, below of=t21_3] {};
\node (t21_5) [hiddenstate, below of=t21_4] {};
\node (t21_6) [hiddenstate, below of=t21_5] {};
\draw[|-] (t21_top) -- (t21_6);

%T2_2
\node (t22_top) [topstate, right of=t21_top, label={[name=t22_top-label] $B_2$}] {};
\node (t22_1) [state21, below of=t22_top] {};
\node (t22_2) [state21, below of=t22_1] {};
\node (t22_3) [state21, below of=t22_2] {};
\node (t22_4) [state21, below of=t22_3] {};
\node (t22_5) [state21, below of=t22_4] {};
\node (t22_6) [hiddenstate, below of=t22_5] {};
\draw[|-] (t22_top) -- (t22_1);
\draw[dashed] (t22_1) -- (t22_2);
\draw[dashed] (t22_2) -- (t22_3);
\draw[dashed] (t22_3) -- (t22_4);
\draw[dashed] (t22_4) -- (t22_5);
\draw[dashed] (t22_5) -- (t22_6);

% other processes placeholder
\node (t2i_top) [topstate, right of=t22_top] {};

% T_2_3
\node (t23_top) [topstate, right of=t22_top, label={[name=t23_top-label] $B_3$}] {};
\node (t23_1) [state21, below of=t23_top] {};
\node (t23_2) [hiddenstate, below of=t23_1] {};
\node (t23_3) [state22, below of=t23_2] {};
\node (t23_4) [state22, below of=t23_3] {};
\node (t23_5) [state22, below of=t23_4] {};
\node (t23_6) [hiddenstate, below of=t23_5] {};
\draw[|-] (t23_top) -- (t23_1);
\draw[] (t23_1) -- (t23_3);
\draw[dashed] (t23_3) -- (t23_4);
\draw[dashed] (t23_4) -- (t23_5);
\draw[dashed] (t23_5) -- (t23_6);

% T_2_4
\node (t24_top) [topstate, right of=t23_top, label={[name=t24_top-label] $B_4$}] {};
\node (t24_1) [hiddenstate, below of=t24_top] {};
\node (t24_2) [state23, below of=t24_1] {};
\node (t24_3) [state23, below of=t24_2] {};
\node (t24_4) [state23, below of=t24_3] {};
\node (t24_5) [state23, below of=t24_4] {};
\node (t24_6) [hiddenstate, below of=t24_5] {};
\draw[|-] (t24_top) -- (t24_2);
\draw[dashed] (t24_2) -- (t24_3);
\draw[dashed] (t24_3) -- (t24_4);
\draw[dashed] (t24_4) -- (t24_5);
\draw[dashed] (t24_5) -- (t24_6);

% T_2_5
\node (t25_top) [topstate, right of=t24_top, label={[name=t25_top-label] $B_5$}] {};
\node (t25_1) [hiddenstate, below of=t25_top] {};
\node (t25_2) [hiddenstate, below of=t25_1] {};
\node (t25_3) [hiddenstate, below of=t25_2] {};
\node (t25_4) [hiddenstate, below of=t25_3] {};
\node (t25_5) [hiddenstate, below of=t25_4] {};
\node (t25_6) [hiddenstate, below of=t25_5, label={[name=t25_6-label]below:$\infty$}] {};
\draw[|-] (t25_top) -- (t25_6);

\coordinate (t11_new-top) at (t11_top);
\coordinate (t11_new-bottom) at (t11_6);
\coordinate (t21_new-top) at (t21_top);
\coordinate (t21_new-bottom) at (t21_6);
\coordinate (t22_new-top) at (t22_top);
\coordinate (t22_new-bottom) at (t22_6);
\coordinate(t23_new-bottom) at (t23_6);
\coordinate(t24_new-bottom) at (t24_6);
%\coordinate(t25_new-bottom) at (t25_6);
\coordinate(t25_new-bottom) at ($(t25_6-label)-(0,0.1)$);

% box
\node (box_outer) [draw=black, fit=(t11_top-label.center) (t25_6), inner sep=0.5cm] {} ;
%\draw[solid] ($(t11_top-label)!0.5!(t21_top-label)+(0,0.5)$) to ($(t11_6)!0.5!(t21_6)-(0,0.5)$) ;
% \draw[solid] ($(t21_top-label)!0.5!(t22_top-label)+(0,0.5)$) to ($(t21_6)!0.5!(t22_6)-(0,0.5)$) ;

\draw[->, bend right, dotted,semithick] (t22_old-bottom) to (t22_new-bottom);
\draw[->, bend right, dotted,semithick] (t22_old-bottom) to (t23_new-bottom);
\draw[->, bend right, dotted,semithick] (t2m_old-bottom) to (t24_new-bottom);
%\draw[->, bend right, dotted,semithick] (t2n_old-bottom) to (t25_new-bottom);
%\draw[->, bend left, dotted,semithick] (t2m_top-label.north) to (t24_top-label.north);
\draw[->, bend left=5, dotted,semithick] (t2n_top-label.north) to (t25_top-label.north);

\end{tikzpicture}
}
\end{figure}
The correctness follows from the observation that any transition of any process at any moment $m$ of $y$ was done by some process in $x$ at moment $m$ and hence is enabled. Also note that if $\geq 2$ processes transit simultaneously in $y$, then the guards of their transitions will be enabled even if both of them are removed from the state space\ak{vague}. Note that it is possible that in $y$:
\li
  \- more than one process transits at the same moment. Then, \emph{\interleave} the transitions of such processes, namely arbitrarily sequentialize them. \ak{why are enabled}
  \- at some moment no processes move. Then remove elements of the run $y$ -- the resulting run is denoted $\destutter(y)$.
\il
This construction uses $|\visited| + 2 \leq |B|+2$ copies of B (ignoring case (d)).
\end{proof} 
%\gray{\ak{previous version is commented out}
% \input{proof-disj-bounding}
%}

\begin{restatable}[Disj, Props, Unfair]{tightness}{TightDisjBoundingLemma}
\label{obs:disj:tight_prop}
    The cutoff in Lemma~\ref{disj:le:NonFairDisjunctiveBounding} is tight, 
    i.e., for any $k$ there exist process templates $(A,B)$ with $|B| = k$ 
    and $\LTLmX$ formula $h(A,B_1)$ such that:
    $$
    (A,B)^{(1,|B|+2)} \models \pexists h(A,B_1) ~~and~~ 
    (A,B)^{(1,|B|+1)} \not\models \pexists h(A,B_1).
    $$
\end{restatable}
\begin{proof}
The idea of the proof relies on the subtleties of the definition of a run: it is infinite (thus not globally deadlocked), and in each step of a run exactly one process moves. 

Consider the templates in the figure below and let $\pexists h(A,B_1) = \pexists (\eventually 3_{B_1} \land \eventually\always (2_{B_1} \land {end}_A))$. In words: there exists a run in a system where process $B_1$ visits $3_B$ and process $B_1$ with $A$ eventually always stay in $2_B$ and ${end}_A$.
\begin{figure}[h]
% \vspace{-20pt}
\centering
\subfloat[Template A]{
\centering
\makebox[0.4\textwidth][c]{
\scalebox{0.75}{%!TEX root = table.tex
\begin{tikzpicture}[node distance=1.8cm,inner sep=1pt,minimum size=0.5mm,->,>=latex]

\node[initial below, state] (a_1) {$1_A$};
\node (dots) [right of=a_1] {$\ldots$};
\node[state] (a_all) [right of=dots] {${all}_A$};
\node[state] (a_end) [right of=a_all] {${end}_A$};

\path (a_1) edge [above] node {$\disj{1_B}$} (dots);
\path (dots) edge [above] node {$\disj{{\card{B}}_B}$} (a_all);
\path (a_all) edge [above] node {$\disj{3_B}$} (a_end);

\end{tikzpicture}}
\label{fig:disj:tight_propAB_tmplA}
}}
\subfloat[Template B]{
\centering
\makebox[0.6\textwidth][c]{
\scalebox{0.75}{%!TEX root = table.tex
\begin{tikzpicture}[node distance=1.8cm,inner sep=1pt,minimum size=0.5mm,->,>=latex]

\node[initial below, state] (b_1) {$1_B$};
\node[state] (b_2) [left of=b_1] {$2_B$};
\node[state] (b_3) [right of=b_1]{$3_B$};
\node (dots) [right of=b_3] {$\ldots$};
\node[state] (b_k) [right of=dots] {${\card{B}}_B$}; 

\path (b_1) edge [above] node {$\disj{1_B}$} (b_2);
\path (b_1) edge [above] node {$\disj{1_B}$} (b_3);
\path (b_3) edge [above] node {$\disj{3_B}$} (dots);
\path (dots) edge [above] node {$\disj{{\card{B}{-}1}_B}$} (b_k);
\path (b_k) [loop above] edge [right] node {$\disj{{\card{B}}_B}$} (b_k);
\path (b_3) [bend right=60] edge [above] node {$\disj{{all}_A}$} (b_1);

\end{tikzpicture}}
\label{fig:disj:tight_propAB_tmplB}
}}
% \caption{Templates $(A,B)$ used to prove the tightness of the cutoffs for properties $\pexists h(A,B_1)$ (Observation~\ref{obs:disj:tight_prop})\ak{check me}.}
\label{fig:disj:tight_propAB_tmpl}
\end{figure}
% \vspace{-10pt}

We need one process in every state of $B$ to enable the transitions of $A$ to ${all}_A$. Only when $A$ in ${all}_A$, $B_1$ can move $3_B \to 1_B$, and then at some point to $2_B$. After $B_1$ moves $3_B \to 1_B$, $A$ moves ${all}_A \to {end}_A$ which requires process $B_{i \neq 1}$ in $3_B$. Finally, to make the run infinite there should be at least two processes in ${|B|}_B$.\sj{other cases are covered in general lemma below}
\end{proof}

\ifwithextensions
%\subsubsection{Generalized Bounding Lemma.}

Now, consider properties $h(A,B^{(n)})$ that may talk about ($A$ and) $n$ different copies of process template $B$. 

\begin{restatable}[Generalized Bounding Lemma]{lem}{DisjBoundingLemmaGeneral}
\label{disj:le:NonFairDisjunctiveBoundingGeneral}
For disjunctive systems without fairness:
\begin{align*}
&(A,B)^{(1,\geq \card{B} + k + 1)} \models \pexists h(A,B^{(k)}) &
&~\iff~& &
(A,B)^{(1,\card{B} + k + 1)} \models \pexists h(A,B^{(k)}).%,\\
\end{align*}
\end{restatable}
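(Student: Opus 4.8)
The plan is to read the \emph{iff} as the cutoff equivalence ``\,for all $n \ge c$, $(A,B)^{(1,n)} \models \pexists h(A,B^{(k)})$ iff $\cutoffsys \models \pexists h(A,B^{(k)})$\,'' with $c=\card{B}+k+1$, and to prove the two implications separately. The direction ``cutoff $\Rightarrow$ large'' is a generalized Monotonicity argument, and the direction ``large $\Rightarrow$ cutoff'' is a generalized Bounding argument; both are obtained by lifting the $k=1$ proofs (Lemmas~\ref{disj:le:NonFairDisjunctiveMono} and \ref{disj:le:NonFairDisjunctiveBounding}) from one indexed process to $k$ indexed processes, and the only genuinely new content is the process count.

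For the direction $\cutoffsys \models \pexists h(A,B^{(k)}) \Impl \largesys \models \pexists h(A,B^{(k)})$ with $n \ge c$, I would reuse the Monotonicity construction: given a witnessing infinite run $x$ of the cutoff system, copy all its local runs into the larger system and freeze the $n-c$ extra $B$-processes in $\init_B$ forever. Because guards are disjunctive, adding processes can only enlarge the set of occupied states and hence never disables an enabled transition, so every transition along $x$ remains enabled; the frozen processes are simply never scheduled, and since $h$ constrains only $A,B_1,\dots,B_k$, whose local runs are untouched, the extended run still satisfies $h$. Iterating over the added processes yields the claim for all $n\ge c$.

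For the converse $\largesys \models \pexists h(A,B^{(k)}) \Impl \cutoffsys \models \pexists h(A,B^{(k)})$, I would generalize the flooding argument. Given an infinite run $x$ of $\largesys$ with $x \models h(A,B^{(k)})$, I build a run $y$ of $\cutoffsys$ by (a) copying $x(A),x(B_1),\dots,x(B_k)$ verbatim ($k$ copies of $B$); (b) if no mentioned process moves infinitely often, devoting one extra process to copy an infinitely-moving process $B_\infty$ of $x$, ensuring $y$ is infinite (at most $1$ copy); and (c) for every $q \in \visited(x)$ devoting one process that \emph{floods} $q$, i.e.\ reaches $q$ at the first moment it is occupied in $x$ and then stays, using $\card{\visited(x)} \le \card{B}$ copies. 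Any leftover process mimics $B_1$. The total is at most $k + 1 + \card{\visited(x)} \le k + 1 + \card{B} = c$, so the construction fits the cutoff system. Whenever several processes would move at the same moment I \interleave them, and I \destutter the globally idle steps; since $x(A),x(B_1),\dots,x(B_k)$ are reproduced exactly and only stuttering is introduced, $y(A,B_1,\dots,B_k)$ is stutter-equivalent to $x(A,B_1,\dots,B_k)$, and stutter-invariance of $\LTLmX$ gives $y \models h(A,B^{(k)})$.

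Correctness of the transitions of $y$ rests on the same flooding invariant as in the $k=1$ case, and crucially this invariant is purely \emph{state-based}: for every moment $m$ and every local state $q'$ occupied in $x$ at $m$, some process of $y$ also occupies $q'$ at $m$ (the flooders cover each state of $\visited(x)$ from its first-visit moment onward, and the exactly-copied processes cover themselves). Hence at each transition the states visible to a moving process of $y$ form a superset of those visible to the corresponding process in $x$, so every disjunctive guard that held in $x$ still holds in $y$. Since this guarantee is indifferent to how many indexed processes $h$ mentions, I do not expect any real obstacle beyond the bookkeeping --- verifying that the $k$ reproduced runs, the single infinitely-moving witness, and the at-most-$\card{B}$ flooders fit within $\card{B}+k+1$ processes, and that interleaving and destuttering preserve the $(k+1)$-fold projection up to stuttering --- which is exactly the step I would expect to write out in full.
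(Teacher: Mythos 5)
Your proof is correct and takes essentially the same approach as the paper: the paper's proof of this lemma simply observes that the $k=1$ bounding construction (copying the mentioned local runs, flooding $\visited(x)$, one infinitely-moving witness, then interleaving and destuttering) works without modification when $k$ local runs of $B$-copies are replicated instead of one, with the monotonicity direction handled as before by freezing the extra processes in $\init_B$. You merely spell out the counting and the state-based flooding invariant that the paper leaves implicit.
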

\begin{proof}
An inspection of the proof of Lemma~\ref{disj:le:NonFairDisjunctiveBounding} shows that it works almost without modification if we replicate the local run of $A^1$ along with $n$ local runs of copies of $B$, instead of only $A^1, B^1$. In particular, the rest of the flooding construction and the interleaving construction do not have to be changed.

Additionally, one can observe that the cutoff is independent of whether the property talks about $A^1$ or not.
\end{proof}

\begin{restatable}{obs}{TightDisjBoundingLemmaGeneral}
\label{obs:disj:tight_prop:general}
The cutoffs in Lemma~\ref{disj:le:NonFairDisjunctiveBoundingGeneral} are tight, i.e.,
for any $k$ there exist $(A,B)$ with $\card{B}=k$ and $\LTLmX$ formula $h(A,B^{(n)})$ such that: \sj{do we have example for generalized form?}\ak{agree, proof is needed}
$$
(A,B)^{(1,k+n+1))} \models h(A,B^{(n)}) ~~and~~ 
(A,B)^{(1,k+n)} \not\models h(A,B^{(n)}).
$$
\end{restatable}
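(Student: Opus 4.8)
The plan is to lift the single-index example of Tightness~\ref{obs:disj:tight_prop} to $n$ indices while keeping $\card{B}=k$ fixed, so that the cutoff $k+n+1$ of Lemma~\ref{disj:le:NonFairDisjunctiveBoundingGeneral} is exactly what the example must hit. I reuse template $B$ essentially verbatim: states $1_B,2_B,3_B,\dots,k_B$, the ``ladder'' $1_B\to 3_B\to\cdots\to k_B$ whose edges carry the non-reflexive disjunctive guards $\disj{3_B},\dots,\disj{(k-1)_B}$, the self-loop $\disj{k_B}$ at $k_B$, the ``recycling'' edge $3_B\to 1_B$ guarded by $\disj{{all}_A}$, and the edge $1_B\to 2_B$ guarded by $\disj{1_B}$. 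Template $A$ marches $1_A\to 2_A\to\cdots\to k_A\to {all}_A\to {end}_A$, witnessing $1_B,\dots,k_B$ in turn and then $3_B$ once more for the last step. The essential change is the specification: I take $\pexists h(A,B^{(n)})$ with
\[
 h \;=\; \Bigl(\textstyle\bigwedge_{j=1}^{n}\eventually 3_{B_j}\Bigr)\;\wedge\;\eventually\always\Bigl(\textstyle\bigwedge_{j=1}^{n} 2_{B_j}\,\wedge\, {end}_A\Bigr),
\]
i.e.\ each of the $n$ named processes must visit $3_B$ and then rest forever in $2_B$ while $A$ rests in ${end}_A$. I describe the generic case $k\ge 4$; for $k=3$ the states $3_B$ and $k_B$ coincide and the argument collapses by merging the two roles discussed below.

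For the upper bound I exhibit a run of $(A,B)^{(1,k+n+1)}$ satisfying $h$. First, using a feeder kept in $1_B$ to supply $\disj{1_B}$, a staircase of climbers populates the ladder so that simultaneously one process sits in each of $3_B,\dots,k_B$; this lets $A$ march all the way to ${all}_A$. While $A$ idles in ${all}_A$, each named process performs the dance $1_B\to 3_B\to 1_B\to 2_B$ (the middle step uses $\disj{{all}_A}$, the others $\disj{1_B}$), thereby visiting $3_B$ and settling in $2_B$. Finally $A$ fires ${all}_A\to {end}_A$ witnessed by a process parked in $3_B$, and two processes self-loop in $k_B$ to keep the run infinite. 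Counting the committed processes in the limit gives one in $1_B$ (feeder), $n{+}1$ in $2_B$ (the $n$ named ones plus the witness that $A$ used for $\disj{2_B}$ during its march), one in $3_B$, one in each of $4_B,\dots,(k-1)_B$ (the $k-4$ ladder leftovers), and two in $k_B$; the total is $1+(n+1)+1+(k-4)+2=k+n+1$.

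For the lower bound I show every run satisfying $h$ needs these same $k+n+1$ copies, by exhibiting $k+n+1$ processes occupying pairwise distinct final states. The $n$ named processes are trapped in the $2_B$-sink. A further \emph{non-named} process must also end in $2_B$: $A$ must witness $\disj{2_B}$ before reaching ${all}_A$, yet no named process can be in $2_B$ before $A$ is in ${all}_A$ (their only route into $2_B$ runs through the $\disj{{all}_A}$-guarded edge $3_B\to 1_B$), so the $2_B$-witness is distinct. One process must remain in $1_B$, since the last named process to fire $1_B\to 2_B$ needs a co-witness in $1_B$, and $1_B$ can never be emptied (both of its outgoing edges are $\disj{1_B}$-guarded, so a lone occupant is frozen). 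One process must sit in $3_B$ at the instant $A$ fires ${all}_A\to {end}_A$; it can be neither a named process (they have all left $3_B$ for good, else they could never reach $2_B$) nor the $1_B$-resident (which, being $1_B$'s sole occupant, is frozen and cannot move to $3_B$), and after this instant $\disj{{all}_A}$ is gone so it can only stay in $3_B$ or climb, the latter only increasing the count. The infinite-run requirement forces $\ge 2$ processes self-looping in $k_B$, the only surviving cycle once $A$ is in ${end}_A$. Finally the \emph{climbing lemma} supplies one frozen process in each of $4_B,\dots,(k-1)_B$. All these states are distinct, giving $\ge k+n+1$, hence $(A,B)^{(1,k+n)}\not\models\pexists h(A,B^{(n)})$.

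The main obstacle is the climbing lemma together with its interaction with the recycling edge. Because $3_B\to 1_B$ lets processes descend \emph{from $3_B$ only}, one can legitimately keep $3_B$ unoccupied in the limit, so the lemma must assert leftovers precisely on the one-way segment $4_B,\dots,(k-1)_B$; I would prove it by taking the last moment a process settles in $k_B$ and recursing downward, using at each level that a move $\ell_B\to(\ell{+}1)_B$ needs a co-witness in $\ell_B$ which can neither descend nor ascend and is therefore frozen. The second delicate point is verifying that the $1_B$-, $3_B$- and $2_B$-witnesses cannot be fused with one another, with the ladder leftovers, or with the loop pair — this is exactly where keeping $A$ single and both outgoing edges of $1_B$ $\disj{1_B}$-guarded is used (so the feeder is frozen and cannot double as the $3_B$-witness), and where the degenerate case $k=3$ (in which the $3_B$-witness coincides with the loop pair, still yielding $k+n+1$) must be checked separately.
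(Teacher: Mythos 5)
Your construction is exactly the paper's: it reuses the templates from Tightness~\ref{obs:disj:tight_prop} with the formula $\pexists \bigwedge_{i \in [1..n]} (\eventually 3_{B_i} \land \eventually\always (2_{B_i} \land {end}_A))$ --- to which your regrouped $h$ is logically equivalent --- and your limit-configuration count (one process in $1_B$, $n{+}1$ in $2_B$, one in $3_B$, one in each of $4_B,\ldots,(k{-}1)_B$, two in $k_B$) correctly totals $k+n+1$. The paper's own proof consists of just this one-sentence citation of the templates and formula, so your feeder/witness/climbing analysis is sound added verification of the same approach rather than a different route.
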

\begin{proof}
We can again use the templates from the previous observation, with a formula $h(A^1,B^{(n)}) = \pexists \bigwedge_{i \in [1..n]} (\eventually b^i_3 \land \eventually\always (b^i_2 \land a^1_{end}))$.\footnote{The case of $n=0$ can be shown with a simpler templates: $A$ is the template with only one state $a_{end}$ without successors, and $B$ is the chain that ends in $b_k$ with the self guarded loop.}
\end{proof}
\fi  %\ifwithextensions

\begin{restatable}[Monotonicity: Disj, Deadlocks, Unfair]{lem}{mono_lem_disj_deadlocks_unfair}
\label{mono_lem_disj_deadlocks_unfair}
    For disjunctive systems:
    $$\forall n\geq |B|+1: (A,B)^{(1,n)} \textit{ has a deadlock} \ 
    \Impl\ 
    (A,B)^{(1,n+1)} \textit{ has a deadlock}$$
\end{restatable}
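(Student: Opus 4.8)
\medskip
\noindent\textbf{Proof plan.}
The plan is to take a deadlocked run $x$ of $\sys{1,n}$ and build a deadlocked run $y$ of $\sys{1,n+1}$ by copying \emph{all} local runs of $x$ into $y$ and letting the new process $B_{n+1}$ \emph{mimic} a carefully chosen existing process. Two general facts will do most of the work: in a disjunctive system an additional process can only \emph{satisfy more} guards, so (i) every copied move stays enabled, and (ii) a previously disabled process can be kept disabled provided the extra process does not introduce a genuinely new visible state. I would split on the type of deadlock, and the choice of the process that $B_{n+1}$ mimics is what guarantees (ii).

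\medskip
\noindent\textbf{Global deadlock.}
Here $x$ is finite and ends in a state $s$ in which every process is disabled. Since $n \ge |B|+1 > |B|$ and the $n$ copies of $B$ occupy states of $Q_B$, the pigeonhole principle yields a state $q \in Q_B$ occupied in $s$ by at least two $B$-processes; I would fix one of them, $B_i$. Let $y$ copy every local run of $x$ and let $B_{n+1}$ mimic $B_i$ (inserting each move of $B_{n+1}$ right after the corresponding move of $B_i$ and feeding it the same inputs), so that $B_{n+1}$ also ends in $q$. Each inserted move is enabled because the disjunctive guard of the copied $B_i$-transition had a witness process different from $B_i$, which is still in place, and every copied original move stays enabled by fact (i). It then remains to check that the final configuration of $y$ is a global deadlock: for each original process $p'$ the set of states visible to $p'$ in $s$ already contains $q$ (the second occupant of $q$ differs from $p'$), so adding $B_{n+1}$ in $q$ leaves $p'$'s visible set unchanged and $p'$ stays disabled; and $B_{n+1}$, parked in $q$, sees exactly what $B_i$ saw in $s$, hence is disabled with $B_i$'s input. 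Thus $y$ ends in a $\bot$-configuration and is globally deadlocked.

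\medskip
\noindent\textbf{Local deadlock.}
Here $x$ is infinite and some process $p$ is disabled for all moments $m' \ge m$. I would again copy every local run of $x$ and let $B_{n+1}$ mimic some existing $B$-process $B_j \ne p$ (one exists since there are $n \ge 2$ copies of $B$; when possible I would pick one moving infinitely often, as in the sketch, so $B_{n+1}$ itself keeps moving). As before, all moves are enabled in the larger disjunctive system, and $y$ is an infinite path because at least one original process moves infinitely often in $x$. The point to verify is that $p$ stays locally deadlocked: the states visible to $p$ that come from the original processes are exactly as in $x$ and hence hit no guard of $p$ for $m' \ge m$, while the only additional visible state is that of $B_{n+1}$, which from moment $m$ on coincides with a state occupied by $B_j$ at some moment $\ge m$; since $p$ is disabled there and $B_j \ne p$, that state is already visible to $p$ in $x$ and so cannot satisfy any guard of $p$ either. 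Hence $p$ is disabled for all moments $\ge m$ in $y$.

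\medskip
\noindent\textbf{Main obstacle.}
I expect the delicate step to be the mimicking-by-interleaving itself: when $B_{n+1}$ lags one transition behind the process it copies, I must argue that the transient intermediate configurations neither break enabledness of the inserted move nor, in the local case, transiently re-enable $p$. The first is handled by the disjunctive witness argument above; the second rests on the observation that \emph{every} state the mimicked process occupies after the deadlock moment is already visible to $p$ and therefore non-enabling for it, so $B_{n+1}$ can only ever occupy states that are harmless for $p$.
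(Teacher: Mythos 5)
Your proposal is correct and follows essentially the same route as the paper's proof: pigeonhole on $n\ge|B|+1$ to find a doubly occupied state in the global-deadlock case and let the new process mimic one of its occupants, and in the local case let it mimic a process different from the deadlocked one so that its states are already visible to $p$ and cannot re-enable it. Your version is in fact slightly more careful than the paper's sketch — the paper says to mimic ``an infinitely moving process,'' which is problematic if the only infinitely moving process is $A$ (a $B$-copy cannot mimic $A$), whereas your choice of any $B_j\neq p$ covers that corner case.
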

\begin{proof}
Given a deadlocked run $x$ of $(A,B)^{(1,n)}$ 
we build a deadlocked run of $(A,B)^{(1,n+1)}$. 
If the run $x$ is locally deadlocked,
then it has at least one infinitely moving process, 
thus let the additional process mimic that process. 
If the run $x$ is globally deadlocked run, 
then due to $n>|B|$ in some state there are at least two processes deadlocked. 
Thus, let the new process mimic a process deadlocked in that state -- 
the run constructed will also be globally deadlocked.
\end{proof}

\sj{should be tight: If we only have $\card{B}$, here is the counterexample: pipeline where transition $(q_i,q_{i+1})$ is guarded with $\exists q_i$, and $(q_{n-1},q_0)$ guarded with $\exists q_{n-1}$. This system has deadlock with up to $n$ processes, but not with $n+1$.}

\begin{restatable}[Bounding: Disj, Deadlocks, Unfair]{lem}{lem_disj_deadlocks_unfair}
\label{lem_disj_deadlocks_unfair}
For disjunctive systems:
\li
  \- with $c=|B|+2$ and any $n>c$:
  $$(A,B)^{(1,c)} \textit{ has a local deadlock} \ \Implied\ (A,B)^{(1,n)} \textit{ has a local deadlock}$$
  
  \- with $c=2|B| - 1$ and any $n>c$
  $$(A,B)^{(1,c)} \textit{ has a global deadlock} \ \Implied\ (A,B)^{(1,n)} \textit{ has a global deadlock} $$
  
  \- with $c=2|B|-1$ and any $n>c$:
  $$(A,B)^{(1,c)} \textit{ has a deadlock} \ \Implied\ (A,B)^{(1,n)} \textit{ has a deadlock}$$
\il
\ak{seems not tight}
\end{restatable}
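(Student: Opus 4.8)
All three statements are the bounding (shrinking) direction, so in each case I start from a deadlocked run $x$ of the large system $\largesys$ with $n$ exceeding the stated cutoff $c$, and I build a deadlocked run $y$ of $\cutoffsys$, reusing the inputs of $x$ along the copied local runs (this is exactly what makes the construction go through for open systems). Following the flooding-and-evacuation method of Section~\ref{sec:ideas-disj-deadlock}, the idea is to copy the local runs of a well-chosen set of processes and to populate every remaining relevant state with a flooding process, possibly one that later evacuates. Two invariants must be maintained: (i) every transition taken in $y$ is enabled, which holds because the set of local states visible to a process in $y$ always contains the set visible to the process it mimics in $x$, and removing processes can never satisfy a disjunctive guard; and (ii) the deadlock of $x$ is faithfully reproduced in $y$.

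\textbf{Global deadlocks ($c=2|B|-1$).} Here $x$ is finite and ends in a configuration where every process is disabled; let $\dead(x)$ be the set of occupied $B$-states there. I partition $\dead(x)=\dead_1\cupdot\dead_2$, where $\dead_1$ collects the states $q$ whose deadlocked process is blocked only by an unsatisfied self-guard $\exists q$ (so a second process in $q$ would re-enable it) and $\dead_2$ the rest; note each $\dead_1$ state carries exactly one process, as two processes in one state would mutually witness $\exists q$. I then copy the local run of $A$ and of one deadlocked process per $\dead_1$ state (these form $\mD_1$), flood every state of $\dead_2$, and for every $q\in\visFin{\mB\smi\mD_1}{x}$ flood $q$ and evacuate into $\dead_2$. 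Since removing processes preserves disabledness and the $\dead_1$ states are deliberately never flooded, the final configuration of $y$ is again a genuine global deadlock. For the count, $|\dead_1|+|\dead_2|=|\dead(x)|\le|B|$ and the evacuating/flooding processes are bounded by $|\visFin{\mB\smi\mD_1}{x}|\le|B|$; the extra $-1$ comes from $\init_B$, which is either not in $\dead(x)$ (so $|\dead(x)|\le|B|-1$) or is covered both as a $\dead_2$-flooder and as an evacuating process, letting us drop one copy. This yields at most $2|B|-1$ copies of $B$.

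\textbf{Local deadlocks ($c=|B|+2$).} Now $x$ is infinite and some process, w.l.o.g.\ $B_1$ (the case of $A$ is symmetric), is disabled from a moment $m$ on and hence frozen in a single state $q^{\star}$. For its fixed input all outgoing guards of $q^{\star}$ are false, so the set $G$ of states whose occupation would re-enable $B_1$ is empty at every moment $\ge m$; in particular $G\cap\visInf{\mB}{x}=\emptyset$. I copy $x(B_1)$ and $x(A)$, flood the states of $\visInf{\mB}{x}$ (safe, as they avoid $G$), flood-and-evacuate the states of $\visFin{\mB}{x}$ into $\visInf{\mB}{x}$, and keep one further copy of $B$ mimicking an infinitely-moving process of $x$ so that $y$ stays infinite. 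The evacuation of any $G$-state finishes before $m$ (nothing occupies $G$ after $m$), so no flooding process enters $G$ after $m$ and $B_1$ stays deadlocked. Since $\visInf{\mB}{x}$ and $\visFin{\mB}{x}$ are disjoint subsets of $Q_B$, this uses at most $|B|$ flooders plus $B_1$ plus the infinite mover, i.e.\ $|B|+2$ copies of $B$.

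\textbf{General deadlocks ($c=2|B|-1$) and the main obstacle.} A deadlocked run is either globally or locally deadlocked: in the first case the global construction directly gives a deadlocked run of $\cutoffsys$, and in the second case the local construction gives a locally deadlocked run of $(A,B)^{(1,|B|+2)}$, after which repeated use of the Monotonicity Lemma~\ref{mono_lem_disj_deadlocks_unfair} raises the number of $B$-copies one at a time up to $2|B|-1$ (using $|B|+2\le 2|B|-1$, with the few small templates checked directly). The crux of the whole argument is the local case: flooding is precisely what keeps the transient and infinitely-moving processes enabled, yet the same flooding must never re-enable the frozen process $B_1$; the reconciling observation that the unlocking set $G$ is empty after the deadlock moment $m$ — hence disjoint from $\visInf{\mB}{x}$ and vacated before $m$ within $\visFin{\mB}{x}$ — is what makes both invariants hold simultaneously. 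A secondary difficulty is pinning down the exact constant $2|B|-1$ rather than $2|B|$, which relies on the slightly fragile $\init_B$ accounting sketched above.
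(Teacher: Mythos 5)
Your overall route is the same as the paper's: the same flooding and flooding-with-evacuation constructions, the same superset-of-visible-states invariant for enabledness, and for global deadlocks the same $\dead_1/\dead_2$ split (Sect.~\ref{sec:ideas-disj-deadlock} uses exactly your ``re-enabled by a second occupant of $q$'' criterion; the appendix variant defines $\mD_1$ by unique occupancy at the deadlock moment, which coincides because, as you note, no $\dead_1$ state can hold two processes). The structural differences are minor: the paper splits the local case into three subcases (which of $A$, $B_1$ deadlocks or moves infinitely often) via a construction template parameterized by a copied set $\mC$, obtaining $|B|+1$ in two of the subcases, whereas you treat it uniformly with $|B|+2$; and for the combined statement you pump the local construction up to $2|B|-1$ via Lemma~\ref{mono_lem_disj_deadlocks_unfair}, where the paper simply takes the larger of the two cutoffs.

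Two steps need repair, though. First, in the local case you flood the states of $\visInf{\mB}{x}$. Since the deadlocked process $B_1$ sits in $\qstar$ forever, $\qstar \in \visInf{\mB}{x}$, and if some transition of $B_1$ out of $\qstar$ under its frozen input is guarded ``$\exists\, \qstar$'', your flooder of $\qstar$ re-enables $B_1$, so the construction fails as written. The index set must exclude the copied processes: compute $\visInf{\mB\smi\{B_1\}}{x}$ and $\visFin{\mB\smi\{B_1\}}{x}$, as the paper's template does with $\mC=\{B_1\}$ --- only then does your own safety claim ($G$ unoccupied after $m$ by \emph{other} processes, hence disjoint from the flooded set) actually cover the flooded states. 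Relatedly, ``the evacuation of any $G$-state finishes before $m$'' is not the right justification: the evacuation path may well run past $m$; what saves the argument is that at any moment $t \geq m$ the evacuating process occupies a state that some process occupies in $x$ at $t$, and such states avoid $G$ because $B_1$ is disabled in $x$ at $t$ (the paper's case 1b makes this point, allowing the witnessing moment $d'$ in $x$ to differ from $d$).

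Second, your $-1$ in the global count is not established: the dichotomy ``$\init_B \notin \dead(x)$, or $\init_B$ is covered both as a $\dead_2$-flooder and as an evacuating process'' omits the case $\init_B \in \dead_1$, and the ``drop one copy'' claim is unsubstantiated. A correct argument is available and simple: since $n > 2|B|-1 \geq |B|$, not every $B$-process can be deadlocked in a state of its own, so some $B$-process lies outside $\mD_1$ and is deadlocked in a $\dead_2$ state; hence $\dead_2 \neq \emptyset$, so $|\dead_1| \leq |B|-1$, and since $\visFin{\mB\smi\mD_1}{x}$ is disjoint from $\dead_2$, the total is at most $|\dead_1| + |\dead_2| + |\visFin{\mB\smi\mD_1}{x}| \leq (|B|-1) + |B| = 2|B|-1$. (The paper's footnote argues in the same spirit, via ``$\visFin{\mB\smi\mC}{x} \neq \emptyset$ implies $\visInf{\mB\smi\mC}{x} \neq \emptyset$''.) Both repairs are local and leave your architecture intact.
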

%\noindent Let us define the notion of ``process floods state and evacuates'' used in the proof.
%
%Given $P_1 \subseteq \{B_1,\ldots,B_n\}$, an infinite run $x=(s_1,e_1,p_1)\ldots$ of \largesys, let \emph{$ \visited_{\fin-P_1}(x)$ wrt. $P_1$}, and \emph{$ \visited_{\inf-P_1}(x) $ wrt. $P_1$} be:
%\begin{align}
%& \visited_{\inf-{P_1}}(x) = \{ q \|\! \exists \text{ infinitely many } ~~~~ m\!:  
%s_m(B_p)\!=\!q 
%\text{ for some } B_p\!\not\in\! P_1 \} \label{disj:def_vinf_wrt} \\
%& \visited_{\fin-{P_1}}(x) = \{ q \|\! \exists \text{ only finitely many } m\!:  
%s_m(B_p)\!=\!q 
%\text{ for some } B_p\!\not\in\!P_1 \} \label{disj:def_vfin_wrt}
%\end{align}
%
%Let $q \in \visited_{\fin-P_1}(x) $. Note that in run $x$ there is a moment $f_q$ when $q$ is reached first by some B-process denoted $B_{\first_q}$. And in run $x$ there is a moment $l_q$ when $q$ is left last by some B-process denoted $B_{\last_q}$. Let $B_i$ be a process of another system \cutoffsys. Let $y$ be a run of \cutoffsys that we plan to construct from the run $x$. Then, \emph{process $B_i$ of \cutoffsys floods $q$ and then evacuates into $\visited_{\inf-B_1}(x)$} means: 
%$$y(B_i) = x(B_{\first_q})\slice{1}{f_q} \ \cdot\ (q)^{(l_q - f_q + 1)} \cdot \ x(B_{\last_q})\slice{l_q}{m} \ \cdot \ (q')^\omega,$$
%where $q'$ is a state in $\visited_{\inf-P_1}(x)$ that $B_{\last_q}$ reaches first at some moment $m \geq l_q$. In words, process mimics process $B_{\first_q}$ until it reaches $q$, then does nothing until process $B_{\last_q}$ starts leaving, then mimics $B_{\last_q}$ until it reaches $ \visited_{\inf-P_1}(x) $.
\begin{proof}
Given a (globally or locally) deadlocked run of $\largesys$ 
we construct (globally or locally) deadlocked run of $\cutoffsys$, 
where $c$ depends on the nature of the given run. 
We do this using the construction template. 

Let $\mB=\{B_1,...,B_n\}$.
The template depends on set $\mC \subseteq \{B_1,...,B_c\}$:
\li
  \-[a.] set $y(A)=x(A)$
  \-[b.] for every $B_i \in \mC$, set $y(B_i)=x(B_i)$
  \-[c.] for every $q \in \visInf{\mB\smi\mC}{x}$, 
         devote one process of \cutoffsys that floods $q$
  \-[d.] for every $q \in \visFin{\mB\smi\mC}{x}$, 
         devote one process of \cutoffsys that floods $q$ 
         and then evacuates into $\visInf{\mB\smi\mC}{x}$
  \-[e.] let other processes (if any) mimic some process from (c)
\il

\myparagraph{1) Local Deadlock}
We distinguish three cases: 
\li
  \-[1a)] $A$ deadlocks, $B_1$ moves infinitely often
  \-[1b)] $A$ moves infinitely often, $B_1$ deadlocks
  \-[1c)] $A$ neither deadlocks nor moves infinitely often, 
          $B_1$ deadlocks, $B_2$ moves infinitely often.
\il

\myparagraphraw{1a:} ``$A$ deadlocks, $B_1$ moves infinitely often''. 

Let $c=|B|+1$, and $\mC=\{B_1\}$.
Note that $\visInf{B_2..B_n}{x} \neq \emptyset$. 
The resulting construction uses 
$|\visFin{B_2..B_n}{x}| + |\visInf{B_2..B_n}{x}| + 1 
 \leq 
 |B| + 1$ 
copies of B.
\ak{seems tight}\ak{correctness}

\myparagraphraw{1b:} ``$A$ moves infinitely often, $B_1$ deadlocks''. 

Let $c=|B|+1$, and $\mC=\{B_1\}$.
Let $q_\bot$ be the state in which $B_1$ deadlocks.
Instantiate the construction template.

Process $B_1$ of \cutoffsys is deadlocked in $y$ starting from some moment $d$,
because any state it sees (in $\visInf{A,B_2..B_n}{x}$)
was also seen by $B_1$ in \largesys in $x$ at some moment $d' \geq d$
(note that $d'$ may be not the same moment as $d$).
%\footnote{Note about open systems: here we use the fact from the definitions 
%          that inputs to $B_1$ do not change.
%          This ensures that the set of states that $B_1$ should not see in order
%          to stay deadlocked does not change over time.}

\myparagraphraw{1c:} ``$A$ neither deadlocks nor moves infinitely often, 
                       $B_1$ deadlocks, $B_2$ moves infinitely often''. 

Instantiate the construction template with $c=|B|+2$ and $\mC = \{B_1,B_2\}$.
\ak{seems not tight}\ak{correctness}

\smallskip
Finally, $|B|+2$ is a (possibly not tight) cutoff for local deadlock detection problem.

\myparagraph{2) Global Deadlock}
Let $x=(s_1,e_1,p_1)...(s_d,e_d,\bot)$ be a globally deadlocked run of $\largesys$ 
with $n\geq c$.

Let us abuse the definition of $\visInf{\mF}{x}$ and $\visFin{\mF}{x}$,
in Eq.~\ref{disj:def_vinf_wrt} and \ref{disj:def_vfin_wrt} resp., 
and adapt it to the case of finite runs.
To this end, given a finite run $x=(s_1,e_1,p_1)...(s_d,e_d,\bot)$, 
extend it to the infinite sequence $(s_1,e_1,p_1)...(s_d,e_d,\bot)^\omega$, 
and apply the definition of $\visInf{\mF}{x}$ and $\visFin{\mF}{x}$ to the sequence.

Let $\mD_1$ be the set of processes deadlocked in unique states:
$\forall p\in \mD_1 \not\exists p' \neq p: s_d(p')=s_d(p)$.
Instantiate the construction template with $\mC = \mD_1$ and $c=2|B|-1$.
\footnote{$2|B|-1$ copies is enough, because: 
          $\visFin{\mB\smi\mC}{x} \cap \visInf{\mB\smi\mC}{x} = \emptyset$,
          $\visInf{\mB\smi\mC}{x} \cap \visInf{\mC}{x} = \emptyset$,
          and if $\visFin{\mB\smi\mC}{x} \neq \emptyset$, 
          then $\visInf{\mB\smi\mC}{x} \neq \emptyset$.}
\ak{seems not tight}

%The construction uses $|dead1| + |dead2| + |\visited_{\fin-P_\bot^1}(x)| \leq 2|B|-1$ copies of B.\ak{seems not tight}\ak{CHECK}\ak{correctness}

\myparagraph{3) Deadlocks}
As the cutoff for the deadlock detection problem we take the largest cutoff in (1)-(2), namely, $2|B|-1$, but it may be not tight -- finding the tight cutoffs for local deadlock and for deadlock detection problems is an open problem.

\ak{tried to refine but could not -- the trial is commented out}

\sj{my idea for smaller cutoff in comments}

%What about this:
%separate states into:
%
%\noindent
%$\visited^2_\bot$        (deadlock-2 states)\\
%$\visited^1_{unique}$        (deadlock-1 states that do not appear in paths to $\visited^2_\bot$)\\
%$\visited^1_{both}$        (deadlock-1 states that do appear in paths to $\visited^2_\bot$)\\
%$\visited_\fin$        (non-deadlock states visited in *any* local path)\\
%
%Now, $\visited^1_{unique}, \visited^1_{both}, \visited^2_\bot$ and $\visited_\fin$ are disjoint (and contain all states appearing in $x$).
%Then:
%\begin{itemize}
%\item copy paths from $\visited^1_{unique}$\\
%\item flood states from $\visited^1_{both}$ and $\visited^2_\bot$\\
%\item flood and evacuate states from $\visited_\fin$
%\end{itemize}
%
%Why is it OK to flood states from $\visited^1_{both}$ instead of copying them?
%Because we know that from these states, we can evacuate to some state in $\visited^2_\bot$, so it cannot happen that states from $\visited_\fin$ will be stuck in $\visited^1_{both}$ when evacuating.
%
%Only problem I see: evacuation may depend on state of $A$.

\end{proof}

\begin{restatable}[Disj, Deadlocks, Unfair]{tightness}{TightDisjDeadlockLemma}
\label{obs:disj:tight_deadlock}
The cutoff $c=2|B|-1$ for deadlock detection in disjunctive systems is \emph{asymptotically optimal but possibly not tight}, i.e.: for any $k$ there are templates $(A,B)$ with $|B|=k$ such that:
$$
(A,B)^{(1,|B|-1)} \textit{ does not have a deadlock, but } (A,B)^{(1,|B|)} \textit { does}.
$$
\end{restatable}

\begin{proof}
The figure below illustrates templates $(A,B)$ to prove the asymptotical optimality of cutoff $2|B|-1$ for deadlock detection problem. Template $A$ is any that never deadlocks. The system has a local deadlock only when there are at least $|B|$ copies of $B$, which is a constant factor of $2|B|-1$.
\begin{figure}[h]
\centering
\makebox[0.4\textwidth][c]{
\scalebox{0.75}{% !TEX root = table.tex
\begin{tikzpicture}[node distance=1.8cm,inner sep=1pt,minimum size=0.5mm,->,>=latex]

\node[initial left, state] (b_1) {$1_B$};
\node[state] (b_2) [right of=b_1] {$2_B$};
\node (dots) [right of=b_2] {$\ldots$};
\node[state] (b_k) [right of=dots] {$k_B$}; 

\path (b_1) edge [above] node {$\disj{1_B}$} (b_2);
\path (b_2) edge [above] node {$\disj{2_B}$} (dots);
\path (dots) edge [above] node {$\disj{{k-1}_B}$} (b_k);

\path (b_1) [loop above] edge [right] node {} (b_1);
\path (b_2) [loop above] edge [right] node {} (b_2);
% \path (b_k) [loop above] edge [right] node {$\disj{b_k}$} (b_k);

\path (b_1) edge [below, bend right=22] node {} (b_k);
\path (b_2) edge [below, bend right=15] node [near start] {$\disj{{k}_B}$} (b_k);

\end{tikzpicture}

% \begin{tikzLTS}
% \tikzstyle{state} = [circle,draw=black,thick,inner sep=1.5pt, text width={width("$\text{init}_2$")}, align=center]
% \tikzstyle{boxLabel} = [yshift=-0.4cm]
% \tikzstyle{every node} = [node distance=1.75cm]
% 
% \node[state] (t1) {$b_1$};
% \node[state] (t2) [below of=t1] {$b_2$};
% \node[] (t3) [below of=t2] {$$};
% \node[] (t4) [below of=t3, node distance=1cm] {$$};
% \node[state] (t5) [below of=t4] {$b_k$};
% 
% \path (t1) [post,loop right] edge node [right] (l1) {} (t1);
% \path (t1) [post] edge node[right] {$\disj{b_1}$} (t2);
% \path (t2) [post,loop right] edge node [right] (l2) {} (t2);
% \path (t2) [post,dashed] edge node[right] {$\disj{b_2}$} (t3);
% \path (t3) [dotted] edge (t4);
% \path (t4) [post,dashed] edge node[right] {$\disj{b_{k-1}}$} (t5);
% \path (t5) [post,loop right] edge node[right] (l5) {$\disj{b_k}$} (t5);
% 
% \node (U2) [draw=black, fit=(t1) (t2) (t3) (t4) (t5) (l1) (l2) (l5), inner sep=0.75cm] {} ;
% \node [boxLabel] at (U2.north) {Template $B$};
% 
% \end{tikzLTS}}
}
\end{figure}
% 
% The same figure above can be used to prove the asymptotical optimalitity of the cutoff $|B|+2$ for local deadlock detection problem -- in this case the template $A$ is any that never deadlocks. As before, the system locally deadlocks only when there is at least $|B|$ processes $B$.
\end{proof}

% Figure~\ref{fig:disj:tight_deadlockA_tmpl} provides templates for the case (iii).\ak{hm, good to have a formal proof for the complicated template}
% % 
% % 
% % 
% % 
% % 
% \begin{figure}
% \centering
% \subfloat[Template A]{
% \centering
% \makebox[0.55\textwidth][c]{
% \scalebox{0.75}{\input{img/disj_tight_deadlockA_tmplA}}
% \label{fig:disj:tight_deadlockA_tmplA}
% }}
% \subfloat[Template B]{
% \centering
% \makebox[0.43\textwidth][c]{
% \scalebox{0.75}{\input{img/disj_tight_deadlockA_tmplB}}
% \label{fig:disj:tight_deadlockA_tmplB}
% }}
% \caption{Templates $(A,B)$ used to prove the tightness of the cutoff $c=|B|+1$ for strictly local deadlock by $A$ process detection (Observation~\ref{obs:disj:tight_deadlock})\ak{wrong}.}
% \label{fig:disj:tight_deadlockA_tmpl}
% \end{figure}
% % 

\subsection{Disjunctive Systems with Fairness}

\begin{restatable}[Monotonicity: Disj, Props, Fair]{lem}{DisjBoundingLemmaFair}
\label{disj:le:FairDisjunctiveBounding}
For disjunctive systems:
\begin{align*}
& \forall n \geq 1: \\
&(A,B)^{(1, n)} \models \pexists_{uncond} h(A,B_1) 
\implies
(A,B)^{(1,n+1)} \models \pexists_{uncond} h(A,B_1),\\
\end{align*}
\end{restatable}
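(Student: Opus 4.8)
This is the Monotonicity Lemma for disjunctive systems with respect to $\LTLmX$ properties under unconditional fairness. I need to show that if $(A,B)^{(1,n)} \models \pexists_{uncond} h(A,B_1)$, then adding one more $B$-process preserves this: $(A,B)^{(1,n+1)} \models \pexists_{uncond} h(A,B_1)$. The witness is an unconditionally-fair run $x$ of the smaller system satisfying $h(A,B_1)$, and I must produce an unconditionally-fair run $y$ of the larger system that still satisfies $h(A,B_1)$ and in which *every* process — including the new one — moves infinitely often.

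\textbf{Understanding the statement.} This is the Monotonicity Lemma for disjunctive systems with respect to $\LTLmX$ properties under unconditional fairness. Given an unconditionally-fair witness run $x$ of $(A,B)^{(1,n)}$ with $x \models h(A,B_1)$, I must produce an unconditionally-fair run $y$ of $(A,B)^{(1,n+1)}$ with $y \models h(A,B_1)$. The extra difficulty over the unfair case (Lemma~\ref{disj:le:NonFairDisjunctiveMono}, where the new process simply idled in $\init_B$) is that now the additional process $B_{n+1}$ must itself move infinitely often.

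\textbf{Construction.} The plan is to let $B_{n+1}$ mimic an existing infinitely-moving $B$-process. Since $x$ is unconditionally fair, every process moves infinitely often; in particular $B_1$ does, and $B_{n+1}$, being a $B$-process, can legally copy the local run of $B_1$. Concretely, I would copy $x(A), x(B_1), \ldots, x(B_n)$ into $y$ unchanged, set $y(B_{n+1}) = x(B_1)$, and then repair the interleaving semantics: whenever $B_1$ takes a step in $x$, I schedule $B_1$ first and $B_{n+1}$ immediately afterwards in $y$, which inserts one stuttering step into the $(A,B_1)$-projection. For open systems I feed $B_{n+1}$ the same inputs that $B_1$ receives in $x$, which is admissible since for the $\pexists$ direction the environment input may be chosen freely.

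\textbf{Verification.} Three things need to be checked. First, enabledness: adding a process to a disjunctive system cannot disable any transition, because disjunctive guards assert the \emph{existence} of a witness and adding a process only adds potential witnesses; hence every copied transition of $A, B_1, \ldots, B_n$ stays enabled in $y$. For the extra steps of $B_{n+1}$, each copies a transition $q \to q'$ that $B_1$ took in $x$ under some guard $g$; the witness $p' \neq B_1$ satisfying $g$ at that moment in $x$ is still present in $y$ and differs from $B_{n+1}$, so the same transition is enabled for $B_{n+1}$. Second, fairness: processes $A, B_1, \ldots, B_n$ move infinitely often because their moves are copied from the unconditionally-fair $x$, and $B_{n+1}$ moves infinitely often because $B_1$ does. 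Third, the property: $y(A,B_1)$ differs from $x(A,B_1)$ only by the stuttering steps introduced while repairing the interleaving, so they are stutter-equivalent; since $h \in \LTLmX$ is stutter-invariant, $y \models h(A,B_1)$ iff $x \models h(A,B_1)$.

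\textbf{Main obstacle.} The only genuinely delicate point is the fairness of $B_{n+1}$, which is exactly what breaks the trivial unfair construction of Lemma~\ref{disj:le:NonFairDisjunctiveMono}; mimicking an infinitely-moving process resolves it, and unconditional fairness guarantees that such a process ($B_1$) exists for every $n \geq 1$. Everything else reduces to monotonicity of disjunctive guards under added processes and stutter-invariance of $\LTLmX$, both routine.
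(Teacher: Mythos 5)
Your proof is correct and matches the paper's own argument: the paper likewise copies $x$ into $y$ and lets the additional process mimic an infinitely moving $B$-process, which exists by unconditional fairness. Your write-up merely spells out the details the paper leaves implicit (enabledness of the mimicked transitions via the surviving guard witness, the interleaving repair, and stutter-invariance of $\LTLmX$), all of which are sound.
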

\begin{proof}
In run $x$ of $(A,B)^{(1,n)}$ with $n \geq 1$ all processes move infinitely often. 
Hence let the run $y$ of $(A,B)^{(1,n+1)}$ copy $x$, 
and let the new process mimic an infinitely moving B process of $(A,B)^{(1,n)}$.
\end{proof}

\begin{restatable}[Bounding: Disj, Props, Fair]{lem}{DisjBoundingLemmaFair}
\label{disj:le:FairDisjunctiveBounding}
For disjunctive systems:
\begin{align*}
&\forall n>2|B|: \\
&(A,B)^{(1,2|B|)} \models \pexists_{uncond} h(A,B_1) &
&\impliedby& &
(A,B)^{(1,n)} \models \pexists_{uncond} h(A,B_1),\\
\end{align*}
\end{restatable}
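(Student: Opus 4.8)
The plan is to reuse the three constructions introduced in Sect.~\ref{sec:ideas-disj-fair} --- flooding, flooding with evacuation, and the fair extension --- and to assemble them into a two-phase construction, following the proof idea already sketched there. Fix $c = 2|B|$ and let $x = (s_1,e_1,p_1),(s_2,e_2,p_2),\ldots$ be an unconditionally-fair run of $\largesys$ with $n > c$ that witnesses $\pexists_{uncond} h(A,B_1)$. Because $x$ is unconditionally fair, every process moves infinitely often, so there is a moment $m$ after which no local state outside $\visInf{\mB}{x}$ is ever visited again; equivalently, all states of $\visFin{\mB}{x}$ are left behind by moment $m$. The goal is to build an unconditionally-fair run $y$ of $\cutoffsys$ with $y(A,B_1)$ stuttering-equivalent to $x(A,B_1)$, so that $y$ satisfies the same $\LTLmX$ property $h(A,B_1)$ and remains fair.

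First I would fix the prefix of $y$ up to moment $m$. I copy $y(A)=x(A)$ and $y(B_1)=x(B_1)$ verbatim. For each state $q \in \visInf{B_2..B_n}{x}\smi\visInf{B_1}{x}$ I devote two fresh $B$-processes that flood $q$; for a distinguished $\qstar \in \visInf{B_2..B_n}{x}\cap\visInf{B_1}{x}$ one process that floods $\qstar$; and for each $q \in \visFin{B_2..B_n}{x}$ one process that floods $q$ and then evacuates into $\visInf{B_2..B_n}{x}$. Remaining processes mimic $B_1$. The invariant to verify along this prefix is that every guard taken by a copied or flooding process is enabled, which follows from the flooding property recalled in Sect.~\ref{sec:ideas-disj-nofair}: the set of states visible to any process $p$ of $y$ is a superset of those visible to the process of $\largesys$ whose transitions $p$ copies. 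By construction, at moment $m$ no process of $y$ sits in $\visFin{\mB}{x}$, and the reached configuration satisfies exactly the pre-requisites of the fair extension (two processes per private recurrent state, one per shared recurrent state, plus the extra copy $B_{i_\qstar'}$ placed on $\qstar$).

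Second I would invoke the fair extension on the suffix of $x$ starting at moment $m$, which is itself an unconditionally-fair path along which no state of $\visFin{\mB}{x}$ is visited. The extension keeps $y(A,B_1)=x(A,B_1)$ on the tail while scheduling every remaining process infinitely often: a state $q \in \visInf{B_2..B_n}{x}\smi\visInf{B_1}{x}$ is serviced by letting its two dedicated copies alternately mimic an infinitely-recurring process of $x$, and the states of $\visInf{B_2..B_n}{x}\cap\visInf{B_1}{x}$ are serviced by the ``carrying'' round-robin of the fair extension, which guarantees that at least one process always occupies each such state so that all disjunctive guards stay enabled. Splicing the two phases yields a collection of at most $2|B|$ local runs; finally I would apply \interleave\ to sequentialize simultaneous moves and \destutter\ to delete global stuttering steps, exactly as in the non-fair bounding lemma, obtaining a genuine interleaved run $y$ of $\cutoffsys$. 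Since only stuttering distinguishes $y(A,B_1)$ from $x(A,B_1)$, $y \models \pexists_{uncond} h(A,B_1)$.

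The main obstacle, and the reason the cutoff is $2|B|$, is the simultaneous satisfaction of three competing requirements in the second phase: (i) every process must move infinitely often; (ii) each state of $\visInf{B_2..B_n}{x}$ must stay occupied at all times so that no copied guard of $A$ or $B_1$ ever becomes disabled; and (iii) $y(A,B_1)$ must remain stuttering-equivalent to $x(A,B_1)$ so the property is preserved. Requirement (ii) forces two processes per private recurrent state and the carrying scheme for the shared recurrent states, while its interplay with (iii) is what necessitates the distinguished $\qstar$ and the extra copy $B_{i_\qstar'}$; the delicate bookkeeping step is verifying that the round-robin carrying never strands a state empty while still moving every process infinitely often. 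I would also double-check the corner case flagged in the footnote of the fair extension (when $|\visInf{B_1}{x}|=1$ and $|\visInf{B_2..B_n}{x}|=|B|$), to confirm that the slightly modified construction keeps the count within $2|B|$ copies.
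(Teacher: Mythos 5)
Your proposal is correct and follows essentially the same route as the paper's own proof in Sect.~\ref{sec:ideas-disj-fair}: a setup phase combining flooding (two copies per state in $\visInf{B_2..B_n}{x}\smi\visInf{B_1}{x}$, one for $\qstar$) with flooding-and-evacuation for $\visFin{\mB}{x}$, followed by the fair extension and the usual \interleave/\destutter\ post-processing, staying within $2|B|$ copies of $B$. Even the corner case you flag ($|\visInf{B_1}{x}|=1$ and $|\visInf{B_2..B_n}{x}|=|B|$) is the same one the paper handles in its footnote, so there is nothing substantive to add.
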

The proof was given in the main text, in Section~\ref{sec:ideas-disj-fair}.

%\myparagraph{Fair Extension}
%\input{fair_extension}
%\begin{proof}[Proof of Lemma \ref{disj:le:FairDisjunctiveBounding}]
%\input{proof-disj-bounding-fair}
%\end{proof}

\begin{restatable}[Disj, Props, Fair]{tightness}{TightDisjBoundingLemmaFair}
\label{obs:disj:fair_tight_prop}
The cutoff in Lemma~\ref{disj:le:FairDisjunctiveBounding} is tight, i.e., 
for any $k$ there exist process templates $(A,B)$ with $|B| = k$ 
and $\LTLmX$ formula $h(A,B_1)$ such that:
$$
(A,B)^{(1,2|B|)} \models \pexists h(A,B_1) ~~and~~ 
(A,B)^{(1,2|B|-1)} \not\models \pexists h(A,B_1).
$$
\ak{what happens if we bound $T_A$?}
\end{restatable}
The proof was described in the main text, in Section~\ref{sec:ideas-disj-fair}.
\begin{proof}
Consider process templates $A,B$ in the figure below 
and property $\pexists \true$.
\begin{figure}[h]
\centering
\subfloat[Template A]{
\centering
\scalebox{0.75}{%!TEX root = table.tex
\begin{tikzpicture}[node distance=1.8cm,inner sep=1pt,minimum size=0.5mm,->,>=latex]

\node[state,initial left] (a_1) {$1_A$};
\node (dots) [right of=a_1] {$\ldots$};
\node[state] (a_all) [right of=dots] {${all}_A$};

\path (a_1) edge [above] node {$\disj{1_B}$} (dots);
\path (dots) edge [above] node {$\disj{k_B}$} (a_all);
\path (a_all) [loop above] edge node {} (a_1);

\end{tikzpicture}}
\label{fig:disj:fair_tight_propAB_tmplA}
}
\hspace{1cm}
\subfloat[Template B]{
\centering
\scalebox{0.75}{%!TEX root = table.tex
\begin{tikzpicture}[node distance=1.8cm,inner sep=1pt,minimum size=0.5mm,->,>=latex]

\node[state,initial left] (b_1) {$1_B$};
\node[state] (b_2) [right of=b_1] {$2_B$};
\node (dots) [right of=b_2] {$\ldots$};
\node[state] (b_k) [right of=dots] {$k_B$}; 

\path (b_1) edge [above] node {$\disj{1_B}$} (b_2);
\path (b_2) edge [above] node {$\disj{2_B}$} (dots);
\path (dots) edge [above] node {$\disj{{k-1}_B}$} (b_k);

\path (b_1) [loop above] edge [right] node {$\disj{1_B}$} (b_1);
\path (b_2) [loop above] edge [right] node {$\disj{2_B}$} (b_2);
\path (b_k) [loop above] edge [right] node {$\disj{k_B}$} (b_k);

\end{tikzpicture}}
\label{fig:disj:fair_tight_propAB_tmplB}
}
\end{figure}

\end{proof}

\ifwithextensions
\begin{restatable}[Generalized Bounding Lemma, Unconditionally Fair]{lem}{DisjBoundingLemmaFairGeneral}
\label{disj:le:FairDisjunctiveBounding:general}
For disjunctive systems:
\begin{align*}
&(A,B)^{(1,> 2\card{B}+n-1)} \models \pexists_{uncond} h(A,B^{(n)}) &
&~\iff~& &
(A,B)^{(1,2\card{B}+n-1)} \models \pexists_{uncond} h(A,B^{(n)}),\\
\end{align*}
\end{restatable}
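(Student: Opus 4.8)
The plan is to generalize the single-copy bounding construction of Lemma~\ref{disj:le:FairDisjunctiveBounding} (Section~\ref{sec:ideas-disj-fair}) from one distinguished $B$-process to the $n$ processes $B_1,\ldots,B_n$ that $h(A,B^{(n)})$ refers to, in exactly the spirit in which Lemma~\ref{disj:le:NonFairDisjunctiveBoundingGeneral} generalizes the non-fair bounding lemma. The easy, monotonicity direction (from the cutoff system up to larger systems) I would dispatch first: iterating the fair monotonicity argument, each added process mimics some infinitely-moving $B$-process among $B_{n+1},\ldots$; since disjunctive guards only test \emph{existence} of a state in some other process and the new process merely duplicates states already present, no guard is ever disabled, unconditional fairness is preserved, and the local runs $x(A,B_1,\ldots,B_n)$ are untouched, so $h(A,B^{(n)})$ is preserved. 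This yields $(A,B)^{(1,2\card{B}+n-1)} \models \pexists_{uncond} h(A,B^{(n)}) \Impl (A,B)^{(1,m)} \models \pexists_{uncond} h(A,B^{(n)})$ for every $m > 2\card{B}+n-1$.

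For the non-trivial (bounding) direction I would start from an unconditionally-fair run $x$ of $(A,B)^{(1,m)}$ with $m > 2\card{B}+n-1$ and $x \models \pexists_{uncond} h(A,B^{(n)})$, and build an unconditionally-fair run $y$ of $(A,B)^{(1,2\card{B}+n-1)}$ whose projection $y(A,B_1,\ldots,B_n)$ is stuttering-equivalent to $x(A,B_1,\ldots,B_n)$; this immediately gives $y \models h(A,B^{(n)})$, since $h$ is an $\LTLmX$ property over $A,B_1,\ldots,B_n$ only. Treating $\{B_1,\ldots,B_n\}$ as the distinguished set, I copy $x(A)$ and each $x(B_j)$, $j\le n$, verbatim, and reconstruct the remaining processes with the two-phase construction of Section~\ref{sec:ideas-disj-fair}. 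In phase one I flood every state of $\visInf{\mB\smi\{B_1,\ldots,B_n\}}{x}$ and apply flooding-with-evacuation to every state of $\visFin{\mB\smi\{B_1,\ldots,B_n\}}{x}$, so that after a finite prefix no background process sits in a finitely-visited state and all prerequisites of the fair extension hold; in phase two I apply the fair extension relative to the distinguished set, and finally apply \interleave{} and \destutter{} to restore interleaving semantics without altering the projection onto $A,B_1,\ldots,B_n$. Because guards are disjunctive, it suffices to keep at least one process in each live state at all times, which flooding and fair extension guarantee, so every transition enabled in $x$ stays enabled in $y$.

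The counting is where the cutoff value comes from. The background processes only have to cover $V := \visInf{\mB\smi\{B_1,\ldots,B_n\}}{x}$, with $\card{V}\le\card{B}$. States of $V$ that are also visited infinitely often by some distinguished process can be held populated by a single background process (the distinguished process repopulates them on its own infinite visits), while the remaining states of $V$ require two background processes each, so the fair-extension turn-taking can move one while the other holds the guard. As in the footnote to the fair extension, the $\qstar$/$\init_B$ bookkeeping removes one superfluous process, bounding the background by $2\card{B}-1$ \emph{independently of $n$} --- additional distinguished processes can only shrink the set of purely-background states, never enlarge it. Adding the $n$ copied processes $B_1,\ldots,B_n$ then gives a total of at most $2\card{B}+n-1$, matching the stated cutoff.

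The main obstacle, as in the single-copy case, is the fair-extension bookkeeping, now with several distinguished processes: I must simultaneously guarantee (i) that every state of $V$ stays occupied throughout $y$ so that no disjunctive guard is falsified, (ii) that every process of the cutoff system moves infinitely often, and (iii) that the budget is exactly $2\card{B}+n-1$ rather than $2\card{B}+n$. The delicate point is to argue that the collective infinite visits of $B_1,\ldots,B_n$ can indeed be charged to save one background process per shared state, and to handle the boundary case (analogous to the footnote) that would otherwise force an off-by-one; verifying these invariants for the carrying and turn-taking schedule is the crux of the argument.
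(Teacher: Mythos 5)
Your proposal is correct and follows essentially the same route as the paper, whose proof simply observes that the construction of Lemma~\ref{disj:le:FairDisjunctiveBounding} (flooding, flooding with evacuation, fair extension, interleaving) works unchanged when the local run of $A$ is replicated together with $n$ local runs of $B$-copies instead of just $x(A,B_1)$. You additionally spell out the budget accounting ($2\card{B}-1$ background processes plus the $n$ copied runs, with the $\qstar$-footnote trick absorbing the off-by-one) and the iterated fair monotonicity direction, both of which the paper leaves implicit but which match its intent exactly.
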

\iffinal
\begin{proof}
Again, an inspection of the proof of Lemma~\ref{disj:le:FairDisjunctiveBounding} shows that it works almost without modification if we replicate the local run of $A^1$ along with $k$ local runs of copies of $B$, instead of only $A^1, B^1$. In particular, the rest of the flooding construction and the interleaving construction do not have to be changed.
\end{proof}
\else
\fi

\begin{restatable}{obs}{TightDisjBoundingLemmaFairGeneral}
\label{obs:disj:fair_tight_prop:general}
The cutoff in Lemma~\ref{disj:le:FairDisjunctiveBounding:general} is tight.
\end{restatable}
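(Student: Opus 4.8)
The plan is to follow the same recipe as for the non-fair generalized tightness (Observation~\ref{obs:disj:tight_prop:general}): start from the templates witnessing tightness of the non-generalized fair bound (Observation~\ref{obs:disj:fair_tight_prop}) and replace the trivial witness formula by an $n$-indexed property over the named processes $B_1,\ldots,B_n$. The structural fact driving the base bound is that in those templates $B$ has no nontrivial cycle, so along any run each process is eventually stationary; to move infinitely often, as $\spec_{uncond}$ demands, a stationary process must take its self-loop, which the disjunctive guard $\disj{i_B}$ permits only when a \emph{second} process shares its state. Since $A$'s cycle forces every state $1_B,\ldots,k_B$ to stay occupied, this already accounts for the $2|B|$ of the base case. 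A first observation I would record, however, is that these acyclic templates cannot by themselves witness $2|B|+n-1$: any placement of the named processes reuses them as coverage or as mutual companions, so the true cutoff for a ``settling'' formula ranges between $2|B|$ (named spread out) and $2|B|+n-2$ (named piled together), but never reaches $2|B|+n-1$. Hence the witnessing instance must force \emph{genuine recurrent motion}.

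Accordingly I would take $B$ to be a \emph{ring} $1_B \to \cdots \to k_B \to 1_B$ in which each advance out of $i_B$ is guarded by $\disj{i_B}$, so that a process can leave $i_B$ only while a companion remains there, and gate the back-edge by the state $all_A$ of $A$ (mirroring the $\disj{all_A}$ gate of the non-fair template), so the ring turns only while $A$ cycles. In any unconditionally-fair run every process then circulates through all ring positions, so a recurrence formula is trivially satisfied and cannot separate named from unnamed processes; the formula must instead be a symmetry-breaking safety constraint. Concretely I would use $h(A,B^{(n)}) = \always \bigwedge_{i<j}\bigwedge_{q\in Q_B} \neg(q_{B_i}\wedge q_{B_j})$, requiring the $n$ named processes to occupy pairwise-distinct ring positions at all times; this is a genuine $\LTLmX$ formula over indexed propositions, and it is exactly what prevents the named processes from collapsing onto a single companion slot.

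For the easy direction I would construct an unconditionally-fair run of $(A,B)^{(1,2|B|+n-1)}$ satisfying $h$ by rotating the processes in caravans around the ring — each advancing token leaves a companion behind to re-enable the guard — while carrying the $n$ named tokens at pairwise-distinct positions; the surplus of $n-1$ beyond $2|B|$ is precisely the cost of keeping $n$ non-colliding tokens circulating together with the companions that push them. The hard direction, $(A,B)^{(1,2|B|+n-2)} \not\models \pexists_{uncond} h(A,B^{(n)})$, would proceed by contradiction: in a putative witness, $A$'s recurring cycle keeps every ring position populated and fairness forbids any token from stalling, yet a potential/pigeonhole count over the $|B|$ positions shows that with only $2|B|+n-2$ processes one is eventually forced either to merge two named tokens into one position (violating $h$) or to leave a token without the companion it needs to advance (violating $\spec_{uncond}$).

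The main obstacle is the ring counting that pins the offset to \emph{exactly} $n-1$: one must prove the lower bound that $n$ pairwise-distinct circulating tokens, plus the companions required at each advance for fairness, cannot be sustained with fewer than $2|B|+n-1$ processes, and simultaneously exhibit a schedule realizing the upper bound. The economics here is subtle because the same process may serve as a companion for several advances over time, so the argument must track companions as a shared, time-varying resource rather than count them statically; matching this motion economics to the base $2|B|$ and confirming that the symmetry-breaking formula is both satisfiable in the cutoff system and falsifiable one step below is the delicate part, whereas the gating of the ring by $A$ and the per-advance companion bookkeeping are otherwise routine adaptations of the constructions in Section~\ref{sec:ideas-disj-fair}.
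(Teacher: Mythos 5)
Your opening critique of the paper's own construction is well taken. The paper's proof is a one-liner: reuse the chain templates of Observation~\ref{obs:disj:fair_tight_prop} with the formula $\pexists \bigwedge_{i \in [1..n]} \GF b^i_1$, pinning the $n$ named processes in $1_B$ forever (the chain has no back edges, so $\GF$ forces them never to leave, and fairness forces them to self-loop there). The intended count is: two processes per state for the self-loop companions, plus the pinned named ones. But as you observe, pinned named processes in $1_B$ serve as each other's companions and as the two required occupants of $1_B$, so for $n \geq 2$ the witness run only needs $2(|B|-1)+n = 2|B|+n-2$ processes --- one can check directly that $(A,B)^{(1,2|B|+n-2)}$ then satisfies the formula under unconditional fairness, so the paper's templates do not separate $2|B|+n-1$ from $2|B|+n-2$. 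Your ``companion absorption'' count is exactly right, and it exposes a genuine defect in the paper's (sketchy, appendix-only) argument. In this respect you go beyond the paper rather than merely diverging from it.

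However, your replacement construction does not close the gap, for two reasons. First, and fatally, the ring destroys the very economics that makes the chain example work. In the acyclic chain, the ``last leaver'' argument forces every visited state to retain a process, and fairness then forces two per state; in a ring, states are refilled from behind, and a single surplus process suffices to sustain fairness by a \emph{wave}: with every position singly occupied plus one extra token, the arriving token serves as companion for the resident, which advances and in turn displaces the next resident, so all processes move infinitely often with only about $|B|+1$ copies. The $\disj{all_A}$ gate on the back edge does not prevent this: waves keep all ring positions occupied, so $A$ can traverse its chain and revisit $all_A$ infinitely often, throttling but not blocking circulation. Your mutual-exclusion formula constrains only named pairs, and named tokens placed at distinct positions and shifted uniformly by waves stay distinct forever (their separation around the ring is invariant). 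Concretely, for $|B|=4$, $n=2$, one satisfies your $h$ under unconditional fairness with $8 = 2|B|+n-2$ processes, refuting your hard direction. Second, even setting this aside, the pigeonhole/potential count that would pin the offset to exactly $n-1$ is precisely the step you defer (``one must prove the lower bound that\ldots''), so the proposal is a plan rather than a proof. What is actually needed is a template in which the named processes are forced into configurations where they \emph{cannot} double as the companions that fairness requires --- your own critique shows this is the crux, but neither the paper's chain nor your ring achieves it.
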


\begin{proof}
Consider again process templates $A,B$ in Figure~\ref{fig:disj:fair_tight_propAB_tmpl}, with property $\pexists \bigwedge_{i \in [1..n]} \GF b_1$, i.e., in addition to enabling an infinite run, we need to keep $n$ processes in $b_1$.
\end{proof}
\fi % \ifwithextensions

\begin{restatable}[Monotonicity: Disj, Deadlocks, Fair]{lem}{mono_lem_disj_deadlocks_fair}
\label{mono_lem_disj_deadlocks_fair}
For disjunctive systems, on strong-fair or finite runs:
$$
\forall n\geq |B|+1: (A,B)^{(1,n)} \textit{ has a deadlock} 
\ \Impl\ 
(A,B)^{(1,n+1)} \textit{ has a deadlock}
$$
\end{restatable}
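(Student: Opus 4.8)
The plan is to lift the mimicking construction of the unfair Monotonicity Lemma (Lemma~\ref{mono_lem_disj_deadlocks_unfair}): given a witnessing run $x$ of $(A,B)^{(1,n)}$, I would add one process $B_{n+1}$ that copies (``mimics'') a suitable existing process, and then separately verify that the resulting run $y$ of $(A,B)^{(1,n+1)}$ has the \emph{same} character (finite, or infinite and strong-fair). The workhorse throughout is the monotonicity of disjunctive guards: enabledness of a transition depends only on the set $Set(s)$ of currently occupied states, and adding a process can only enlarge this set; hence every transition copied verbatim from $x$ stays enabled in $y$, and I only ever have to argue that the \emph{extra} process does not wrongly enable a process that should stay disabled.

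First I would treat the finite (global-deadlock) case, where $x$ ends in a state $s_d$ with all processes disabled. Since $n \geq |B|+1 > |B|$, the pigeonhole principle gives a state $q$ occupied by two $B$-processes in $s_d$; I pick one, $B_i$, and let $B_{n+1}$ mimic it in lockstep---whenever $B_i$ fires a transition in $x$, schedule $B_i$ and then $B_{n+1}$ to fire the same transition (the second firing is enabled because the process witnessing the guard differs from both $B_i$ and $B_{n+1}$ and has not moved in between). The run $y$ is finite and its final state $s_d'$ satisfies $Set(s_d') = Set(s_d)$, since $B_{n+1}$ lands in the already-occupied $q$; therefore every process keeps exactly its view from $s_d$ and stays disabled. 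The pigeonhole choice is precisely what prevents $B_{n+1}$ from enlarging the view of $B_i$ itself. No fairness is required, so $y$ is a valid finite witness.

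Next I would treat the infinite, strong-fair (local-deadlock) case. Fix a process $p^\ast$ that is disabled for all moments $\ge m_0$; as it never moves afterwards, its state $q^\ast$ and input $\sigma^\ast$ are constant there, so the set $G$ of guards on its outgoing transitions is fixed. If some $B$-process $B_\infty$ moves infinitely often, I would let $B_{n+1}$ mimic it in lockstep and copy everyone else. After a finite prefix, $B_\infty$---and therefore $B_{n+1}$---only occupies states of $\visInf{B_\infty}{x}$, and no such state lies in any guard of $G$: otherwise, at one of the infinitely many moments $\ge m_0$ when $B_\infty$ sits there, $p^\ast$ would be enabled, contradicting the deadlock. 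Hence $B_{n+1}$ never re-enables $p^\ast$ (nor any other process deadlocked in $x$, by the same argument), so those processes are enabled only finitely often in $y$; meanwhile every infinitely-moving process of $x$ and $B_{n+1}$ itself move infinitely often, making $y$ strong-fair and locally deadlocked. If instead no $B$-process moves infinitely often, then $A$ does and all $B$-processes are deadlocked; I would fall back to the shared-state construction of the global case (pigeonhole on the $n>|B|$ deadlocked $B$-processes), letting $B_{n+1}$ mimic one of them and then freeze in the shared state $q$, so that $Set(s)$ is unchanged at every moment after the deadlock and all $B$-processes stay disabled while $A$ keeps moving.

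The hard part will be the strong-fair local-deadlock case: I must rule out that $B_{n+1}$ enables a deadlocked process at infinitely many moments, which would simultaneously destroy the deadlock and break strong fairness. The decisive observation---that after a finite prefix $B_{n+1}$ only ever sits in states of $\visInf{B_\infty}{x}$, and that these are provably absent from the fixed guard set $G$---is what makes this work, and I expect the careful bookkeeping of the lockstep interleaving (and of the special sub-case where only $A$ moves infinitely often, which the informal sketch omits) to be the most delicate part.
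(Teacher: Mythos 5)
Your proposal is correct and takes essentially the same route as the paper, whose proof of the fair case simply defers to the unfair Monotonicity Lemma: pigeonhole on $n\geq|B|+1$ and mimicking a co-deadlocked process for global deadlocks, and mimicking an infinitely moving process for local deadlocks. You merely make explicit two points the paper's one-line proof leaves implicit --- that states in $\visInf{B_\infty}{x}$ cannot satisfy any guard of a deadlocked process (so the mimic preserves both the deadlock and strong fairness), and the sub-case where only $A$ moves infinitely often, which you correctly resolve by the pigeonhole-and-freeze construction since a $B$-copy cannot mimic $A$.
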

\begin{proof}
See proof of Lemma~\ref{mono_lem_disj_deadlocks_unfair}.
\end{proof}

\begin{restatable}[Bounding: Disj, Deadlocks, Fair]{lem}{DisjDeadlockLemmaFair}
\label{le:disj:fair_tight_deadlock}
For disjunctive systems, on strong-fair or finite runs:
\li
  \- with $c=2|B|-1$ and any $n>c$:
  $$(A,B)^{(1,c)} \textit{ has a local deadlock} \ \Implied\ (A,B)^{(1,n)} \textit{ has a local deadlock}$$
  
  \- with $c=2|B| - 1$ and any $n>c$
  $$(A,B)^{(1,c)} \textit{ has a global deadlock} \ \Implied\ (A,B)^{(1,n)} \textit{ has a global deadlock} $$
  
  \- with $c=2|B|-1$ and any $n>c$:
  $$(A,B)^{(1,c)} \textit{ has a deadlock} \ \Implied\ (A,B)^{(1,n)} \textit{ has a deadlock}$$
\il
\end{restatable}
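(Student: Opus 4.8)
The plan is to prove all three implications at once through a single ``add one process'' step and then to iterate it. Concretely, since $c=2|B|-1 \ge |B|+1$, it suffices to show that for every $m \ge c$ a (local/global) deadlock of $(A,B)^{(1,m)}$ produces one of $(A,B)^{(1,m+1)}$, and then to apply this $n-c$ times starting from the given deadlocked run of $\cutoffsys$. This one-step claim is exactly the Monotonicity Lemma~\ref{mono_lem_disj_deadlocks_fair}; I would recall its construction here and make explicit that it preserves both finiteness (for a global deadlock) and strong fairness (for a local deadlock), so that the iteration stays inside the class of runs considered.

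The one structural point that must be respected throughout is specific to disjunctive guards: a guard is satisfied as soon as \emph{some} other process occupies a listed state, so adding a process can only enable more transitions, never fewer. Hence the added process $B_{m+1}$ must be placed so that at every moment it occupies a local state already occupied by another process; under this invariant $Set(s)$ is unchanged at each moment, no previously false guard becomes true, and in particular no deadlocked process is re-enabled.

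For the global case, start from a finite run $x$ of $(A,B)^{(1,m)}$ ending in a globally disabled state $s_d$. Because $m \ge 2|B|-1 > |B|$ copies of $B$ occupy only the $|B|$ states of $Q_B$, the pigeonhole principle gives a state $q$ holding at least two deadlocked $B$-processes; note that their outgoing guards cannot list $q$, since otherwise each would enable the other. Let $B_{m+1}$ copy the local run of one of them, ending in $q$ with the same final input. Since $q$ was already occupied, $Set(s_d)$ is unchanged, so all existing processes stay disabled, and $B_{m+1}$'s guards are violated for exactly the same reason as those of the process it copies; the extended run is again finite.

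For the local case, take an infinite strong-fair run $x$ of $(A,B)^{(1,m)}$ in which some process is deadlocked from some moment onward. An infinite run always contains a $B$-process $B_\infty$ that moves infinitely often; let $B_{m+1}$ \emph{mimic} $B_\infty$ as in Lemma~\ref{disj:le:NonFairDisjunctiveBounding}, sharing its state at every moment and interleaving the two identical moves. Then $Set(\cdot)$ is preserved, the deadlocked process stays disabled from the same moment on, and $B_{m+1}$ moves infinitely often, so strong fairness survives; when the two copies fire the same guarded transition, the original witness of $B_\infty$'s existential guard (a process distinct from both) still witnesses each move after sequentialization. The combined ``has a deadlock'' statement then follows by applying the finite-run argument to globally deadlocked runs and the infinite-run argument to locally deadlocked ones. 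The main obstacle is precisely the disjunctive subtlety isolated above --- keeping $B_{m+1}$ inside $Set(\cdot)$ --- which the pigeonhole witness (global case) and the choice of an infinitely-moving process to mimic (local case) resolve.
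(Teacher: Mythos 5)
Your proposal proves the converse of the statement. In this lemma the arrow $\Implied$ is ``$\Leftarrow$'': the hypothesis is that the \emph{large} system $\largesys$ with $n>c$ has a (local/global) deadlock, and the conclusion is that the \emph{cutoff} system $\cutoffsys$ with $c=2|B|-1$ has one. So the task is to pass from $n$ processes \emph{down} to $c$ processes, i.e.\ to construct a deadlocked run of the smaller system from a deadlocked run of the larger one. Iterating a one-step ``add a process'' construction starting from a deadlocked run of $\cutoffsys$, as you propose, proves the Monotonicity Lemma --- which the paper states and proves separately (Lemma~\ref{mono_lem_disj_deadlocks_fair}, via Lemma~\ref{mono_lem_disj_deadlocks_unfair}), using exactly your two ingredients: the pigeonhole argument placing the new process on a doubly-occupied deadlocked state in the global case, and letting it mimic an infinitely moving process in the local case. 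Your invariant that the added process always sits in an already-occupied state is sound for that direction, but no amount of adding processes can yield a run of a system with \emph{fewer} processes, so the Bounding Lemma is untouched.

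The genuine difficulty, which the proposal never confronts, is that in disjunctive systems \emph{removing} processes can falsify guards (a guard asserts the existence of a state in some other process), so one cannot simply project the deadlocked run of $\largesys$ onto $c$ of its processes: deleted processes may have been the witnesses both for transitions of the surviving processes and for the \emph{disabledness} certificates of the deadlocked ones. The paper's proof rebuilds a run of $\cutoffsys$ from scratch: it partitions the deadlocked states into $\dead_{<2}(x)$ (deadlocked only while no second process shares the state) and $\dead_2(x)$ (deadlocked even when re-populated), copies the local runs of $A$ and of the processes $\mD_1$ deadlocked in $\dead_{<2}(x)$, floods each state of $\dead_2(x)$ once and each state of $\visInf{\mB\smi\mD_1}{x}\smi\dead_2(x)$ twice, applies flooding with evacuation to $\visFin{\mB\smi\mD_1}{x}$, and then runs the fair extension on the non-deadlocked processes so that strong fairness is preserved without re-enabling the deadlocked ones; a final counting argument over these disjoint sets, with a case analysis on whether $\dead_2(x)$ and $\visFin{\mB\smi\mD_1}{x}$ are empty, reduces the naive bound $2|B|$ to $2|B|-1$. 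None of this machinery is recoverable from the monotonicity construction, so the proposal has a fundamental gap: it proves a different lemma than the one stated.
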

\begin{proof}
\providecommand{\deadOne}[1]{\dead_{<2}(#1)}
\providecommand{\deadTwo}[1]{\dead_2(#1)}

If $\largesys$ has a global deadlock, 
then the fairness does not influence the cutoff, 
and the proof from Lemma~\ref{lem_disj_deadlocks_unfair}, 
case ``Global Deadlocks'', applies giving the cutoff $2|B|-1$. 
Hence below consider only the case of local deadlocks. 

Given a strong-fair deadlocked run $x$ of $\largesys$, 
we first construct a strong-fair deadlocked run $y$ of $\cutoffsys$ 
with $c=2|B|$ and then argue that $c$ can be reduced to $2|B|-1$. 
The construction is similar to that in Lemma~\ref{lem_disj_deadlocks_unfair} 
-- the differences originate from the need to infinitely move 
non deadlocked processes.

Let $\deadOne{x}$ be the set of deadlocked states in the run $x$ 
that are only deadlocked if there is no other process in the same state, 
and let $\mD_1$ be the set of processes deadlocked 
in the run $x$ in $\deadOne{x}$. 
Let $\deadTwo{x}$ be the set of states that are deadlocked 
in the run $x$ even if there is another process in the same state. 

Notes: 
\li
\- $|\mD_1| = |\deadOne{x}| \leq |B|$

\- $\deadOne{x} \cap \deadTwo{x} = \emptyset$

\- $\visFin{\mB\smi \mD_1}{x} \cap \deadOne{x} \neq \emptyset$
   is possible, because a state from $\visFin{\mB\smi \mD_1}{x}$ 
   can first be visited by a process in $\mB \smi \mD_1$, 
   and later deadlocked by the process in $\mD_1$.

\- $\deadTwo{x} \subseteq \visInf{\mB\smi \mD^1}{x}$,
   and hence $\visFin{\mB \smi \mD_1}{x} \cap \deadTwo{x} = \emptyset$.
\il

The construction has two phases. 
The first phase:
\li
\-[a.] for every $p \in \{A\} \cup \mD_1$, set $y(p)=x(p)$ 

\-[b.] for every $q \in \deadTwo{x}$, 
       devote one process of $\cutoffsys$ that floods it

\-[c.] for every $q \in \visInf{\mB \smi \mD_1}{x} \smi \deadTwo{x}$,
       devote two processes of $\cutoffsys$ that flood it

\-[d.] for every $q \in \visFin{\mB \smi \mD_1}{x}$, 
       devote one process of $\cutoffsys$ that floods it
       and then evacuates into $\visInf{\mB \smi \mD_1}{x}$

\-[e.] let other processes (if any) mimic some process from (c)
\il
After this phase all $B$ processes will be in 
$\visInf{\mB \smi \mD_1}{x} \cup \deadOne{x}$. 

The second phase applies to processes in 
$\visInf{\mB \smi \mD_1}{x} \smi \deadTwo{x}$ the fair extension%
\footnote{The fair extension requires run $x$ to be unconditionally-fair, 
          but here we have a run in which all processes that are not deadlocked
          move infinitely often.
          To adapt the construction to this case:
          copy local runs of processes $\{A\} \cup \mD_1$,
          and do not extend local runs of processes that are in 
          state in $\dead_2$.}.

% The resulting configuration sequence is a run of $\cutoffsys$ by correctness of the flooding, evacuation, fair extension, interleaving and destuttering constructions\ak{prove}. Furthermore, for every $q \in \deadOne$ we have exactly one process in $\cutoffsys$ that eventually stays in $q$, and for every $q \in \deadTwo$ -- at least one such process; they are eventually deadlocked in $q$ because the states that appear infinitely often in the run $x$ of $\cutoffsys$ are the same as in the resulting run of $\largesys$.

How many processes does the construction use? 
Note that the sets 
$\deadOne{x} \cup \visFin{\mB \smi \mD_1}{x}$, 
$\deadTwo{x}$, 
$\visInf{\mB \smi \mD_1}{x} \smi \deadTwo{x}$ 
are disjoint, thus:
\begin{align}
& | \visFin{\mB \smi \mD_1}{x} | + |\deadOne{x}| + |\deadTwo{x}| + 2| \visInf{\mB \smi \mD_1}{x} \smi \deadTwo{x} | \leq \label{disj:eq:1} \\
& 2|\visFin{\mB \smi \mD_1}{x} \cup \deadOne{x}| + |\deadTwo{x}| + 2| \visInf{\mB \smi \mD_1}{x} \smi \deadTwo{x} | \leq \label{disj:eq:2} \\
& |B| + |\visFin{\mB \smi \mD_1}{x} \cup \deadOne{x}| + | \visInf{\mB \smi \mD_1}{x} \smi \deadTwo{x} | \leq 2|B|  \nonumber
\end{align}
Let us reduce the estimate to $\leq 2|B|-1$:
\li
  \- assume that $\deadTwo{x} = \emptyset$ 
     (otherwise, Eq.\ref{disj:eq:1} and the sets disjointness give $2|B|-1$)

  \- assume that $ \visFin{\mB \smi \mD_1}{x} \neq \emptyset$ 
     (the other case together with eq.\ref{disj:eq:2}, 
      the sets disjointness, and the first item gives $2|B|-1$)

  \- hence, the construction in step (d) evacuates the process in 
     $q \in \visFin{\mB \smi \mD_1}{x}$ 
     into 
     $ \visInf{\mB \smi \mD_1}{x} \smi \deadTwo{x}$. 
     Hence modify step (c) of the construction 
     and for $q$ devote a single process of $\cutoffsys$ that floods it. 
     This will give $\leq 2|B|-1$.
\il
This concludes the proof.

\end{proof}

\begin{restatable}[Disj, Deadlocks, Fair]{tightness}{TightDisjDeadlockLemmaFair}
\label{obs:disj:fair_tight_deadlock}
The cutoff $c=2|B|-1$ for deadlock detection in disjunctive systems on strong-fair or finite runs is tight, i.e.: for any $k$ there are templates $(A,B)$ with $|B|=k$ such that:
$$
(A,B)^{(1,2|B|-2)} \textit{ does not have a deadlock, but } (A,B)^{(1,2|B|-1)} \textit { does}.
$$
\end{restatable}

\begin{proof}
The figure below shows process templates $(A,B)$ such that any system $\largesys$ with $n\leq 2|B|-2$ does not deadlock on strong-fair runs, but larger systems do.
\begin{figure}[h]
\centering
\vspace{-10pt}
\subfloat[Template A]{
\centering
\scalebox{0.75}{% !TEX root = table.tex
\begin{tikzpicture}[node distance=1.8cm,inner sep=1pt,minimum size=0.5mm,->,>=latex]
% \tikzset{every state/.style={circle,minimum size=.5cm,inner sep=.03cm}}

\node[initial left, state] (a_1) {};
% \node[state] (a_2) [right of=a_1] {$a_2$};
\node (dots) [right of=a_1] {$\ldots$};
\node[state] (a_k) [right of=dots] {}; 
\node[state] (r) [below=0.8cm of dots] {$r_A$};

\path (a_1) edge [above] node {$\disj{1_B}$} (dots);
\path (dots) edge [above] node {$\disj{{k-1}_B}$} (a_k);
\path (a_1) edge [bend right=15] node {} (r);
\path (a_k) edge [bend left=15] node {} (r);
\path (dots) edge [above, dotted] node {} (r);
\path (a_k) edge [bend right=35,above] node {$\disj{k_B}$} (a_1);

% \path (a_1) [loop above] edge [right] node {$\disj{a_1}$} (a_1);
% \path (a_2) [loop above] edge [right] node {$\disj{a_2}$} (a_2);
% \path (a_k) [loop above] edge [right] node {$\disj{a_k}$} (a_k);
\path (r) [loop below] edge [right] node {} (r);

\end{tikzpicture}}
\label{fig:disj:tight_fair_deadlock_tmplA}
}
\hspace{1cm}
\subfloat[Template B]{
\centering
\scalebox{0.75}{% !TEX root = table.tex
\begin{tikzpicture}[node distance=1.8cm,inner sep=1pt,minimum size=0.5mm,->,>=latex]
% \tikzset{every state/.style={circle,minimum size=.5cm,inner sep=.03cm}}

\node[state,initial left] (b_1) {$1_B$};
\node (dots) [right of=b_1] {$\ldots$};
\node[state] (b_k) [right of=dots] {$k_B$}; 
\node[state] (r) [below=0.8cm of dots] {};

\path (b_1) edge [above] node {$\disj{1_B}$} (dots);
\path (dots) edge [above] node {$\disj{{k-1}_B}$} (b_k);
\path (b_1) edge [bend right=15] node {} (r);
\path (b_k) edge [bend left=15] node {} (r);
\path (dots) edge [above, dotted] node {} (r);

\path (b_1) [loop above] edge [right] node {$\disj{1_B}$} (b_1);
\path (b_k) [loop above] edge [right] node {$\disj{k_B}$} (b_k);
\path (r) [loop below] edge [right] node {$\disj{r_A}$} (r);
\path (dots) [loop above,dotted,in=60,out=120,distance=12mm] edge [right] node {} (dots);

\end{tikzpicture}}
\label{fig:disj:tight_fair_deadlock_tmplB}
}
\label{fig:disj:fair_tight_dead_tmpl}
\end{figure}
\end{proof}

\ifwithextensions   %%%%%%%%%%%%%%%%%%%%%%%%%%%%%%%%%%%%%%%%%%%%%
\subsection{Label-dependent Cutoffs for Disjunctive Systems}
\sj{To Do: compare to synchronization skeletons}

The results presented in Lemmas~\ref{disj:le:NonFairDisjunctiveBounding} and \ref{disj:le:FairDisjunctiveBounding} give cutoffs that depend only on the size of templates, without considering the number of output variables that can be used to distinguish states in transition guards. In the following, we analyze the effect of a limited number of output variables on cutoffs. Let $L_B$ be the set of state labels of process template $B$, and $\labelings_B = \bbB^{L_B}$ the set of labelings of $B$.

\begin{restatable}[Bounding Lemma, Label-dependent, Unfair]{lem}{DisjBoundingLemmaLabel}
\label{disj:le:NonFairDisjunctiveBoundingLabels}
For disjunctive systems:
\begin{align*}
&(A,B)^{(1,> \card{\labelings_B} + 2)} \models \pexists h(A,B_1) &
&~\iff~& &
(A,B)^{(1,\card{\labelings_B} + 2)} \models \pexists h(A,B_1).
\end{align*}
\end{restatable}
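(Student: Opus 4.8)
The plan is to re-run the bounding argument of Lemma~\ref{disj:le:NonFairDisjunctiveBounding} and the monotonicity argument of Lemma~\ref{disj:le:NonFairDisjunctiveMono}, but tracking the \emph{labelings} of states instead of the states themselves. The one structural fact I would fix first is that in a label-based disjunctive system every guard $g$ is \emph{label-closed}: if $q\in g$ and $L(q')=L(q)$ then $q'\in g$. Hence whether a disjunctive guard holds for $p$ in a global state $s$ depends only on the set of labelings $\{L(s(p')) \| p'\neq p\}$ that occur in $s$, not on the exact states. The ``$\Leftarrow$'' direction then needs nothing new: given a witnessing run of the cutoff system, copy it and freeze each extra process in $\init_B$; adding processes can only make disjunctive guards easier to satisfy, and $\pexists h(A,B_1)$ constrains only $A$ and $B_1$, so the run still witnesses the property in every larger system.

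For the nontrivial ``$\Rightarrow$'' (bounding) direction I would start from a run $x$ of $\largesys$ with $n>\card{\labelings_B}+2$ and $x\models\pexists h(A,B_1)$, and construct $y$ of \cutoffsys with $c=\card{\labelings_B}+2$ and $y(A,B_1)\simeq x(A,B_1)$ exactly as in Lemma~\ref{disj:le:NonFairDisjunctiveBounding}, with the sole change that the flooding step floods \emph{labelings} rather than states: for each labeling $\ell$ that occurs in $x$, i.e. $\ell=L(s_m(B_i))$ for some $m,i$, let $m_\ell$ be the first such moment and $B_i$ the witnessing process, and devote one copy that mimics $x(B_i)\slice{1}{m_\ell}$ and then freezes. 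Together with the copy of $B_1$ and one infinitely-moving copy $B_\infty$ this uses at most $\card{\labelings_B}+2$ copies of $B$; simultaneous moves are sequentialised via \interleave and idle global steps removed via \destutter, which does not affect the truth of $h(A,B_1)$.

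Correctness is the label-version of the flooding argument: if at moment $m$ in $x$ some $B$-process carries labeling $\ell$ then $m_\ell\le m$, so by moment $m$ in $y$ the flooder dedicated to $\ell$ has reached and frozen at a state with labeling $\ell$; consequently the set of $B$-labelings present at moment $m$ in $y$ contains that of $x$, while the $A$-state agrees because $y(A)=x(A)$. Label-closedness then yields that every transition copied into $y$ has its guard satisfied, just as in $x$. The step I expect to require the most care is the \emph{non-reflexivity} of disjunctive guards: a moving process must see the witnessing labeling in \emph{another} process, so collapsing states to labelings must not accidentally merge the moving process with its own witness. This resolves because the flooder supplying $\ell$ is already frozen at moment $m\ge m_\ell$, hence is not the process moving at $m$; confirming this distinctness (and that one dedicated flooder per visited labeling always exists) is the crux of turning the $\card{B}$ bound into the $\card{\labelings_B}$ bound.
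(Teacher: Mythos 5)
Your proposal is correct and matches the paper's own proof essentially verbatim: the paper likewise reuses the state-based bounding construction of Lemma~\ref{disj:le:NonFairDisjunctiveBounding}, replacing flooding of states by flooding of one representative local run per visited \emph{labeling}, and keeps the copies of $A$, $B_1$ and an infinitely moving $B_\infty$ to reach the $\card{\labelings_B}+2$ bound. Your additional explicit observations --- label-closedness of guards and the distinctness of the frozen flooder from the moving process (non-reflexivity) --- are exactly the points the paper leaves implicit in its ``works just as before'' step, so they strengthen rather than diverge from the argument.
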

\iffinal
\else
\begin{proof}
The proof goes along the same lines as the proof of Lemma~\ref{disj:le:NonFairDisjunctiveBounding}. The main difference is that in the flooding construction, we only need to consider states with different labelings. That is, the flooding construction is modified, as to only include one local run for every state \emph{labeling} that is visited in the original run. Thus, the maximal number of local runs needed to ensure that the behavior of $A^1$ and $B^1$ is possible is $\card{\labelings_B}$.

The rest of the proof works just as before. In particular, we also need the local runs of $A^1$ and $B^1$, as well as possibly a copy $B^\infty$ of a local run that is infinite. Thus, the cutoff for process template $B$ is $\card{\labelings_B} + 2$.
\end{proof}
\fi

\begin{restatable}[Deadlock Detection, Label-dependent, Unfair]{lem}{DisjDeadlockLemmaLabel}
\label{le:disj:deadlocks:label}
If the open disjunctive system $(A,B)^{(1,2\card{\labelings_B})}$ has no local deadlocks, then no disjunctive system $(A,B)^{(1,n)}$ with $n \geq 2\card{\labelings_B}$ has local deadlocks.
\end{restatable}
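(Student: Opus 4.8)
The plan is to prove the contrapositive, which is exactly a Bounding-lemma statement for local deadlocks: if some $\largesys$ with $n \geq 2\card{\labelings_B}$ has a locally deadlocked run $x$, then the cutoff system $(A,B)^{(1,2\card{\labelings_B})}$ has a locally deadlocked run $y$. I would reuse the flooding-and-evacuation construction template of Lemma~\ref{lem_disj_deadlocks_unfair}, but carry it out over \emph{labelings} rather than over individual states, exactly as the property version was adapted to labels in Lemma~\ref{disj:le:NonFairDisjunctiveBoundingLabels}. This is sound precisely because, in this subsection, disjunctive guards are assumed to depend only on the labels of other processes: a guard is satisfied as soon as \emph{some} other process carries a labeling in the guard, so for both enabling and disabling transitions only the \emph{set of labelings currently present} in the system matters, not the exact states. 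Hence the whole analysis can be pushed through with $\card{\labelings_B}$ playing the role that $\card{B}$ played before.

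First I would fix a locally deadlocked process $p$ and a moment $d$ after which $p$ is permanently disabled, and split into the same cases as Lemma~\ref{lem_disj_deadlocks_unfair}, according to whether the deadlocked process is $A$ or a $B$-process and whether $A$ moves infinitely often; by symmetry take $p = B_1$ when $p$ is a $B$-process. I would then classify the labelings carried by the non-$p$ processes \emph{after} $d$ into those carried infinitely often and those carried only finitely often, tracking labelings instead of states. The run $y$ is built by copying the local runs of $p$ and of $A$; for each infinitely-carried labeling, devoting flooding process(es) that keep that labeling present forever; for each finitely-carried labeling, devoting a process that floods it and then evacuates into an infinitely-carried labeling; and finally repairing the interleaving semantics with \interleave and \destutter as in Lemma~\ref{disj:le:NonFairDisjunctiveBounding}.

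For correctness I would use two monotonicity facts about disjunctive, label-based guards. First, enabledness is monotone upward: adding labelings to the set of present labelings cannot disable a transition, so every transition that a moving process copies from $x$ stays enabled in $y$, because flooding guarantees that the full set of labelings the original process could see is present. Second, disabledness is monotone downward: removing labelings cannot enable a transition; since after the deadlock moment the set of labelings present in $y$ is contained in the set of labelings present among non-$p$ processes in $x$ after $d$, and $p$ was disabled against the latter, $p$ remains disabled in $y$. Openness is handled by replaying to $p$ the same input that kept it disabled in $x$, and by freely choosing whatever inputs the movers need. The number of copies of $B$ used by the construction is at most $2\card{\labelings_B}$.

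The step I expect to be the main obstacle is the counting that produces the factor of two, together with the argument that collapsing states to labelings never accidentally populates a labeling belonging to $p$'s ``dead guards''. Concretely, I must ensure that the labelings used for flooding and as evacuation targets are exactly those occurring after $d$ among non-$p$ processes (hence safe for the deadlock) while still covering every labeling the movers need, and I must check that the evacuation phase, the extra infinitely-moving process required to make $y$ infinite, and the case where two processes must cover the same labeling (one held statically to satisfy a mover's guard, one participating in the cycling) together stay within $2\card{\labelings_B}$ copies of $B$. As with the corresponding count in Lemma~\ref{lem_disj_deadlocks_unfair}, I expect this bound to be safe but not necessarily tight; pinning down a smaller label-dependent constant, or proving tightness, would require a separate dedicated construction.
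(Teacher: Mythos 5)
There is a genuine gap, and it sits exactly at the step you waved through as ``flooding with evacuation over labelings''. In the state-based construction of Lemma~\ref{lem_disj_deadlocks_unfair}, the evacuating process is built by concatenating a prefix of $x(B_{\first_q})$ that ends in $q$ with a suffix of $x(B_{\last_q})$ that starts in $q$; this gluing is legitimate only because both witnesses occupy the \emph{same local state} $q$, and the template's transition relation is defined on states. When you collapse to labelings this breaks: the process $B^{\witfirst_L}$ that carries labeling $L$ first may sit in a state $q$, while the process $B^{\witlast_L}$ that leaves $L$ last departs from a \emph{different} state $q'$ with the same labeling, and a process parked in $q$ has no way to execute $B^{\witlast_L}$'s exit path from $q'$. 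Label-based guards make enabling and disabling depend only on the set of present labelings (your two monotonicity facts are correct), but the \emph{local dynamics} remain state-based, so a single flood-and-evacuate process per finitely carried labeling does not exist in general.

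The paper's proof confronts this head-on, and the repair costs more than your $2\card{\labelings_B}$ budget. The witness holding labeling $L$ in state $q$ may itself be forced to leave $q$, because its exit transition may require another finitely visited labeling $L'$ that is evacuated earlier; to keep $L$ present one then needs an additional witness $B^{\witsecond_L}$ in a state with labeling $L$ whose evacuation does not depend on $L'$, and this cascades along the order in which the finitely visited labelings are evacuated: the labeling evacuated $i$-th may need up to $i$ copies. The paper's count is roughly $\card{\visited_\inf} + \sum_{i=1}^{\card{\visited_\fin}} i$, i.e., $1 + \card{\labelings_B}(\card{\labelings_B}+1)/2$ in the worst case --- quadratic in $\card{\labelings_B}$, and in fact larger than the $2\card{\labelings_B}$ stated in the lemma itself. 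So the constant you flagged as ``the main obstacle'' is not merely delicate: your construction cannot achieve it, and even the paper's corrected construction does not certify it. To salvage your write-up you would need either the witness-chain bookkeeping above together with the weaker quadratic cutoff, or a genuinely new idea for evacuating a labeling from an arbitrary state that carries it.
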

\iffinal
\else
\begin{proof}
The proof works in the same way as the proof of Theorem~\ref{le:disj:deadlocks}, except:
\begin{enumerate}
\item instead of visited states we need to consider visited state labelings
\item we keep $x(A^1,B^1)$, wlog assuming that one of them is locally deadlocked in $x$
\item we use flooding for all state labelings
\item The main technical difference is that for \emph{finitely} often visited state labelings we need a modified evacuation construction, with multiple copies per labeling: 
\begin{itemize}
\item for a process $B^{\witfirst_L}$ that visits label $L$ first, and process $B^{\witlast_L}$ that visits labeling $L$ last in $x$, we cannot be sure that we can append the path from $B^{\witlast_L}$ to the path of $B^{\witfirst_L}$, since they may be in different states $q,q'$ with labeling $L$. 
\item Furthermore, another process with labeling $L'$ may be required to leave state $q$, and $L'$ may itself be a finitely visited state labeling. If $L'$ is evacuated before $B^{\witlast_L}$ reaches state $q'$ in $x$, then $B^{\witfirst_L}$ also needs to leave $q$. To ensure that we will still have a process with labeling $L$ (if it exists in $x$), we need another copy $B^{\witsecond_L}$ that is in a state $q''$ labeled with $L$ after $L'$ is evacuated (and thus, its evacuation cannot depend on labeling $L'$). If such a process does not exist in $x$, then it will also not be needed in the cutoff system. Since the evacuation from $q''$ may depend on another labeling $L''$, we may have to repeat the procedure, adding more copies as witnesses for labeling $L$.
\item To compute how many copies of $B$ are needed in total, let $\{L_1,\ldots,L_n\} = \visited_{fin}$, where $L_i$ is evacuated before $L_j$ in $x$ if $i < j$. Then for $L_1$ we only need one copy of $B$, since evacuation of $L_1$ cannot depend on any labelings that are evacuated before. In general, for labeling $L_i$ we need $i$ copies of $B$\ak{why $i$ is enough? why cannot happen smth like this -- }, since there are $i-1$ labelings that are evacuated before $L_i$, and for each of these we may need an additional copy of $B$ as a witness for $L_i$.
\end{itemize}
\item Overall, we thus need (roughly) $\card{\visited_{inf}} + \sum_{i=1}^{\card{\visited_{fin}}} i$ copies of $B$. The worst case is $\visited_{fin} = \labelings_B \setminus \{L\}$, since at least one labeling $L$ needs to be present infinitely often. Thus, the cutoff is $1 + \frac{\card{\labelings_B}\cdot(\card{\labelings_B}+1)}{2}$.
\end{enumerate}
\end{proof}
\fi

\begin{restatable}[Bounding Lemma, Unconditional Fairness, Label-dependent]{lem}{DisjBoundingLemmaLabelFair}
\label{disj:le:FairDisjunctiveBoundingLabels}
For unconditionally-fair runs of disjunctive systems:
\begin{align*}
&(A,B)^{(1,> 3 \card{\labelings_B} - 1)} \models \pexists h(A,B_1) &
&~\iff~& &
(A,B)^{(1,3 \card{\labelings_B} - 1)} \models \pexists h(A,B_1).
\end{align*}
\end{restatable}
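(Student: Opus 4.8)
The plan is to adapt the proof of the state-based Bounding Lemma under unconditional fairness (Section~\ref{sec:ideas-disj-fair}) to the label-abstracted setting, exactly as Lemma~\ref{disj:le:NonFairDisjunctiveBoundingLabels} adapts the unfair bounding lemma: wherever the original construction devotes a process to a visited \emph{state}, we instead devote a process to a visited \emph{labeling}, which is sound because disjunctive guards only test labelings. The non-trivial direction is from the large system down to the cutoff system; the converse, from the cutoff up to larger systems, is the fair monotonicity argument (let each added copy mimic an infinitely-moving $B$-process, as in the state-based fair case). So first I would fix $c = 3\card{\labelings_B}-1$ and an unconditionally-fair run $x$ of $\largesys$ with $n>c$ and $x \models \pexists h(A,B_1)$, and aim to build an unconditionally-fair run $y$ of $\cutoffsys$ with $y(A,B_1)$ stuttering-equivalent to $x(A,B_1)$, after which \interleave and \destutter restore interleaving semantics while preserving the $\LTLmX$ property.

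The construction keeps the two phases of the state case. Writing $\visInf{\mB}{x}$ and $\visFin{\mB}{x}$ for the sets of \emph{labelings} visited infinitely/finitely often, there is a moment $m$ after which only labelings of $\visInf{\mB}{x}$ occur. In the \textbf{recurrent phase} I would keep representatives alive in $y$ for each labeling of $\visInf{\mB}{x}$ and drive them with the fair extension, so that every such labeling is always present and every process moves infinitely often; because guards test only labelings, the turn-taking argument of the fair extension goes through verbatim even though the two processes assigned to a labeling may sit in distinct states carrying that label. In the \textbf{transient phase} (up to moment $m$) I would use label-based flooding-with-evacuation to make every labeling of $\visFin{\mB}{x}$ present during the window in which $x(A)$ or $x(B_1)$ needs it, and then move the responsible processes into $\visInf{\mB}{x}$ so they can join the recurrent machinery.

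The delicate point, and the source of the extra copies over the state-based cutoff $2\card{B}$, is evacuation under label abstraction: a \emph{single} process cannot in general both remain in labeling $\ell$ throughout $\ell$'s window and afterwards follow a valid local path out to $\visInf{\mB}{x}$, because the first- and last-visiting witnesses of $\ell$ in $x$ may occupy different states carrying $\ell$, and a path exiting one of them may traverse states it cannot linger in. I would therefore allow a bounded number of additional copies per $\visFin$ labeling — a witness that establishes $\ell$ early and one that carries $\ell$ to the end of its window before evacuating — and then account for the recurrent pairs together with these transient witnesses to reach $3\card{\labelings_B}-1$, the $-1$ arising from the same optimization as in the state-based fair extension where one of the $\qstar$-processes can be dropped.

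The step I expect to be the main obstacle is showing that this evacuation stays \emph{linear} in $\card{\labelings_B}$, rather than blowing up to the quadratic $1+\frac{\card{\labelings_B}(\card{\labelings_B}+1)}{2}$ bound of the label-dependent deadlock lemma (Lemma~\ref{le:disj:deadlocks:label}). There the quadratic chain arises because deadlocked processes must be held \emph{statically} in their labels while other labels are evacuated around them, so a witness for one label may depend on a not-yet-evacuated label. Under unconditional fairness no process is ever frozen, so I would argue that the labels of $\visFin{\mB}{x}$ can be vacated in $y$ following the very order in which they disappear in $x$ — an order that is realizable precisely because $x$ is itself a legal run in which they disappear — thereby breaking the circular dependency and bounding the number of simultaneously-needed witnesses by a constant per label. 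Verifying this order-preservation, and that evacuating one label never forces reviving an already-evacuated one, is where the real work lies.
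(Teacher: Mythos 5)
Your overall architecture matches the paper's own proof of this lemma: both adapt the two-phase fair construction of Lemma~\ref{disj:le:FairDisjunctiveBounding} by replacing visited states with visited labelings, both reuse label-based flooding together with the modified evacuation of the label-dependent deadlock lemma for $\visited_\fin$, and both keep the fair extension for $\visited_\inf$; you also correctly locate the crux in evacuation under label abstraction, where the first- and last-visiting witnesses of a labeling occupy different states carrying that label. But your proposed resolution of the crux --- that unconditional fairness lets you vacate the labelings of $\visited_\fin$ in the very order in which they disappear in $x$, with only a constant number of witnesses per labeling --- is exactly the step that fails. The quadratic chain in the deadlock lemma is not caused by processes being ``frozen statically''; it is caused by the \emph{desynchronization} that stuttering introduces, and your construction stutters too: a witness holding labeling $\ell$ sits in one \emph{fixed} state $q$ for part of $\ell$'s window while the original process it copied has moved on. Every exit transition from $q$ was enabled in $x$ only at particular earlier moments, under the label-presence holding then; by the time your witness is permitted to leave, a labeling $\ell'$ needed by that exit may already be gone --- and this can happen even if $\ell'$ disappears \emph{after} $\ell$ in your global order, because what matters is the moment at which the exit from the specific state $q$ was taken in $x$, not the moment at which $\ell$'s window closes. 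Fairness of $x$ guarantees that the \emph{original} processes keep moving; it gives your desynchronized witness no enabled transition out of $q$. So the circular dependency is not broken and the chain of extra witnesses reappears.

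This is in fact what the paper's own proof concedes: for every $L\in\visited_\inf$ it budgets up to $\card{\visited_\fin}+1$ copies that reach $L$-labeled states at different times (precisely because all labelings of $\visited_\fin$ may be evacuated before $B^{\witlast_L}$ reaches its loop state $q'$), plus the chain of $i$ copies for the $i$-th evacuated labeling of $\visited_\fin$, for a total of $\frac{\card{\visited_\fin}\cdot(\card{\visited_\fin}+1)}{2}+(\card{\visited_\fin}+3)\cdot\card{\visited_\inf}$ copies --- a bound that is quadratic in $\card{\labelings_B}$ and exceeds $3\card{\labelings_B}-1$ whenever $\visited_\fin\neq\emptyset$. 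The stated cutoff is reached in the paper only in the extreme case $\visited_\fin=\emptyset$, giving $3\card{\visited_\inf}$ and then $3\card{\labelings_B}-1$ via the observation that for at least one labeling the first-visiting local run also visits it infinitely often (the analogue of dropping the $\qstar$-process in the state-based fair extension); the derivation for the general case trails off, and the lemma is excluded from the compiled version of the paper. So your instinct that the real work lies in bounding evacuation linearly is correct, but the order-preservation argument you offer does not do that work, and nothing in the paper's construction as recorded supports a linear bound either.
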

\iffinal
\else
\begin{proof}
The proof goes along the same lines as the proof of Lemma~\ref{disj:le:FairDisjunctiveBounding}, with modified flooding and fair extension constructions. 

{\bf Original construction.} Remember that for systems with fairness, the flooding construction separates states of the originial run into those that are visited finitely often ($\visited_{fin}$) and those that are visited infinitely often ($\visited_{fin}$). Then, we consider a moment in time $\pref$, which is the first moment such that no states from $\visited_{fin}$ will be visited anymore. The original construction then ensured that the transitions of all local runs are possible until moment $\pref$, and that at this moment there are at least two processes in any given state from $\visited_{inf}$.

Starting from $\pref$, the original proof continues with the \emph{fairing extension} construction that ensures that all local runs can move infinitely often. For the combination of flooding and fair extension, it is crucial that for any state $q$ in $\visited_{inf}$, we combine a prefix from a local run that reaches $q$ first with a postfix (from a possibly different local run) that reaches $q$ infinitely often.

{\bf Modified construction.}
Like in the proofs of Lemmas~\ref{disj:le:NonFairDisjunctiveBoundingLabels} and \ref{le:disj:deadlocks:label}, we define $\visited$ and flooding based on labelings. 

For labelings in $\visited_\fin$, we use the modified evacuation construction from Lemma~\ref{le:disj:deadlocks:label}. That is, we order labelings $\{L_1,\ldots,L_n\} = \visited_\fin$ according to the moment they are evacuated, and add $i$ copies of $B$ for each $L_i$, such that we have a copy that is in a state with labeling $L_i$ before and after all labels that are evacuated before have been evacuated. All of these process copies eventually move to a loop with only labelings from $\visited_\inf$, just like they did in $x$.

For labelings $L \in \visited_\inf$, we use a combination of the original fair extension construction with this modified evacuation construction. The reason we need evacuation is the same as in Lemma~\ref{le:disj:deadlocks:label}: the state with labeling $L$ that is reached first in $x$ may be different from the state $q'$ with labeling $L$ in which some process eventually loops. Therefore, we add one copy $B^{\witfirst_L}$ that takes the shortest path to $L$, and two copies of $B^{\witlast_L}$ that go to a state $q'$ in which there is a loop coming back to $q'$ (that only depends on labelings in $\visited_\inf$). Like in Lemma~\ref{le:disj:deadlocks:label}, we may have the case that a labeling from $\visited_\fin$ needed to leave $q$ will be evacuated before $q'$ is reached by $B^{\witlast_L}$, and therefore we in general need multiple copies of $B$ that reach states with $L$ at different times. Since $L \not\in \visited_\fin$, in general all labelings from $\visited_\fin$ could be evacuated before $q'$ is reached. Thus, for every labeling $L \in \visited_\inf$, we need up to $\card{\visited_\fin}+1$ copies of $B$ that reach $L$ and eventually leave it again, and two additional copies for the fair extension that ensures that $L$ will always be present after $\pref$.

Finally, we use the interleaving construction as before.

In summary, we need $\sum_{i=1}^{\card{\visited_\fin}} i = \frac{\card{\visited_\fin} \cdot (\card{\visited_\fin}+1)}{2}$ copies of $B$ for the labelings in $\visited_\fin$, and $(\card{\visited_\fin}+3)\cdot\card{\visited_\inf}$ copies of $B$ for the labelings in $\visited_\inf$. In the worst case, this means that the cutoff for process template $B$ is (bounded by) \ldots \remove{$3 \card{\labelings_B}$. By a closer analysis, we can see that for at least one labeling $l$ in $\visited_{inf}$, the local run that visits $l$ first must also visit it infinitely often, and we only need one additional copy of this run. Thus, we obtain a cutoff of $3 \card{\labelings_B} - 1$.}
\end{proof}
\fi

\begin{restatable}[Deadlock Detection, Strong Fairness, Label-dependent]{lem}{DisjDeadlockLemmaLabelFair}
\sj{wait for clarification in fair state-based case}
\end{restatable}
\fi %ifwithextensions   %%%%%%%%%%%%%%%%%%%%%%%%%%%%%%%%%%%%%%%%%%%%%%%%%

\section{Cutoffs for Conjunctive Systems}
\label{sec:app-conj}
%
%For convenience in the proofs, we {\bf inverse the meaning of conjunctive guard}, 
%i.e., given system state $s$, process $p$, and guard $guard \in Q_B \cupdot Q_A$:
%$$
%(s,p) \models guard 
%\ \iff\ 
%\forall p' \in \{A,B_1,\ldots,B_n\} \smi \{p\}:\ \state_{p'} \not\in guard.
%$$

\subsection{Conjunctive Systems without Fairness}

\begin{restatable}[Monotonicity: Conj, Props, Unfair]{lem}{ConjMonotonicityLemma}
\label{le:ConjMonotonicityLemma}
For conjunctive systems,
\begin{align*}
\forall n \geq 1:\ 
(A,B)^{(1,n)} \models \pexists h(A,B_1)
\ \ \Impl \ \ 
(A,B)^{(1,n+1)}\models \pexists h(A,B_1).
\end{align*}
\end{restatable}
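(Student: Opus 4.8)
The plan is to mirror the one-line disjunctive monotonicity argument of Lemma~\ref{disj:le:NonFairDisjunctiveMono}, taking care of the single extra subtlety that distinguishes the conjunctive case. Given a run $x$ of $\largesys$ witnessing $\pexists h(A,B_1)$, I would construct a run $y$ of $(A,B)^{(1,n+1)}$ by copying every local run of $x$ verbatim and letting the new process $B_{n+1}$ stay in its initial state $\init_B$ forever. Formally, set $y(p) = x(p)$ for every $p \in \{A, B_1, \ldots, B_n\}$, never schedule $B_{n+1}$, and at each moment let the same process move in $y$ that moved in $x$. Since $B_{n+1}$ never moves, I can keep its input constant, so the convention that the environment changes a process's input only when it moves is respected, and the moments of $y$ correspond one-to-one with those of $x$; in particular $y$ is infinite whenever $x$ is.

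The only thing that needs verification is that stationing $B_{n+1}$ in $\init_B$ does not disable any transition taken in $x$. Under the conjunctive interpretation, a transition with guard $g$ taken by a process $p$ is enabled at a global state $s$ iff $s(p') \in g$ for every other process $p'$. In $y$ the set of ``other processes'' gains exactly $B_{n+1}$, whose state is always $\init_B$, so the only new requirement is the conjunct $\init_B \in g$. By the standing assumption that $\init_B$ lies in every conjunctive guard (it acts as a neutral state), this conjunct holds automatically. Hence every transition that was enabled at its moment in $x$ remains enabled at the corresponding moment in $y$, so $y$ is a genuine run of $(A,B)^{(1,n+1)}$.

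Finally, since $y(A,B_1) = x(A,B_1)$ and the formula $h(A,B_1)$ refers only to the local states and inputs of $A$ and $B_1$, the run $y$ satisfies $h(A,B_1)$ exactly when $x$ does. Therefore $(A,B)^{(1,n+1)} \models \pexists h(A,B_1)$, which is the claim.

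I do not expect a genuine obstacle here; this is the easy direction of monotonicity and requires no lower bound on $n$. The one point deserving attention — and the only place where the conjunctive setting differs from the disjunctive one — is the guard-enabledness check, which rests entirely on the neutral-state assumption on $\init_B$. Without it, keeping $B_{n+1}$ in $\init_B$ could falsify a guard that does not contain $\init_B$, and the construction would break. This is precisely why the harder fairness version, in which $B_{n+1}$ cannot simply stand still, needs the extra ``shared local run'' machinery discussed in Sect.~\ref{sec:ideas-conj-fair}.
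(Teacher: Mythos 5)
Your proof is correct and matches the paper's own argument exactly: the paper also keeps the new process $B_{n+1}$ stuttering in $\init_B$ forever, relying on the standing assumption that $\init_B$ is contained in every conjunctive guard. Your writeup simply spells out the guard-enabledness check that the paper's one-line proof leaves implicit.
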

\begin{proof}
Let the new process stutter in $\init$ state.
\end{proof}

\begin{restatable}[Bounding: Conj, Props, Unfair]{lem}{ConjBoundingLemma}
\label{le:ConjunctiveBoundingLemma}
For conjunctive systems,
\begin{align*}
\forall n \geq 2:\ 
(A,B)^{(1,2)} \models \pexists h(A,B_1)
\ \ \Implied \ \ 
\largesys \models \pexists h(A,B_1).
\end{align*}
\end{restatable}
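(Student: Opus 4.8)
The plan is to reuse the construction sketched in Section~\ref{sec:ideas-conj-nofair}: copy the local runs of $A$ and of $B_1$ from a witnessing run of $\largesys$ into the cutoff system, and employ the single remaining copy $B_2$ only to keep the run infinite. Following the convention used for the disjunctive properties lemma (Lemma~\ref{disj:le:NonFairDisjunctiveBounding}), I take the witness $x = (s_1,e_1,p_1),(s_2,e_2,p_2),\ldots$ of $\largesys$ (with $n \geq 2$ and $x \models \pexists h(A,B_1)$) to be infinite. I set $y(A) = x(A)$ and $y(B_1) = x(B_1)$, and split on which process moves infinitely often in $x$: if $A$ or $B_1$ does, I let $B_2$ stutter in its initial state forever, i.e.\ $y(B_2) = (\init_B)^\omega$; otherwise some $B_j$ with $j \geq 2$ moves infinitely often, and I let $B_2$ mimic it by $y(B_2) = x(B_j)$. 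The three local runs are aligned by the global moment of $x$: at moment $m$ the mover $p_m$ of $x$ either is one of the copied processes, in which case its copy moves in $y$, or it is a non-copied process, in which case no copy of $(A,B)^{(1,2)}$ moves (a global stutter step).

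The central correctness claim is that every transition taken in $y$ is enabled, and this is precisely the monotonicity of conjunctive guards: a conjunctive guard $g$ is a universal statement about the \emph{other} processes, so deleting processes can only make it easier to satisfy. Concretely, when a copied process $p$ takes its transition at the aligned moment $m$, the two companion processes of $(A,B)^{(1,2)}$ occupy states that were occupied by processes in the global state $s_m$ of $x$ (namely $s_m(A)$, $s_m(B_1)$ or $s_m(B_j)$, or $\init_B$). Since in $x$ every process different from $p$ lay inside $g$, in particular these companions do, so the guard still holds in the smaller system. The stuttering copy $y(B_2) = (\init_B)^\omega$ is covered by the standing assumption that $\init_B$ belongs to every conjunctive guard, hence it never disables a transition.

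Because exactly one process moves at each moment of $x$, at most one copy moves at each aligned moment, so no genuine interleaving conflict arises; I would simply delete the global stutter steps via $\destutter$ (and invoke $\interleave$ only trivially). The run $y$ is infinite, since in both cases one of $A$, $B_1$, $B_2$ moves infinitely often, and I keep the input of $B_2$ constant while it stutters, consistent with the requirement that the environment change a process's input only when that process moves. Deleting only global stutter steps leaves $y(A,B_1)$ stutter-equivalent to $x(A,B_1)$; as $h$ is an $\LTLmX$ formula and therefore stutter-invariant, $y(A,B_1) \models h$, so $(A,B)^{(1,2)} \models \pexists h(A,B_1)$.

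I expect the only delicate point to be the enabledness bookkeeping together with the neutral role of $\init_B$: one must check that the two companion states present in $(A,B)^{(1,2)}$ at each aligned moment are always states that co-occur in $s_m$ (or equal $\init_B$), so that the universal guard condition, already met in the larger system, survives the removal of every other process. The remaining ingredients --- infiniteness of $y$, destuttering, and the appeal to stutter-invariance of $\LTLmX$ --- are routine.
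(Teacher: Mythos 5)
Your proof is correct and follows essentially the same route as the paper's: copy $x(A)$ and $x(B_1)$ into the cutoff system, use the single spare $B$-copy only to keep the run infinite (mimicking an infinitely moving $B_j$ when $A$ and $B_1$ eventually stop), and justify enabledness by the monotonicity of conjunctive guards under process removal, finishing with destuttering and stutter-invariance of $\LTLmX$. The one (harmless) deviation is that when $A$ or $B_1$ moves infinitely often you park $B_2$ in the neutral state $\init_B$, invoking the standing assumption that $\init_B$ lies in every conjunctive guard, whereas the paper simply sets $y(B_2)=x(B_2)$; both choices are sound.
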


\begin{proof}
The proof is inspired by the first part of the proof of \cite[Lemma 5.2]{Emerson00}.

Let $x=(s_1,e_1,p_1) (s_2,e_2,p_2) \ldots$ be a run of $\largesys$. 
Note that by the semantics of conjunctive guards, 
the transitions along any local run of $x$ will also be enabled 
in any system $\cutoffsys$ with $c \leq n$, 
where the processes exhibit a subset of the local runs of $x$. 
Thus, we obtain a run of $\cutoffsys$ by copying a subset of the local runs of $x$, 
and removing elements of the new global run where all processes stutter.
\sj{should we put this as a general lemma somewhere?}

Then, based on an infinite run $x$ of the original system, 
we construct an infinite run $y$ of the cutoff system. 
Let $y(A)=x(A)$ and $y(B_1)=x(B_1)$. 
The second copy of template $B$ in $(A,B)^{(1,2)}$ is needed to ensure that 
the run $y$ is infinite, i.e., at least one process moves infinitely often. 
If both $x(A)$ and $x(B_1)$ eventually deadlock, 
then there exists a process $B_i$ of $\largesys$ that makes infinitely many moves, 
and we set $y(B_2) = x(B_i)$. 
Otherwise, we set $y(B_2) = x(B_2)$.

\end{proof}

\begin{restatable}[Conj, Props, Unfair]{tightness}{TightConjPropRestricted}
\label{obs:conj:tight_prop}
The cutoff $c=2$ is tight for parameterized model checking of properties $\pexists h(A,B_1)$ in the 1-conjunctive systems, i.e., there is a system type $(A,B)$ and property $Eh(A,B_1)$ which is not satisfied by $(A,B)^{(1,1)}$ but is by $(A,B)^{(1,2)}$.
\end{restatable}
\begin{proof}
The figure below shows templates $(A,B)$, $\pexists h(A,B_1) = \pexists \eventually b$. An infinite run that satisfies the formula needs one copy of $B$ that stays in the initial state, and one that moves into $b$.
\begin{figure}[h]
\centering
\subfloat[Template A]{
\centering
%\makebox[0.3\textwidth][c]{
\scalebox{0.75}{% !TEX root = table.tex
\begin{tikzpicture}[node distance=1.5cm,inner sep=1pt,minimum size=0.5mm,->,>=latex]

\node[state,initial left] (init) {};
\path (init) [loop right] edge [right] node {$\forall \neg 1_B$} (init);

\end{tikzpicture}
  }
\label{fig:conj:tight_prop_fair_tmplA}
}%}
\hspace{1cm}
\subfloat[Template B]{
\centering
%\makebox[0.4\textwidth][c]{
\scalebox{0.75}{% !TEX root = table.tex
\begin{tikzpicture}[node distance=1.5cm,inner sep=1pt,minimum size=0.5mm,->,>=latex]

\node[state,initial left] (init) {};
\node[state] (b_1) [right= of init] {$1_B$};

\path (init) edge [above] node {} (b_1);
\path (init) [loop below] edge [right] node {} (init);

\end{tikzpicture}
  }
\label{fig:conj:tight_prop_fair_tmplB}
}%}
\end{figure}
\end{proof}

\ifwithextensions %%%%%%%%%%%%%%%%%%%%%%%%%%%%%%%%%%%%%%%%%%%
\begin{restatable}[Generalized Bounding Lemma]{lem}{ConjBoundingLemmaGeneral}
\label{le:ConjunctiveBoundingLemmaGen}
For conjunctive systems,
\begin{align*}
&\forall n \geq k+1: \largesys \models \pexists h(A,B_{(k)}) 
\ \iff\ 
(A,B)^{(1,k+1)} \models \pexists h(A,B_{(k)})
\end{align*}
\end{restatable}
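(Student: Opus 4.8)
The plan is to prove the two directions separately, lifting the base cases Lemma~\ref{le:ConjMonotonicityLemma} and Lemma~\ref{le:ConjunctiveBoundingLemma} from the single named process $B_1$ to the $k$ processes $B_1,\ldots,B_k$ that $h(A,B^{(k)})$ talks about. The cutoff rises from $2$ to $k+1$: $k$ copies carry the behaviour referred to by the formula, and one spare copy guarantees an infinite run.

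For the direction $\Impl$ (the cutoff system entails every larger one), I would apply the Monotonicity Lemma~\ref{le:ConjMonotonicityLemma} inductively. Given a run $y$ of $(A,B)^{(1,k+1)}$ with $y \models \pexists h(A,B^{(k)})$, I add the $n-(k+1)$ missing copies of $B$ one at a time, each kept stuttering in $\init_B$ forever. Since $\init_B$ lies in every conjunctive guard by assumption, these idle processes never disable a transition of the original processes, so the witnessing run survives; and because $h$ mentions only $A,B_1,\ldots,B_k$, this yields $\largesys \models \pexists h(A,B^{(k)})$ for every $n \geq k+1$.

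For the direction $\Implied$ (every large system entails the cutoff), I would adapt the construction of Lemma~\ref{le:ConjunctiveBoundingLemma}. Let $x$ be a run of $\largesys$ with $x \models h(A,B^{(k)})$. I build a run $y$ of $(A,B)^{(1,k+1)}$ by copying the local runs $y(A) := x(A)$ and $y(B_j) := x(B_j)$ for $j \in [1..k]$, and fixing the last copy to keep $y$ infinite: if one of $A,B_1,\ldots,B_k$ moves infinitely often in $x$ I set $y(B_{k+1}) := x(B_{k+1})$, and otherwise (all of them eventually deadlock) I pick a process of $\largesys$ that moves infinitely often---which exists because $x$ is infinite---and let $B_{k+1}$ copy it. The crucial point, inherited from the conjunctive semantics, is that removing processes can only shrink the set of other local states a conjunctive guard quantifies over, so any transition enabled in $x$ remains enabled after restricting to this subset; hence the copied local runs compose into a legal run of $(A,B)^{(1,k+1)}$. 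Finally, because exactly one process moves per global step in $x$, at most one copied process moves at each step, and the only defect of the naive composition is global steps where only a discarded process moved; removing these with $\destutter$ gives $y$ with $y(A,B_1,\ldots,B_k) \simeq x(A,B_1,\ldots,B_k)$. As $h$ is $\LTLmX$ and thus stutter-invariant, $y \models h(A,B^{(k)})$, using exactly $k+1$ copies of $B$.

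The main obstacle is bookkeeping rather than depth: because conjunctive guards only become easier to satisfy as processes are removed, the hard part of the disjunctive case (preserving enabledness via flooding) simply does not arise, and the only points needing care are (i) securing an infinitely-moving process so that $y$ does not collapse to a finite run after destuttering, and (ii) checking that destuttering alters $x(A,B_1,\ldots,B_k)$ only by stutter steps so the $\LTLmX$ property transfers. Both are handled exactly as for $k=1$, and neither the choice of witness process nor the destuttering interacts with the guards, so the generalization from $B_1$ to $B_1,\ldots,B_k$ goes through unchanged.
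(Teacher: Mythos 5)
Your proof is correct and takes essentially the same route as the paper: the paper's own proof simply says to repeat the argument of Lemma~\ref{le:ConjunctiveBoundingLemma}, copying the $k$ local runs named by the formula plus one local run of an infinitely-moving process (with the converse direction supplied, as you do, by the keep-the-new-process-in-$\init_B$ monotonicity argument of Lemma~\ref{le:ConjMonotonicityLemma}). Your explicit treatment of destuttering and the stutter-invariance of \LTLmX\ spells out exactly the mechanism the paper's terse proof relies on.
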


\begin{proof}
The proof is the same as for Lemma~\ref{le:ConjunctiveBoundingLemma}, except that we copy the $k$ local runs that exhibit the property, in addition to the local run that ensures that the resulting global run will be infinite.
\end{proof}

\begin{restatable}[Tightness]{obs}{TightConjPropRestrictedGeneral}
\label{obs:conj:tight_prop:general}
The cutoff $k+1$ for properties $\pexists h(A,B_{(k)})$ on unconstrained runs in 1-guard conjunctive systems, i.e., there is a system type $(A,B)$ and property $Eh(A,B_{(k)})$ which is not satisfied by $(A,B)^{(1,k)}$ but is satisfied by $(A,B)^{(1,k+1)}$.
\end{restatable}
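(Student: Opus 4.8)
The plan is to reuse the process templates $(A,B)$ from the single-index tightness result (Observation~\ref{obs:conj:tight_prop}) and to lift the witness property from a single eventuality to a conjunction of $k$ eventualities. Concretely, I take $A$ to be the one-state template whose only transition is a self-loop guarded by $\forall\neg 1_B$, and $B$ to be the template with an initial state carrying an (unguarded) self-loop together with a transition into the sink state $1_B$; both are $1$-conjunctive, exactly as in Observation~\ref{obs:conj:tight_prop}. As the witness formula I propose
\[
\pexists h(A,B^{(k)}) \;=\; \pexists \bigwedge_{i \in [1..k]} \eventually 1_{B_i},
\]
asking that each of the $k$ named $B$-processes eventually reaches the sink $1_B$.

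First I would establish the negative direction, $(A,B)^{(1,k)} \not\models \pexists h(A,B^{(k)})$. Suppose toward a contradiction that some infinite run satisfies $h$. Since $1_B$ is a sink, each conjunct $\eventually 1_{B_i}$ forces $B_i$ to enter $1_B$ and remain there, so after finitely many steps all of $B_1,\dots,B_k$ sit in $1_B$ simultaneously. At that point $A$'s sole transition is disabled (its guard $\forall\neg 1_B$ is violated because some process occupies $1_B$), and every $B_i$ is stuck in the sink; hence the global state is a deadlock and the run is finite, contradicting infiniteness. Recalling that satisfaction of $\pexists h$ is evaluated over \emph{infinite} runs --- as used in the Bounding Lemma~\ref{le:ConjunctiveBoundingLemma}, where the extra copy of $B$ serves precisely to keep a run infinite --- this shows the formula fails for $n=k$. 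Note that for $n<k$ the formula is not even well-defined, so $n=k$ is the relevant smallest instance.

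Then I would exhibit a witnessing infinite run for $(A,B)^{(1,k+1)}$: schedule $B_1,\dots,B_k$ to each take their single transition into $1_B$ (in any interleaving), while letting the additional process $B_{k+1}$ self-loop in its initial state forever. The self-loop of $B_{k+1}$ is enabled independently of the other processes, so this run is infinite, and it satisfies every conjunct $\eventually 1_{B_i}$; thus $(A,B)^{(1,k+1)} \models \pexists h(A,B^{(k)})$. Combined with the Generalized Bounding Lemma~\ref{le:ConjunctiveBoundingLemmaGen} and the Monotonicity Lemma~\ref{le:ConjMonotonicityLemma}, this certifies that the tight cutoff is exactly $k+1$.

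The main obstacle I anticipate is conceptual rather than computational: the whole separation hinges on the convention that $\pexists h$ quantifies over infinite runs only, so that the finite deadlocked run produced in the $k$-copy case does not count as a witness. I would make this explicit, and stress that it is precisely the role of $A$'s guard $\forall\neg 1_B$ to force the deadlock once the goal states are populated --- without it, a single perpetual $A$-loop would keep any run alive and break tightness already at $n=k$. A minor secondary point is to confirm that the chosen guards are genuinely $1$-conjunctive: the only nontrivial guard is $\forall\neg 1_B = (Q_A \cupdot Q_B)\setminus\{1_B\}$, which is of the required form by construction.
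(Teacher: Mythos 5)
Your proposal is correct and follows essentially the same route as the paper, whose proof consists of exactly this construction: reuse the templates from Observation~\ref{obs:conj:tight_prop} with the property $\pexists \bigwedge_{i \in [1..k]} \eventually 1_{B_i}$. The deadlock argument you spell out for $n=k$ (all $k$ processes forced into the sink, disabling $A$'s $\forall\neg 1_B$ self-loop, so no infinite witness exists) and the extra self-looping $(k{+}1)$-st process for the positive direction are precisely the reasoning the paper leaves implicit.
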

\begin{proof}
Consider again Fig.~\ref{fig:conj:tight_prop_tmpl}, and the property $\pexists h(A,B_{(k)}) = \pexists \bigwedge_{i \in [1..k]} \eventually b_i$.
\end{proof}
\fi      %\ifwithextensions %%%%%%%%%%%%%%%%%%%%%%%%%%%%%%%%%%%%%%%%%%%

% \subsubsection{Deadlock Detection.}

% A conjunctive system $\largesys$ is called \emph{disjoint-guards system}\ak{bad name, negations of guards are disjoint, not guards} if for any transitions $t, t' \in \delta$: $\neg\guard(t)\cap \neg\guard(t') \neq \emptyset \implies \guard(t) = \guard(t')$.
% \ak{too restricted -- does not allow read-write locks. Try more general version: $\neg\guard(t) \cap \neg\guard(t') \neq \emptyset \implies \guard(t) \subseteq \guard(t') \vee \guard(t') \subseteq \guard(t)$.}
% 
% For a global state $s$ of a system $(A,B)^{(1,n)}$ and $U\in\{A,B\}$, let $Set(s,U)$ be the set of states of all processes of type $U$ in $s$.\ak{find home}
% 
% Let $(\stateset, \init, \inputs, \trans, \guard)$ be a process template. For transition $t \in \delta$, and $g\in\guard(t)$ let \emph{resource} be the the set $\neg g$.\ak{find home}\ak{better to define conj guards via negations of current guards?}

\begin{restatable}[Monotonicity: Conj, Deadlocks, Unfair]{lem}{ConjDeadlockMonotonicityLemmaRestricted}
\label{le:ConjunctiveMonotonicityLemmaDeadlocks}
For conjunctive systems:
$$
\forall n\geq 1: (A,B)^{(1,n)} \textit{ has a deadlock} 
\ \Impl\ 
(A,B)^{(1,n+1)} \textit{ has a deadlock}
$$
% \li
%   \- with $c=2|Q_B\smi \{ \init \}|$:\hl{XXX}
%   $$(A,B)^{(1,c)} \textit{ has a global deadlock} \ \Impl\ (A,B)^{(1,>c)} \textit{ has a global deadlock} $$
%   
%   \- with $c=1$:
%   $$(A,B)^{(1,c)} \textit{ has a local deadlock} \ \Impl\ (A,B)^{(1,>c)} \textit{ has a local deadlock}$$
%   
%   \- with $c=1$:
%   $$(A,B)^{(1,c)} \textit{ has a deadlock} \ \Impl\ (A,B)^{(1,>c)} \textit{ has a deadlock}$$

% \il
\end{restatable}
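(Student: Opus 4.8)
The plan is to reuse the standard conjunctive construction: given a deadlocked run $x$ of $\largesys$, I copy the local runs of $A,B_1,\dots,B_n$ verbatim into a run $y$ of $(A,B)^{(1,n+1)}$ and keep the extra process $B_{n+1}$ parked in $\init_B$. The entire argument rests on one observation that I would isolate first: since $\init_B$ is contained in every conjunctive guard, a process sitting in $\init_B$ never falsifies any guard. Hence adding $B_{n+1}$ in $\init_B$ leaves the enabledness of \emph{every} transition of \emph{every} original process unchanged, at every moment. In particular, copying $x$ is legal (each move taken in $x$ is still enabled in $y$), and no interleaving or destuttering is needed here, unlike in the bounding lemmas, because $B_{n+1}$ contributes no simultaneous moves.

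I would then split on the type of deadlock of $x$. If $x$ is \emph{locally} deadlocked, say for process $p$, I copy $x$ and freeze $B_{n+1}$ in $\init_B$ (choosing an arbitrary constant input for it, which is admissible as it never moves). Then $y$ is infinite because $x$ is, and by the observation $p$ is disabled in $y$ from the same moment onward, so $y$ is locally deadlocked and thus has a deadlock. Note that no fairness is demanded of $B_{n+1}$, which is precisely why simply freezing it in $\init_B$ is allowed in this (unfair) lemma.

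If $x$ is \emph{globally} deadlocked, I again copy the finite run and place $B_{n+1}$ in $\init_B$, so that all $n+1$ original processes are disabled in the final configuration and the only process that can still move is $B_{n+1}$. I extend the path by letting $B_{n+1}$ alone follow an arbitrary maximal continuation. The crucial step, which I would prove from the non-reflexive interpretation of conjunctive guards, is that moving $B_{n+1}$ can never re-enable a frozen original process: a transition of an original $p$ that was disabled in the global deadlock of $x$ is blocked by some \emph{other} original process $p'$ whose state violates the guard, and $p'$ stays put, so the block persists no matter where $B_{n+1}$ goes. Consequently the originals remain deadlocked forever. The continuation is either finite, in which case $y$ ends in a configuration where everyone including $B_{n+1}$ is disabled (a global deadlock), or infinite, in which case all original processes are locally deadlocked; either way $y$ has a deadlock.

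I expect the only real obstacle to be exactly this global-deadlock case: one must resist the temptation to declare $y$ immediately globally deadlocked, because $B_{n+1}$ may be enabled in $\init_B$ and a finite path ending with an enabled process is not maximal, hence not a run. The fix is the monotone-blocking property of conjunctive guards above, which guarantees the (possibly now only local) deadlock survives whatever $B_{n+1}$ does. I would close by remarking that, in contrast to the fair monotonicity lemma, this proof needs neither the initializing restriction nor $1$-conjunctiveness, since $B_{n+1}$ is never required to move infinitely often.
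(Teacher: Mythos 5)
Your proposal is correct and follows essentially the same route as the paper's proof: copy $x$, park $B_{n+1}$ in $\init_B$ (legal since $\init_B$ is in every conjunctive guard), and in the global-deadlock case schedule $B_{n+1}$ arbitrarily after the deadlock moment, accepting that the result may only be a local deadlock. Your explicit justification that conjunctive guards blocked by frozen original processes stay blocked, and your remark on maximality, merely spell out what the paper leaves implicit.
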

\begin{proof}
Given a deadlocked run $x$ of $(A,B)^{(1,n)}$, we construct a deadlocked run of $(A,B)^{(1,n+1)}$. Let $y$ copy run $x$, and keep the new process in $\init$. If $x$ is globally deadlocked and $d$ is the moment when the deadlock happens in $x$, then schedule the new process arbitrarily after moment $d$.
% 
% 
% \myparagraph{Local Deadlocks}
% Given a locally deadlocked run $x$ of \cutoffsys, we construct a locally deadlocked run of \largesys. Copy the run $x$, and keep excessive processes in $\init$.
% 
% \myparagraph{Deadlocks}
% Given a deadlocked run $x$ of \cutoffsys, we construct a deadlocked run of \largesys. Copy the run $x$, and keep excessive processes in $\init$. If $x$ is globally deadlocked and $d$ is the moment when the deadlock happens in $x$, then schedule excessive processes arbitrarily after moment $d$.
% 
% \myparagraph{Global Deadlocks}
% Given a globally deadlocked run $x$ of \cutoffsys, we construct a globally deadlocked run of \largesys:
% \hl{todo}
% 

\end{proof}

\begin{restatable}[Bounding: 1-Conj, Deadlocks, Unfair]{lem}{ConjDeadlockLemmaRestricted}
\label{le:ConjunctiveBoundingLemmaDeadlocks}
For 1-conjunctive systems:
\li
  \- with $c=2|Q_B\smi \{ \init \}|$ and any $n>c$ \footnote{This statement also applies to systems without restriction to $1$-conjunctive guards.}
  $$(A,B)^{(1,c)} \textit{ has a global deadlock} \ \Implied\ (A,B)^{(1,n)} \textit{ has a global deadlock} $$
  
  \- with $c=|Q_B\smi \{ \init \}|+2$ and any $n>c$:
  $$(A,B)^{(1,c)} \textit{ has a local deadlock} \ \Implied\ (A,B)^{(1,n)} \textit{ has a local deadlock}$$
  
  \- with $c=2|Q_B\smi \{ \init \}|$ and any $n>c$:
  $$(A,B)^{(1,c)} \textit{ has a deadlock} \ \Implied\ (A,B)^{(1,n)} \textit{ has a deadlock}$$
\il
\end{restatable}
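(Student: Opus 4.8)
Reading $\Implied$ as $\Leftarrow$, the three claims are the \emph{bounding} direction: for each notion of deadlock, a deadlocked run of the large system $\largesys$ with $n>c$ yields a deadlocked run of the cutoff system $\cutoffsys$. So in every case I would start from a deadlocked run $x=(s_1,e_1,p_1),(s_2,e_2,p_2),\ldots$ of $\largesys$ and carve out a run $y$ of $\cutoffsys$ by copying a well-chosen subset of the local runs of $x$. The workhorse is the conjunctive monotonicity noted in Section~\ref{sec:ideas-conj-nofair}: removing processes never disables an enabled transition, so every transition I copy from $x$ stays available in the smaller $y$, and all the effort goes into preserving \emph{disabledness}, i.e.\ the deadlock itself. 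Since $\init_B$ lies in every guard, a process in $\init_B$ can block no one, so the states that must remain occupied to keep processes disabled always lie in $Q_B\smi\{\init\}$ (a forbidden $A$-state is covered by always copying $x(A)$); this is what caps the cutoffs at multiples of $|Q_B\smi\{\init\}|$.

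\textbf{Global deadlock} ($c=2|Q_B\smi\{\init\}|$). Suppose $x$ ends in $(s_d,e_d,\bot)$, and let $G\subseteq Q_B\smi\{\init\}$ collect the non-initial $B$-states forbidden by the guards of the processes disabled at $s_d$; each such $q$ is occupied at $s_d$, since it blocks someone. For every $q\in G$ I copy one local run of $x$ ending in $q$, and a second such run when the process I place there has a self-referential transition $\transition{q}{q'}{\forall\neg q}$ (two processes must then occupy $q$ in $x$, so both runs exist). Together with $x(A)$ this uses at most $2|G|\leq 2|Q_B\smi\{\init\}|=c$ copies of $B$. Every copied process is disabled in the final configuration, because its own forbidden states again lie in $G$ and all of $G$ is occupied; hence $y$ is globally deadlocked. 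If fewer than $c$ copies are used I pad by duplicating a disabled process with the same input, which only tightens the conjunctive guards and so preserves the deadlock.

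\textbf{Local deadlock} ($c=|Q_B\smi\{\init\}|+2$). Now $x$ is infinite but some process --- w.l.o.g.\ $B_1$ --- is disabled from a moment $d$ on, in a state $q_\bot$ with the (then constant) input $e_{\geq d}(B_1)$. Put $DeadGuards=\{\, q \mid B_1 \text{ has } \transition{q_\bot}{q'}{\forall\neg q} \text{ for } e_{\geq d}(B_1)\,\}\subseteq Q_B\smi\{\init\}$. I copy $x(A)$, $x(B_1)$, one witness per $q\in DeadGuards$ that sits in $q$ at moment $d$, and one infinitely-moving process $B_\infty$ of $x$ --- at most $|Q_B\smi\{\init\}|+2=c$ copies of $B$ (padding with processes frozen in $\init_B$). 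In $y$ I replay all prefixes up to $d$ (enabled by monotonicity), then freeze $B_1$ and all witnesses and let $A$ and $B_\infty$ run on along $x$.

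The main obstacle is checking that freezing the witnesses does not block $B_\infty$, which is the only thing keeping $y$ infinite. The key observation is that, because the system is $1$-conjunctive and $B_1$ is disabled throughout $[d,\infty)$, every $q\in DeadGuards$ (and $q_\bot$) is occupied at \emph{every} moment $\geq d$; conversely, any transition $\transition{\cdot}{\cdot}{\forall\neg q}$ that $B_\infty$ takes at a moment $\geq d$ needs $q$ \emph{unoccupied} by the other processes at that moment in $x$, forcing $q\notin\{q_\bot\}\cup DeadGuards$. Hence the frozen states are precisely the ones $B_\infty$ never guards against after $d$, so all its copied transitions stay enabled in $y$; $A$'s transitions stay enabled by the same argument (it did not block $B_\infty$ in $x$), after the routine $\interleave$ and $\destutter$ steps that restore interleaving semantics. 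Thus $y$ is infinite with $B_1$ permanently disabled, and the remaining case split (whether the disabled process is $A$ or a $B$, and whether the infinite mover is $A$ or a $B$) is routine and never raises the count. Finally, the \textbf{general deadlock} claim ($c=2|Q_B\smi\{\init\}|$) follows since a deadlocked run is globally or locally deadlocked: apply the matching construction and take the larger cutoff. The single place where $1$-conjunctiveness is indispensable is the local case, where it makes $DeadGuards$ one static set whose permanent occupancy simultaneously keeps $B_1$ disabled and leaves $B_\infty$ free; with overlapping guards this set splinters (cf.\ Remark~\ref{rem:general-conj-tough}).
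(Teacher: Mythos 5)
Your construction follows the paper's proof almost exactly: the local-deadlock case with the static set $DeadGuards$, the witnesses frozen from moment $d$ on, the key observation that permanent occupancy of $DeadGuards \cup \{q_\bot\}$ after $d$ prevents these states from appearing in the self-guards of the infinite mover's later transitions, the idle padding in $\init_B$, the count $|Q_B\smi\{\init\}|+2$, and the combined claim via the larger cutoff --- all of this is the paper's argument (which merely spells out the three routine case splits you defer). There are, however, two concrete problems in your global-deadlock case.

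First, your padding step would fail. You fill the unused slots of $\cutoffsys$ ``by duplicating a disabled process with the same input, which only tightens the conjunctive guards.'' Tightening guards does preserve the disabledness of the \emph{other} processes, but the duplicate must first \emph{traverse} its local run, and in a conjunctive system a twin can be blocked by its original: if the shared run contains a transition $\transition{\init_B}{q}{\forall\neg q}$ and the original (or one of your witnesses) is frozen in $q$, the staggered duplicate can never leave $\init_B$; if it then has any other enabled transition, the final configuration is not a \emph{global} deadlock. This is exactly the non-reflexive-guard obstacle the paper flags when explaining why ``mimicking'' fails in conjunctive systems (and keeping the spare processes parked in $\init_B$ would not help either, since they may remain enabled there --- the paper's deadlock monotonicity lemma explicitly notes that such a run may end up only locally deadlocked). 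The paper's proof avoids this by letting every unused process of $\cutoffsys$ copy a \emph{distinct, so-far-unused} local run of $x$ --- these exist because $n>c$ --- which is always legal (copying a subset of local runs of a conjunctive run never disables a transition) and leaves each padded process disabled at $s_d$ by the same witness argument you already use for $G$.

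Second, your definition of $G$ does not support the footnote's claim that the global case holds without the $1$-conjunctive restriction. You take $G$ to be the non-initial $B$-states \emph{forbidden by the guards} of disabled processes and assert each such state is occupied ``since it blocks someone.'' For $1$-conjunctive guards this is true: each transition excludes a single state, which must be occupied for the transition to be disabled. But for a general guard excluding $\{q',q''\}$, the transition is already disabled when only $q'$ is occupied, so $q''\in G$ may be unoccupied and there is no local run of $x$ ending there to copy. The paper sidesteps this by indexing the construction over the \emph{occupied} states $Set(s_d)\smi\{\init\}$ directly, keeping one or two representatives per state according to its multiplicity in $s_d$; that formulation is what makes the generalization beyond $1$-conjunctive guards go through, and it subsumes your ``second copy for a self-referential transition'' refinement.
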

\begin{proof}
The proof is inspired by the second part of the proof of \cite[Lemma 5.2]{Emerson00}, 
but in addition to global we consider local deadlocks. 

\myparagraph{Global Deadlocks} 
Let $c=2|Q_B\smi\{ \init \}|)$. 
Let run $x = (s_1,e_1,p_1)\ldots(s_d,e_d,\bot)$ of \largesys 
with $n>c$ be globally deadlocked. 
We construct a globally deadlocked run $y$ in $\cutoffsys$:
\li
  \-[a.] for every $q \in Set(s_d) \setminus \{\init\}$:
  \li
    \- if $s_d$ has two processes in state $q$, 
       then devote two processes of \cutoffsys that mimic the behaviour 
       of the two of \largesys correspondingly

    \- otherwise, $s_d$ has only one process in state $q$, 
       then devote one process of \cutoffsys that mimics the process of \largesys
  \il
  \-[b.] for any process of \cutoffsys not used in the construction (if any): 
         let it mimic an arbitrary $B$-process of \largesys 
         not used in the construction (including (b))
\il
The construction uses (if ignore (b)) $\leq 2|Q_B\smi \{ \init \}|$ processes $B$. 
Note that the proof does not assume that the system is 1-conjunctive.

\myparagraph{Local Deadlocks} 
Let $c = |Q_B\smi \{ \init \}|+2$. 
Let run $x = (s_1,e_1,p_1)\ldots$ of \largesys with $n>c$ be locally deadlocked. 
We will construct a run $y$ of \cutoffsys 
where at least one process deadlocks and exactly one process moves infinitely often.

Wlog. we distinguish three cases:
\li
\-[1.] $A$ moves infinitely often in $x$, and $B_1$ deadlocks
\-[2.] $A$ deadlocks, and $B_1$ moves infinitely often
\-[3.] $A$ neither deadlocks nor moves infinitely often, $B_1$ deadlocks, 
       $B_2$ moves infinitely often
\il

\myparagraph{1} ``$A$ moves infinitely often in $x$, and $B_1$ deadlocks''.

Let $q_\bot, e_\bot$ be the deadlocked state and input of $B_1$ in $x$, 
and let $d$ be the moment from which $B_1$ is deadlocked.

Let $DeadGuards=\{q_1,\ldots,q_k\}$ be the set of states
such that for every $q_i \in DeadGuards$ there is an outgoing transitions 
from $q_\bot$ with $e_\bot$ guarded ``$\forall \neg q_i$'',
and assume $DeadGuards \neq \emptyset$
(if it is empty, then we keep every process in $\init$ 
 until someone reaches $q_\bot$ and then schedule the rest arbitrarily). 
(Recall that $q_i \in Q_B \cupdot Q_A$).

The construction is:
\li
  \-[a.] $y(A)=x(A)$, $y(B_1)=x(B_1)$
  \-[b.] for each $q \in DeadGuards$, at moment $d$ in $x$
         there is a process $p_q$ in state $q$. 
         If $p_q \in \{B_1,...,B_n\}$, 
         then let one process of \cutoffsys mimic it till moment $d$, 
         and then stutter in $q$.
  \-[c.] let other processes of \cutoffsys (if any) stay in $\init$.
\il
The construction uses (if ignore (c)) $\leq |Q_B\smi \{ \init \}|+1$ processes $B$.

Note: 
the assumption of 1-conjunctive systems implies that,
in order to deadlock $B_1$,
we need a process in each state in $BlockGuards$.
This implies that having a process in each state of $BlockGuards$ does not disable 
any $A$'s transition after moment $d$.

\myparagraph{2} ``$A$ deadlocks, and $B_1$ moves infinitely often'': 
use the construction from (1).

\myparagraph{3} 
``$A$ neither deadlocks nor moves infinitely often, 
  $B_1$ deadlocks, $B_2$ moves infinitely often''. 
Use the construction from (1), and additionally: $y(B_2)=x(B_2)$. 
Thus, the construction uses (if ignore (c)) $\leq |Q_B \smi \{ \init \}|+2$ 
processes $B$.

\myparagraph{Deadlocks}
Take the higher value among the cases considered above $c=2|Q_B\smi \{ \init \}|$: 
if $x$ is locally deadlocked then the monotonicity lemma ensures 
that there is a deadlocked run in \cutoffsys.

\end{proof}

\ak{proof trial of the disjoint conj is commented out}
%\begin{lem}
%``Bounding: disjoint-conj, Deadlocks, Unfair''
%\end{lem}
%\input{other/disjoint-conj-dead-proof-trial}

\begin{restatable}[1-Conj, Deadlocks, Unfair]{tightness}{TightConjDeadlockRestricted}
\label{obs:conj:tight_deadlock}
The cutoff $c=2|B|-2$ is tight for parameterized deadlock detection in the 1-conjunctive systems, i.e., for any $k$ there is a system type $(A,B)$ with $|B|=k$ such that there is a deadlock in $(A,B)^{(1,2|B|-2)}$, but not in $(A,B)^{(1,2|B|-3)}$. 
\end{restatable}
\begin{proof} 
The figure below provides templates $(A,B)$ that proves the observation. In the figure the edge with $\forall{\neg b_1},\ldots,\forall{\neg b_k}$ denotes edges with guards $\forall{\neg b_1},\ldots,\forall{\neg b_k}$. To get the global deadlock we need at least two processes in each $b_i \in \{b_1,\ldots,b_k\}$. Note that the system does not have local deadlocks.\ak{show that cutoffs for local deadlocks are also tight}
\begin{figure}[h]
\vspace{-10pt}
\centering
\subfloat[Template A]{
\centering
%\makebox[0.4\textwidth][c]{
\scalebox{0.75}{% !TEX root = table.tex
\begin{tikzpicture}[node distance=1.8cm,inner sep=1pt,minimum size=0.5mm,->,>=latex]

\node[initial left, state] (init) {};

\path (init) [loop right] edge [right] node {$\forall\neg 1_B$} (init);
\path (init) [loop right,dotted,distance=26mm] edge [right] node {...} (init);
\path (init) [loop right,distance=38mm] edge [right] node {$\forall\neg k_B$} (init);

\end{tikzpicture}
  }
\label{fig:conj:tight_deadlock_tmplA}
}%}
\hspace{1cm}
\subfloat[Template B]{
\centering
%\makebox[0.6\textwidth][c]{
\scalebox{0.75}{% !TEX root = table.tex
\begin{tikzpicture}[node distance=1.8cm,inner sep=1pt,minimum size=0.5mm,->,>=latex]

\node[initial above, state] (init) {$init$};
\node[state] (b_1) [left=2.2cm of init] {$1_B$};
\node (dots) [below=0.8cm of init] {$\ldots$};
\node[state] (b_k) [right=2.2cm of init] {$k_B$}; 

% \path (b_1.20) edge [above] node {\specialcell{$\forall{\neg b_1}$\\...\\$\forall{\neg b_k}$}} (init.160);
\path (b_1.20) edge [above] node {$\forall{\neg 1_B},...,\forall{\neg k_B}$} (init.160);
\path (init.200) edge [above] node {} (b_1.340);

\path (b_k.160) edge [above] node {$\forall{\neg 1_B},...,\forall{\neg k_B}$} (init.20); 
\path (init.340) edge [above] node {} (b_k.200); 

\path (init.282) edge [left, dotted] node {} ($(dots)+(0.1,0.1)$);
\path ($(dots)-(0.1,-0.1)$) edge [left, dotted] node {$\forall{\neg 1_B},...,\forall{\neg k_B}$} (init.257);

% \path (b_1) [loop left] edge [left] node {$\forall\neg b_1$} (b_1);
% \path (b_1) [loop left,dotted,distance=26mm] edge [left] node {...} (b_1);
% \path (b_1) [loop left,distance=38mm] edge [left] node {$\forall\neg b_k$} (b_1);

% \path (b_k) [loop right] edge [right] node {$\forall\neg b_1$} (b_k);
% \path (b_k) [loop right,dotted,distance=26mm] edge [right] node {...} (b_k);
% \path (b_k) [loop right,distance=38mm] edge [right] node {$\forall\neg b_k$} (b_k);

\end{tikzpicture}
  }
\label{fig:conj:tight_deadlock_tmplB}
}%}
\label{fig:conj:tight_dead_tmpl}
\end{figure}
\end{proof}

\ifwithextensions   %%%%%%%%%%%%%%%%%%%%%%%%%%%%%%%%%%%%%%%%%%%%
Now consider the general case of guards of the original (unrestricted) form:
\begin{restatable}[Local Deadlock Detection, Unrestricted Guards]{lem}{ConjDeadlockUnrestricted}
\label{le:ConjunctiveBoundingLemmaDeadlocksUnrestricted}
Let $c=3^{\card{B}}$.
If there are no local deadlocks 
in the conjunctive system $\cutoffsys$, then for any $n \ge c$ the conjunctive system $\largesys$ has no local deadlocks.
\end{restatable}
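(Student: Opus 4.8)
The plan is to establish the contrapositive \emph{bounding} direction: from a locally deadlocked run $x$ of $\largesys$ with $n \ge c = 3^{\card{B}}$, construct a locally deadlocked run $y$ of $\cutoffsys$. First I would normalize $x$. Since $x$ is locally deadlocked, some process is disabled from some moment on; assume w.l.o.g.\ it is a $B$-process $B_1$ (if only $A$ is deadlocked the argument is analogous, keeping $A$ in place of $B_1$), and fix a moment $d$ after which $B_1$ stays in a state $q_\bot$ reading a fixed input $e_\bot$. For each transition $t$ leaving $q_\bot$ under $e_\bot$, with guard $g_t$, set the \emph{blocking set} $S_t = (Q_A \cupdot Q_B)\smi g_t$. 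By the conjunctive semantics, $B_1$ is disabled at a moment $m\ge d$ exactly when for \emph{every} such $t$ some process other than $B_1$ occupies a state of $S_t$. This is the ``set of sets'' situation of Remark~\ref{rem:general-conj-tough}: unlike the $1$-conjunctive case of Lemma~\ref{le:ConjunctiveBoundingLemmaDeadlocks}, where each $S_t$ is a singleton that is therefore occupied \emph{continuously}, here the witness occupying $S_t$ may migrate among its states over time.

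The construction rests on one soundness observation: \emph{removing} processes can only relax conjunctive guards, so if $y$ is obtained by projecting $x$ onto any sub-collection of processes (copying $A$ and $B_1$ verbatim) and then destuttering, every transition taken by a \emph{kept} process remains enabled. Hence dropping processes can never break a kept move; the \emph{only} danger is that $B_1$ might become \emph{enabled}. Therefore it suffices to retain a set $P$ of processes such that, at every moment $m\ge d$, the states occupied by $P$ still hit every blocking set $S_t$, plus one process that moves infinitely often so that $y$ stays infinite. Keeping $A$, $B_1$, and such a $P$ then yields a locally deadlocked run $y$.

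To bound $|P|$, I would pass to the \emph{counter abstraction} that records, for each $q\in Q_B$, whether $q$ currently holds $0$, exactly $1$, or at least $2$ processes (other than $B_1$); $A$ is treated exactly, by copying $x(A)$. This abstraction has $3^{\card{B}}$ states, and the suffix of $x$ from $d$ on induces an infinite path in this finite graph, hence a lasso: a stem followed by a cycle of multiplicity vectors, each of which hits every $S_t$ (since $B_1$ stays blocked). I would realize this lasso concretely in $\cutoffsys$, maintaining the invariant that at each moment the \emph{set} of occupied $B$-states of $y$ equals the support of the current abstract vector; this invariant immediately gives both that $B_1$ remains disabled and that all kept moves are enabled. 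The three-valued counts are exactly what is needed to let a blocker leave a state while a second copy keeps it occupied, and to repopulate emptied states, so that while replaying the cyclic part no blocking set is ever fully vacated. A crude allocation keeps at most one representative $B$-process per abstract configuration met along the lasso, bounding the number of copies of $B$ by the size $3^{\card{B}}$ of the abstract domain; we do not claim this is tight.

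The main obstacle is precisely this \emph{migration of blockers}: I must order the individual moves that replay one abstract transition so that at no intermediate micro-step are all states of some $S_t$ simultaneously empty (which would momentarily enable $B_1$ and destroy the deadlock). Arguing that the capped multiplicities always leave a spare process available in each still-required blocking set --- so that blockers can rotate without ever unblocking $B_1$ --- is the delicate core of the proof, and is also the reason the cutoff jumps from linear in $\card{B}$ (in the $1$-conjunctive case) to exponential here.
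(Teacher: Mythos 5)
Your proposal matches the paper's proof essentially step for step: keep $x(A)$ and the deadlocked process verbatim, maintain the invariant $Set(s^*_m,B)=Set(s_m,B)$ for all $m\ge d$ (the superset direction preserving disabledness of the deadlocked process, the subset direction preserving enabledness of kept moves), pass to the $(0,1,\text{many})$-counter abstraction with $3^{\card{B}}$ abstract states, cut abstract repetitions into a lasso whose cycle is replayed forever to keep $y$ infinite, and charge at most one freshly recruited local run per abstract configuration, giving the $3^{\card{B}}$ bound. The ``delicate core'' you leave open --- scheduling the micro-steps so no blocking set is momentarily vacated --- is exactly what the paper discharges with its four-case replay of the suffix of $x$ (drop steps that preserve the abstraction, except moving all but one occupant when a state is about to become singly occupied; move the unique occupant when the support swaps or shrinks by one state; on support growth, repopulate from a multiply occupied state or, failing that, recruit a new local run), so your attempt is the same argument in the same shape, stopping just short of writing out that case analysis.
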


\iffinal
\else
\begin{proof}
Suppose run $x= (s_0,e_0,p_0),(s_1,e_1,p_1),\ldots$ of $\largesys$ is locally deadlocked for at least one process, and at least one process keeps on moving forever. 
We construct a run $y= (s^*_0,e^*_0,p^*_0),(s^*_1,e^*_1,p^*_1),\ldots$ where at least one process deadlocks and at least one process moves on forever in the following way.

Let $y(A^1,B^1)=x(A^1,B^j)$, where $j$ is chosen such that either $A^1$ moves infinitely often and $B^j$ deadlocks, or the other way around. Let $U^1$ denote the deadlocked process in $\cutoffsys$.

Then, let $d$ be the point in time from which $A^1$ or $B^i$ are disabled 
forever in $x$. To ensure that all transitions of $U^1$ are disabled in $s_d$ 
of $\cutoffsys$, let $I = \{ i | s_d(B^i) = q \textit{ for some } q \in 
Set(s_d,B)\}$, let $pr: I \rightarrow [2:\card{Set(s_d,B)}]$, and define $y(B^{pr(i)
}) = x(B^i)$ for all $i \in I$. In the following, we want to ensure that: a) 
$U^1$ will always remain deadlocked, and b) all transitions on local run 
$y(\overline{U}^1)$ are enabled. To ensure a), it is sufficient to ensure that 
$Set(s^*_\time,B) \supseteq Set(s_\time,B)$, and to ensure b), it is sufficient to ensure $Set(s^*_\time,B) \subseteq Set(s_\time,B)$. 
Thus, the goal of our 
construction is to ensure $Set(s^*_\time,B) = Set(s_\time,B)$ for all $\time 
\ge d$.\ak{What is $Set$? Is a set of states or set of sets?}

For a transition of $\largesys$ where $A^1$ or $B^j$ move, we fire the same transition in $\cutoffsys$. Otherwise, consider the following cases for a transition of $\largesys$ from $s_\time$ to $s_{\time+1}$:\sj{formally, this is probably a (0,1,many)-counter abstraction}
\begin{enumerate}
\item $Set(s_{\time+1},B) = Set(s_\time,B)$: we drop this transition in $\cutoffsys$, unless it leads to the situation that only one process is in the state that was left in the original transition\sj{i.e., unless it changes the state in the abstraction}; in that case, let all but one processes move along that transition
\item $Set(s_{\time+1},B) = (Set(s_\time,B) \setminus \{ q \}) \cup \{q'\}$ for some $q \in Set(s_\time,B)$, $q' \not\in Set(s_\time,B)$: we simulate the transition from $q$ to $q'$ with the unique process in $\cutoffsys$ that is in state $q$
\item $Set(s_{\time+1},B) = Set(s_\time,B) \setminus \{ q \}$ for some $q \in Set(s_\time,B)$: we simulate the transition from $q$ to a state $q' \in Set(s_{\time+1},B)$ with the unique process in $\cutoffsys$ that is in state $q$
\item $Set(s_{\time+1},B) = (Set(s_\time,B) \cup \{q\}$ for some $q \not\in Set(s_\time,B)$: if possible, we simulate this by a number of transitions from a state $q'$ to $q$, where multiple processes are in $q'$; if this is not possible, we add another local run: $y(B^{c+1})=x(B^i)$ for some $i$ with $s_{\time+1}(B^i) = q$.
\end{enumerate}

Using this construction, we simulate $\largesys$ in $\cutoffsys$ until we have 
reached $\time$ such that there exists $\time'$ with $d < \time' < \time$ and 
$s_\time'=s_\time$. The maximal length of a subsequence of $x$ such that 
$s_\time \neq s_{\time'}$ for all $x_\time,x_\time'$ is bounded by $\card{B}^n$. 
However, we only simulate steps where $Set(s_\time,B)$ changes, or where all 
but one processes move out of a state. Furthermore, we can remove loops 
within the abstraction, i.e., if there are $\time,\time'$ such that 
$Set(s^*_\time)=Set(s^*_{\time'})$, then we cut $x^*[\time+1:\time']$ from the run. 
Thus, there are at most $3^{\card{B}}$ different abstract configurations that 
can be reached on such a path. The number of steps of case d) that add 
another run is thus bounded by $3^{\card{B}}$.\sj{somewhat better cutoff possible, but probably not worth the effort...}

Note that the abstract state at the end of this loop is the same as in the original configuration, and therefore arbitrarily many executions of the loop can be appended.
\end{proof}
\fi
\fi   %\ifwithextensions %%%%%%%%%%%%%%%%%%%%%%%%%%%%%%%%%%%%%%%%%%%%

\subsection{Conjunctive Systems with Fairness}
In this section, subscript $i$ in path quantifiers, $\pexists_i$ and $\pforall_i$, 
denotes the quantification over initializing runs.

\begin{restatable}[Monotonicity: Conj, Props, Fair]{lem}{ConjMonLemmaFair}
\label{le:ConjMonFair}
For unconditionally-fair initializing runs of conjunctive systems:\sj{generalization is obvious; $n \ge k+1$ in general case}
\begin{align*}
& \forall n \ge 2:\\
& (A,B)^{(1,n)} \models \pexists_{uncond,i} h(A,B_1)
\ \Impl \
(A,B)^{(1,n+1)} \models \pexists_{uncond,i} h(A,B_1).
\end{align*}
\end{restatable}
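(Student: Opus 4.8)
The plan is to lift the run by letting the extra process $B_{n+1}$ \emph{share} a local run with an existing $B$-process, exactly as sketched in Section~\ref{sec:ideas-conj-fair}. Let $x$ be an unconditionally-fair initializing run of $\largesys$ with $x \models h(A,B_1)$, and recall $n \ge 2$. Since the formula mentions only $A$ and $B_1$, I pick $B_2$ as the donor (it exists and differs from $B_1$ because $n \ge 2$). Unconditional fairness forces $B_2$ to move infinitely often, and, the run being initializing, $B_2$ therefore visits $\init_B$ infinitely often. Partition the moves of $x(B_2)$ into consecutive \emph{excursions}, where the $j$-th excursion is the block of moves made between its $j$-th and $(j+1)$-th visit to $\init_B$ (a self-loop at $\init_B$ is just an excursion consisting of one move). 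There are infinitely many excursions.

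I would then build $y$ on the processes $\{A,B_1,\dots,B_n,B_{n+1}\}$ reusing the \emph{global schedule} of $x$: the local runs of $A,B_1,B_3,\dots,B_n$ are copied verbatim, while every move slot that $x$ assigns to $B_2$ is handed to whichever of $B_2,B_{n+1}$ is currently \emph{active}. The two alternate excursion by excursion: on even excursions $B_2$ replays $x(B_2)$ while $B_{n+1}$ rests in $\init_B$, on odd excursions the roles swap. Because each excursion begins and ends in $\init_B$, at every swap both shared processes sit in $\init_B$, so the handover is consistent, and each of $B_2,B_{n+1}$ performs infinitely many excursions. For open systems I feed $B_{n+1}$ (and the resting $B_2$) the inputs of $x(B_2)$ during an active excursion and keep the input fixed while the process rests in $\init_B$; this is legal because an input may change only right after the process moves, so I can preset the next excursion's first input at the end of the previous move.

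The verification splits into three checks. For \emph{validity}, note that at any step the set of states occupied by processes other than the mover is, in $y$, contained in the corresponding set in $x$ together with $\init_B$ (the resting member of $\{B_2,B_{n+1}\}$ contributes only $\init_B$). Since every conjunctive guard contains $\init_B$ by assumption and the mover in $y$ copies a transition of the corresponding process in $x$ whose guard was already satisfied, every guard remains satisfied. For the \emph{specification}, the schedule and the local runs of $A$ and $B_1$ are untouched, so $y(A,B_1)=x(A,B_1)$ as sequences and hence $y \models h(A,B_1)$. For \emph{fairness and initialization}, the processes $A,B_1,B_3,\dots,B_n$ inherit both properties from $x$, while $B_2$ and $B_{n+1}$ each move during infinitely many excursions and return to $\init_B$ between any two, so each moves infinitely often and visits $\init_B$ infinitely often.

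The main obstacle is the guard-preservation step, which is exactly the reason the naive disjunctive construction fails here: in a conjunctive system an additional process generally \emph{disables} transitions guarded by ``$\forall\neg q$''. The sharing construction circumvents this by parking the extra process in the neutral state $\init_B$ except while it genuinely carries an excursion of $B_2$, so it never introduces a fresh non-neutral state into any guard check; the initializing assumption is precisely what supplies the infinitely many parking opportunities needed to keep both shared processes fair.
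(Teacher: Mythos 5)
Your proposal is correct and follows essentially the same route as the paper's proof: the ``sharing'' construction in which the extra process and an existing $B$-process (not $B_1$, which is why $n \ge 2$ is needed) alternate replaying the donor's local run excursion by excursion, swapping roles at the infinitely many visits to $\init_B$ guaranteed by the initializing assumption, with guard preservation following because the resting process contributes only the neutral state $\init_B$. Your write-up is in fact somewhat more careful than the paper's, notably in the explicit input-handling for open systems and the excursion bookkeeping.
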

\begin{proof}
Given a unconditionally-fair initializing run $x$ of $\largesys$, we construct a unconditionally-fair initializing run $y$ in $(A,B)^{(1,n+1)}$, with one additional process $p$. 
First, copy all local runs of all processes of $(A,B)^{(1,n)}$ from the run $x$ into $y$.
Then, let process $p'$ stutter in $\init$ until some other process $p \neq B_1$ enters $\initstate$. 
Then, exchange the roles of processes $p'$ and $p$: let $p$ stutter in $\initstate$, while $p'$ takes the transitions of $p$ from the original run, until it enters $\initstate$. And so on.
In this way, we continue to interleave the run between $p'$ and $p$, and obtain a unconditionally-fair initializing run for all processes, with $y(A,B_1)=x(A,B_1)$. 
Thus, if $\largesys \models \pexists h(A,B_1)$, then $(A,B)^{(1,n+1)} \models \pexists h(A,B_1)$.
\end{proof}

\begin{restatable}[Bounding: Conj, Props, Fair]{lem}{ConjBoundingLemmaFair}
\label{le:FairConjunctiveBounding Lemma}
For unconditionally-fair initializing runs of conjunctive systems:
\begin{align*}
&\forall n \geq 1:\\
& (A,B)^{(1,1)} \models \pexists_{uncond} h(A,B_1)
\ \Implied \
(A,B)^{(1,n)} \models \pexists_{uncond} h(A,B_1)
\end{align*}
\end{restatable}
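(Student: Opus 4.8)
The plan is to mirror the construction of the non-fair bounding lemma (Lemma~\ref{le:ConjunctiveBoundingLemma}), exploiting the fact that in conjunctive systems discarding processes can only enable transitions, and then to observe that unconditional fairness of the original run is inherited by the constructed run for free — so that, in contrast to the non-fair case, no auxiliary infinitely-moving process is required to keep the run infinite. Concretely, assume $\largesys \models \pexists_{uncond} h(A,B_1)$ for some $n \geq 1$, witnessed by an unconditionally-fair (initializing) run $x$ of $\largesys$ with $x \models h(A,B_1)$. First I would set $y(A)=x(A)$ and $y(B_1)=x(B_1)$ and build a path $y$ of $(A,B)^{(1,1)}$ by retaining exactly the moments of $x$ at which $A$ or $B_1$ moves, in their original order (copying the corresponding inputs as well, which is legitimate for open systems since the environment input is ours to choose).

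The next step is to verify that every transition taken in $y$ is enabled. At each retained moment exactly one of $A,B_1$ moves, and just before the $m$-th retained move both $A$ and $B_1$ have performed the same number of moves as at the corresponding moment of $x$; hence the local states $s(A)$ and $s(B_1)$ in $y$ coincide with those in $x$ at that moment. Thus when, say, $A$ fires a transition with guard $g$ in $y$, the state of $B_1$ is the same as in $x$; since $x$ satisfied the conjunctive guard — so \emph{all} processes other than $A$, and in particular $B_1$, were in $g$ — the guard remains satisfied in $(A,B)^{(1,1)}$, where $B_1$ is the only other process. Because every retained moment is a genuine single move, no two processes move simultaneously and no global stutter arises, so the \interleave/\destutter bookkeeping of Lemma~\ref{le:ConjunctiveBoundingLemma} is trivial here, and by construction $y(A,B_1) \simeq x(A,B_1)$.

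It then remains to check the run-class conditions. Unconditional fairness of $x$ guarantees that $A$ and $B_1$ each move infinitely often in $x$, hence each moves infinitely often in $y$; since these are the only two processes of $(A,B)^{(1,1)}$, the run $y$ is infinite and unconditionally-fair. Initializing-ness transfers the same way: moving infinitely often in the initializing run $x$, processes $A$ and $B_1$ visit $\init$ infinitely often, so every infinitely-moving process of $y$ does too. Finally, since $h(A,B_1)$ is an \LTLmX\ formula it is stuttering-invariant, so $y(A,B_1)\simeq x(A,B_1)$ yields $y \models h(A,B_1)$, giving $(A,B)^{(1,1)} \models \pexists_{uncond} h(A,B_1)$.

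The construction is essentially routine once the conjunctive monotonicity of enabledness is in hand; the only point that deserves care — and the place where this lemma genuinely differs from its non-fair counterpart — is the fairness accounting. The potential pitfall is to reflexively carry over the extra infinitely-moving process from Lemma~\ref{le:ConjunctiveBoundingLemma} and thereby land at cutoff $2$ instead of $1$. The key realization is that fairness of $x$ already forces $A$ and $B_1$ to move forever, so copying precisely these two local runs secures infiniteness, unconditional fairness, and the initializing property simultaneously, with no additional copy of $B$.
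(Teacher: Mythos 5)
Your proposal is correct and is essentially the paper's own proof: the paper likewise constructs the run of $(A,B)^{(1,1)}$ by simply copying the local runs $x(A)$ and $x(B_1)$, relying on the fact that removing processes cannot disable conjunctive guards and that unconditional fairness of $x$ already forces both $A$ and $B_1$ to move infinitely often, so no auxiliary infinitely-moving process is needed. Your write-up merely spells out the enabledness, input-consistency, and stuttering-equivalence details that the paper leaves implicit.
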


\begin{proof}
Given an unconditionally-fair [initializing] run $x$ of $\largesys$ with $n>c$ construct an unconditionally-fair [initializing] run $y$ in the cutoff system $(A,B)^{(1,1)}$: copy the local runs of processes $A$, $B_1$.
%  and copy the behaviour of a process of $\largesys$ that moves infinitely often in the run $x$. Since $y$ is the result of removal of a number of local runs from $x$, it is an unconditionally-fair initializing run of $\cutoffsys$.
\end{proof}

\begin{restatable}[1-Conj, Props, Fair]{tightness}{TightConjPropRestrictedFair}
\label{obs:conj:tight_prop_fair}
The cutoff $c=2$ is tight for parameterized model checking of $\pexists h(A,B_1)$ 
on unconditionally-fair initializing runs in 1-conjunctive systems, 
i.e., 
there is a system type $(A,B)$ and property $\pexists h(A,B_1)$ 
which is satisfied by $(A,B)^{(1,1)}$ but not by $(A,B)^{(1,2)}$.
\end{restatable}
\begin{proof}
The figure below shows $(A,B)$, $\pexists h(A,B_1) = \pexists \FG (b_{init} \impl a_1)$.
\begin{figure}[h]
\centering
\vspace{-20pt}
\subfloat[Template A]{
\centering
%\makebox[0.4\textwidth][c]{
\scalebox{0.75}{% !TEX root = table.tex
\begin{tikzpicture}[node distance=1.5cm,inner sep=1pt,minimum size=0.5mm,->,>=latex]

\node[state,initial left] (init) {${init}_A$};
\node[state] (a_1) [right= of init] {$1_A$};

\path (init) edge [above] node {} (a_1);
\path (a_1)  [bend left=20] edge [below] node {} (init);

\end{tikzpicture}
  }
\label{fig:conj:tight_prop_fair_tmplA}
}%}
\hspace{1cm}
\subfloat[Template B]{
\centering
%\makebox[0.6\textwidth][c]{
\scalebox{0.75}{% !TEX root = table.tex
\begin{tikzpicture}[node distance=1.5cm,inner sep=1pt,minimum size=0.5mm,->,>=latex]

\node[state,initial left] (init) {${init}_B$};
\node[state] (b_1) [right= of init] {$1_B$};
\node[state] (b_2) [right= of b_1] {$2_B$}; 

\path (init) edge [above] node {$\forall\neg 1_B$} (b_1);
\path (b_1)  edge [above] node {$\forall\neg 1_A$} (b_2);
\path (b_2)  [bend left=20] edge [below] node {$\forall\neg 2_B$} (init);

\end{tikzpicture}
  }
\label{fig:conj:tight_prop_fair_tmplB}
}%}
\label{fig:conj:tight_prop_fair_tmpl}
\end{figure}
\end{proof}

\begin{restatable}[Monotonicity: Conj, Deadlocks, Fair]{lem}{ConjDeadlockMonotonicityLemmaFair}
\label{le:FairConjunctiveMonotonicityLemmaDeadlocks}
For 1-conjunctive systems on strong fair initializing or finite runs:
$$
\forall n\geq 1: (A,B)^{(1,n)} \textit{ has a deadlock}
\ \Impl\ 
(A,B)^{(1,n+1)} \textit{ has a deadlock}
$$
\end{restatable}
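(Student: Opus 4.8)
The plan is to mimic the construction used for the unfair monotonicity Lemma~\ref{le:ConjunctiveMonotonicityLemmaDeadlocks}, copying the local runs of the deadlocked processes and adding the fresh process $B_{n+1}$, but to take extra care that the resulting run of $\sys{1,n+1}$ stays inside the admitted class --- strong-fair and initializing for the local case, finite for the global case. The starting observation, which I would isolate as a small claim, is that in a $1$-conjunctive system no guard forbids $\init$ (guards have the form $(Q_A \cupdot Q_B)\setminus\{q\}$ with $q \neq \init_A,\init_B$ by the neutrality assumption). Hence a process idling in $\init_B$ is invisible to every guard, and, more generally, adding a process can only \emph{disable} transitions of the others. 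Consequently, whatever $B_{n+1}$ does, every process whose local run is copied from $x$ keeps exactly the enabledness it had in $\largesys$; in particular every process deadlocked in $x$ stays deadlocked in the constructed run, and strong fairness is vacuous for it.

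First I would split on the shape of the witnessing run $x$ of $\largesys$: (1) $x$ is a finite, globally deadlocked run; (2) $x$ is an infinite strong-fair initializing run that is locally deadlocked and in which some $B$-process moves infinitely often; (3) $x$ is locally deadlocked, strong-fair and initializing, and all $B$-processes are eventually deadlocked (so, since there is a single $A$-process, $A$ moves and visits $\init_A$ infinitely often). Case (2) is the easy one and reuses the sharing technique of Lemma~\ref{le:ConjMonFair}: I would copy every local run of $x$ and let $B_{n+1}$ \emph{share} the run of an infinitely-moving $B$-process $B_j$, the two alternating, one waiting in $\init_B$ while the other performs the transitions of $x(B_j)$, swapping roles each time $x(B_j)$ re-enters $\init_B$ (infinitely often, because the run is initializing). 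By the claim above the waiting process is invisible, so all copied transitions remain enabled and the deadlocked processes stay deadlocked; both $B_j$ and $B_{n+1}$ move and visit $\init_B$ infinitely often, so $y$ is strong-fair, initializing, and still locally deadlocked.

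For cases (1) and (3) I would use the ``keep $B_{n+1}$ in $\init_B$'' construction: copy $x$ and freeze $B_{n+1}$ in $\init_B$. Because idling in $\init_B$ changes no guard, the copied processes behave exactly as in $x$, so in case (1) all original processes are disabled at the deadlock moment and in case (3) all $B$-processes stay deadlocked while $A$ keeps moving. The main obstacle, and the only delicate point, is that the frozen $B_{n+1}$ may itself be \emph{enabled} in $\init_B$: then a finite run cannot end in $\bot$ (case 1), and an infinite run that never schedules $B_{n+1}$ violates strong fairness (case 3). I would resolve this by letting $B_{n+1}$ move in the sub-automaton of $B$ obtained by fixing the (frozen) states occupied by the other processes --- note that since guards only inspect \emph{other} processes, $B_{n+1}$'s own position never re-enables it, so this sub-automaton is static, and here $1$-conjunctivity, the single-$A$ assumption, and the initializing restriction are exactly what is needed.

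I expect this last step to be the crux. One must show that $B_{n+1}$, walking in that static sub-automaton, either reaches a state where all its transitions are blocked --- extending the copy to a genuine global deadlock in case (1), or adding one more deadlocked process in case (3) --- or can be scheduled along a cycle through $\init_B$, giving a strong-fair initializing infinite run in which the original processes remain deadlocked. The existence of the witness $x$ is what rules out the pathological behaviour of ``escaping into a non-initializing loop'': a global or all-$B$ deadlock in $\largesys$ forces the forbidding states of the blocked transitions to be populated and reachable, which is precisely the structural input that lets one route $B_{n+1}$ back through $\init_B$ or into a fresh blocked state. I would therefore phrase this case as a reachability argument in the frozen sub-automaton, invoking $1$-conjunctivity to keep the forbidding sets singletons and the single-$A$ assumption to control the $A$-induced part of the environment.
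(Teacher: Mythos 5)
Your proposal follows the paper's proof essentially step for step: the same three-way case split (a finite globally deadlocked run, handled by parking $B_{n+1}$ in $\init_B$ and conceding that the result may only be locally deadlocked; a local deadlock with an infinitely moving $B$-process, handled by the $\init_B$-sharing trick of Lemma~\ref{le:ConjMonFair}; and all $B$-processes deadlocked, handled by freezing $B_{n+1}$ in $\init_B$ until the deadlock moment and exploiting exactly the facts you cite --- $1$-conjunctivity, the initializing restriction, and the uniqueness of $A$, which in the paper appear as $Q_A \cap DeadGuards = \emptyset$). The one place where you go beyond the paper, the ``frozen sub-automaton'' reachability argument for what $B_{n+1}$ does after the deadlock moment (either getting permanently disabled or being routed through $\init_B$ to preserve strong fairness and the initializing property), is precisely the step the paper compresses into ``schedule $B_{n+1}$ arbitrarily,'' so your treatment of that crux is more explicit than, but consistent with, the paper's.
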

\begin{proof}\ak{check the minimal value of $n$ (1 or 2?)}
Let $x$ be a globally deadlocked or locally deadlocked strong-fair initializing run of $(A,B)^{(1,n)}$.
We will build a globally deadlocked or locally deadlocked strong-fair initializing run 
of $(A,B)^{(1,n+1)}$.

If $x$ is finite, then $y$ is the copy of $x$, and the new process stays in $\init_B$
until every process become deadlocked, and then is scheduled arbitrarily.
Note that $y$ constructed this way may be locally deadlocked 
rather than globally deadlocked as $x$ is.

Now consider the case when $x$ is locally deadlocked strong-fair initializing.

Let $\mD$ be the set of deadlocked $B$-processes in $x$, and $d$ be the moment 
when the processes become deadlocked.

Consider the case $\visInf{\mB\smi\mD}{x} \neq \emptyset$:
copy $x$ into $y$, and let the new process $B_{n+1}$ wait in $\init_B$ 
and interleave the roles with a process $B$ that moves infinitely often in $x$, 
similarly to as described in the proof of Lemma~\ref{le:ConjMonFair}.

Consider the case $\visInf{\mB\smi\mD}{x} = \emptyset$:
every $B$ process of $(A,B)^{(1,n)}$ is deadlocked and thus $\mD = \mB$.
Define 
$$
DeadGuards = \{\ q \| \exists P \in \mD
                      \textit{ with a transition guarded ``\,}
                      {\forall \neg q} 
                      \textit{\!'' in } (s_d(P),e_d(P))\ \}.
$$
Note that $Q_A \cap DeadGuards = \emptyset$, because $A$ visits infinitely often $\init_A$
and we consider 1-conjunctive systems.
Hence, copy $x$ into $y$, and let the new process $B_{n+1}$ wait in $\init_B$ 
until every process $B_1,...,B_n$ become deadlocked, and then schedule $B_{n+1}$ arbitrarily.
%
% See the proof of Lemma~\ref{le:ConjunctiveMonotonicityLemmaDeadlocks}.
%%% AK: this won't work because the process B_{n+1} in init should move inf often or deadlock,
%%% the arbitrary scheduling can lead to unlocking of every one
\end{proof}

\begin{restatable}[Bounding: 1-Conj, Deadlocks, Fair]{lem}{ConjDeadlockLemmaFair}
\label{le:FairConjunctiveBoundingLemmaDeadlocks}
For 1-conjunctive systems on strong-fair initializing or finite runs:
\ak{no real need for initializing -- but easier to explain}
\li
  \- with $c=2|Q_B\smi \{ \init \}|$ and any $n>c$:
  $$
  \cutoffsys \textit{ has a global deadlock} 
  \ \Implied\ 
  \largesys \textit{ has a global deadlock}
  $$

  \- with $c=2|Q_B\smi \{ \init \}|+1$ and any $n>c$ (when $|Q_B|>2$):
  $$
  \cutoffsys \textit{ has a local deadlock} 
  \ \Implied\ 
  \largesys \textit{ has a local deadlock}
  $$

  \- with $c=2|Q_B\smi \{ \init \}|$ and any $n>c$:
  $$
  \cutoffsys \textit{ has a deadlock} 
  \ \Implied\ 
  \largesys \textit{ has a deadlock}
  $$
\il
\end{restatable}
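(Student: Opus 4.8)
Write $m := |Q_B \smi \{ \init \}|$. I would establish the three implications in the order global, local, general, and throughout exploit the basic feature of conjunctive systems already used in Lemma~\ref{le:ConjunctiveBoundingLemmaDeadlocks}: deleting processes can never disable a transition that was enabled, so a run of $\cutoffsys$ is obtained by copying a well-chosen subset of the local runs of a deadlocked run $x$ of $\largesys$. The only real work is then (i) to keep enough occupants in the states that block the deadlocked processes so that these processes stay disabled, and (ii) to let the surviving non-deadlocked processes satisfy strong fairness.

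The global-deadlock implication is immediate: a globally deadlocked run is finite, so the hypothesis ``strong-fair or finite'' is vacuous and the finite construction of Lemma~\ref{le:ConjunctiveBoundingLemmaDeadlocks} applies verbatim, keeping at most two processes in each occupied non-initial state and hence using $c = 2m$; this subcase needs no $1$-conjunctivity. For the local-deadlock implication I would run the setup-plus-looping construction of Section~\ref{sec:ideas-conj-deadlock}. With $\mD$ the stuck $B$-processes and $d$ the moment from which they are all disabled, split the deadlocked states into $\dead_2$ (self-blocking: some stuck process at $q$ has a transition guarded $\forall\neg q$, so $q$ must stay doubly occupied) and $\dead_1$. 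The setup copies $x(A)$, one stuck process per $\dead_1$-state, two per $\dead_2$-state, one mover reaching each state of $DeadGuards \smi \dead$, and one spare held in $\init$ until $d$. Since $\dead_1$, $\dead_2$ and $DeadGuards \smi \dead$ are pairwise disjoint subsets of $Q_B \smi \{\init\}$ (Fig.~\ref{fig:ideas:conj-deadlocks-venn}), the count is $|\dead_1| + 2|\dead_2| + |DeadGuards \smi \dead| + 1 \le m + |\dead_2| + 1 \le 2m+1$, which requires $m \ge 2$, i.e.\ $|Q_B|>2$. The looping phase rotates the spare with the movers on $DeadGuards \smi \dead$ so that every such guard state stays continuously occupied while each mover revisits $\init$ infinitely often.

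For the general-deadlock implication the target cutoff $2m$ is one below the local bound, so a plain ``take the maximum'' argument does not work; instead I would show that the extra unit is forced only in a single configuration and can there be traded for a global deadlock. Concretely, $|\dead_1| + 2|\dead_2| + |DeadGuards \smi \dead| + 1$ reaches $2m+1$ only when $\dead_1 = \emptyset$, $\dead_2 = Q_B \smi \{\init\}$ and $DeadGuards \smi \dead = \emptyset$; when $DeadGuards \smi \dead \neq \emptyset$ disjointness forces $|\dead_2|\le m-1$ and the count is already $\le 2m$, and when it is empty but some process (e.g.\ $A$) moves infinitely often the spare is unnecessary. In the remaining extreme configuration $A$ does not move infinitely often, so it is eventually stuck and, all guard states lying in $\dead$, dropping the spare leaves $2m$ stuck $B$-processes (two per state) together with a stuck $A$: a \emph{global} deadlock, which is still a deadlock. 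Hence $c = 2m$ suffices, the global subcase being absorbed by the first implication.

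The main obstacle is the correctness of the looping phase. I must prove an occupancy invariant: while one mover leaves a guard state $q_i$ to walk back to $\init$ and the spare walks from $\init$ into $q_i$ (each mimicking a single $\largesys$-process that visits $q_i$ and $\init$ infinitely often after $d$), every state of $DeadGuards$ stays occupied at all times, so no process of $\mD$ is ever re-enabled, and simultaneously the set of occupied states never grows in a way that would disable a copied walk transition $\forall\neg q$. This is exactly where $1$-conjunctivity is indispensable: it makes $DeadGuards$ a static set of \emph{single} states that each must be populated, rather than a family of alternative sets one of which must be hit (Remark~\ref{rem:general-conj-tough}); for general guards the invariant has no such simple form. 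Checking that the resulting run is strong-fair, initializing, and preserves the deadlock within the $2m+1$ (resp.\ $2m$) budget is the delicate remainder.
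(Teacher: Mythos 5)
Your proposal matches the paper's proof essentially step for step: the global case falls back to the unfair bounding lemma (Lemma~\ref{le:ConjunctiveBoundingLemmaDeadlocks}) since fairness is vacuous on finite runs; the local case uses the same setup-plus-looping construction with the $\dead_1$/$\dead_2$/$DeadGuards$ split, the occupancy invariant resting on $1$-conjunctivity, and the same worst-case count $1 + 2|\dead_2| + |\dead_1| \le 2|Q_B \smi \{\init\}|+1$; and your trade of the $\init_B$-spare for a globally deadlocked run in the extreme configuration (all blocking states in $\dead$, $A$ eventually stuck) is exactly the paper's modification that brings the general-deadlock cutoff down to $2|Q_B \smi \{\init\}|$. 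The only cosmetic divergence is your parenthetical deriving the side condition $|Q_B|>2$ from the counting, which does not actually follow from the arithmetic (the paper likewise just assumes it) but does not affect correctness.
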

\begin{proof}
\providecommand{\deadOne}{\dead_1}
\providecommand{\deadTwo}{\dead_2}

\myparagraph{Global Deadlocks}
$c=2|Q_B \smi \{\init_B\}|$, 
see Lemma~\ref{le:ConjunctiveBoundingLemmaDeadlocks}, 
the fairness does not matter on finite runs.

\myparagraph{Local Deadlocks}
Let $c=2|Q_B\smi \{ \init_B \}|$. 
Let $x= (s_1,e_1,p_1)\ldots$ be a locally deadlocked strong-fair intitializing run 
of $\largesys$ with $n>c$. 
We construct a locally deadlocked strong-fair initializing run $y$ of $\cutoffsys$.

Let $\mD$ be the set of deadlocked processes in $x$. 
Let $d$ be the moment in $x$ starting from which every process in $\mD$ is deadlocked.

Let $\dead(x)$ be the set of states in which processes $\mD$ of \largesys
are deadlocked.

Let $\deadTwo(x) \subseteq \dead(x)$ be the set of deadlocked states such that: 
for every $q \in \deadTwo(x)$, 
there is a process $P \in \mD$ with $s_d(P) = q$ 
and that for input $e_{\geq d}(P)$ has a transition guarded with ``$\forall \neg q$''.
Thus, a process in $q$ is deadlocked with $e_d(P)$
only if there is another process in $q$ in every moment $\geq d$.

Let $\deadOne(x) = \dead(x)\smi\deadTwo(x)$.
I.e., 
for any $q \in \deadOne(x)$, there is a process $P$ of \largesys 
which is deadlocked in $s_d(P) = q$ with input $e_d(P)$,
and no transitions from $q$ with input $e_d(P)$ are guarded with ``$\forall \neg q$''.

Define
$$
DeadGuards = \{\ q \| \exists P \in \mD
                      \textit{ with a transition guarded ``\,}
                      {\forall \neg q} 
                      \textit{\!'' in } (s_d(P),e_d(P))\ \}.
$$
We illustrate properties of sets 
$DeadGuards$, $\deadOne$, $\deadTwo$, $\visInf{\mB\smi\mD}{x}$ 
in Fig.~\ref{fig:conj-deadlocks-venn}.
\ak{check how $A$'s states affect all those sets, currently i assumed that they are all subsets of $Q_B$}

\begin{figure}[h]
\begin{mdframed}
\centering
\includegraphics[width=0.7\textwidth]{img/conj-deadlocks-venn.png}
\captionsetup{singlelinecheck=off}
\caption[fig:conj-deadlocks-venn]{%
Venn diagram for sets $DeadGuards$, $\deadOne$, $\deadTwo$, $\visInf{\mB\smi\mD}{x}$:
\begin{itemize}
\item[($q_1$)] $\deadOne \cap DeadGuards \cap \visInf{\mB\smi\mD}{x} \neq \emptyset$ is possible:
               in $x$, 
               there is a process deadlocked in state $q_1$,
               there is a non-deadlocked process that visits $q_1$ infinitely often,
               and there is a process deadlocked in a state $q \neq q_1$ 
               with a transition guarded ``$\forall \neg q_1$'' 

\item[($q_3$)] $\deadOne \cap DeadGuards \smi \visInf{\mB\smi\mD}{x} \neq \emptyset$ is possible:
               similarly to $q_1$, 
               except that no non-deadlocked processes visit $q_3$ infinitely often

\item[($q_2$)] $\deadOne \smi (\visInf{\mB\smi\mD}{x} \cup DeadGuards) \neq \emptyset$ is possible:
               in $x$, 
               there is a process deadlocked in state $q_2$,
               no other processes visit $q_2$ infinitely often,
               and no processes are deadlocked with a transition guarded ``$\forall \neg q_2$''

\item[($q_4$)] $DeadGuards \smi \dead \neq \emptyset$ is possible:
               there is a process deadlocked in a state $q \neq q_4$ 
               with a transition guarded  ``$\forall \neg q_4$''

\item[($q_5$)] $\deadTwo \cap \visInf{\mB\smi\mD}{x} \cap DeadGuards \neq \emptyset$ is possible:
               there is at least one process deadlocked in $q_5$ with a transition guarded ``$\forall \neg q_5$'',
               and some non-deadlocked process visits $q_5$ infinitely often
               (this process does not deadlock in $q_5$, 
                because in $q_5$ it receives an input different from that of the deadlocked processes)

\item[($q_6$)] $\deadTwo \cap DeadGuards \smi \visInf{\mB\smi\mD}{x} \neq \emptyset$ is possible:
               similarly to $q_5$, except no non-deadlocked processes visit $q_6$ infinitely often
\end{itemize}
}
\label{fig:conj-deadlocks-venn}
\end{mdframed}
\end{figure}

Let us assume $DeadGuards \neq \emptyset$ -- the other case is straightforward.\ak{check}

The construction has two phases, the setup and the looping.
The setup phase is:
\li
\-[a.] $y(A) = x(A)$

\-[b.] for every $q \in \deadOne$: 
   devote one process of \cutoffsys that copies 
   a process of \largesys deadlocked in $q$

\-[c.] for every $q \in \deadTwo \setminus \visInf{\mB\smi\mD}{x}$: 
   devote two processes of \cutoffsys that copy 
   the behaviour of two processes of \largesys that deadlock in $q$

\-[d.] for every $q \in \deadTwo \cap \visInf{\mB\smi\mD}{x}$:
   in $x$, 
   there is a process, $B_q^\inf \in \mB\smi\mD$, that visits $q$ infinitely often,
   and there is a process, $B_q^\bot \in \deadTwo$, deadlocked in $q$.
   Then:
\li
   \-[1.] devote one process of \cutoffsys that copies the behaviour of $B_q^\bot$
   \-[2.] devote one process of \cutoffsys that copies the behaviour of $B_q^\inf$ 
          until it reaches $q$ at a moment after $d$,
          and then provide the same input as $B_q^\bot$ receives at moment $d$.
          This will deadlock the process.
\il

\-[e.] for every $q \in DeadGuards \setminus \dead$:
       note that $q \in \visInf{\mB\smi\mD}{x}$ and, thus, there is a process, 
       $B_q^\inf \in \mB\smi\mD$, 
       that visits $q$ infinitely often.
       Devote one process of \cutoffsys that copies the behaviour of $B_q^\inf$ 
       until it reaches $q$ at a moment after $d$

\-[f.] if $DeadGuards \setminus \dead \neq \emptyset$ 
       or $A \in \mD$,
       then devote one process that stays in $\init_B$.
       The process will be used in the looping phase to ensure that the run $y$ is infinite,
       and that every process of \cutoffsys used in (e) 
       moves infinitely often (and thus $y$ is strong-fair).
%       Note that if $A$ moves infinitely often in $x$ and $DeadGuards \smi \dead = \emptyset$,
%       then there is no need for such additional infinitely moving process.

\-[g.] let any other process of \cutoffsys (if any) 
       copy behaviour of a process of \largesys 
       that was not used in the construction so far (including this step)
\il
\ak{go through every item, and prove it is necessary (by giving an example)}
The setup phase ensures: 
in every state $q \in \dead$,
there is at least one process deadlocked in $q$ at moment $d$ in $y$. 
Now we need to ensure that the non-deadlocked processes described 
in steps (e) and (f) move infinitely often.

The looping phase is applied to processes in (e) and (f) only\footnote%
{%
  If there are no such processes, then the setup phase produces the sought run $y$.
}.
Order arbitrarily 
$DeadGuards \smi \dead = (q_1,\ldots,q_k) \subseteq \visInf{\mB\smi\mD}{x}$.
Note that $\init_B \not\in (q_1,...,q_k)$.
Let $\mP$ be the set of processes of \cutoffsys used in steps (e) or (f).
Note that $|\mP| = |(q_1,...,q_k)| + 1$.

The looping phase is: set $i=1$, and repeat infinitely the following.
\li
  \- let $P_\init \in \mP$ be the process that is currently in $\init_B$, 
     and $P_{q_i} \in \mP$ -- in $q_i$
     
  \- let $B_{q_i} \in \visInf{\mB\smi\mD}{x}$ be a process of \largesys 
     that visits $q_i$ and $\init_B$ infinitely often.
     Let $P_\init$ of \cutoffsys copy transitions of $B_{q_i}$
     on some path $\init_B \to \ldots \to g_i$,
     then let $P_{g_i}$ copy transitions of $B_{q_i}$ on some path 
     $g_i \to \ldots \to \init_B$. 
     For copying we consider only the paths of $B_{q_i}$ that happen after moment $d$.

  \- $i=i \oplus 1$
\il

The number of copies of $B$ that the construction uses in the worst case is 
(if ignore (g), assume $Q_B>2$, $DeadGuards \smi \dead = \emptyset$, and $A \in \mD$):
$$
1_{(f)} + 2|\deadTwo|_{(c),(d)} + |\deadOne|_{(b)} 
 \leq 
2|Q_B \smi \{\init_B\}| + 1.
$$

\myparagraph{Deadlocks}
The largest value of $c$ among those for ``Local Deadlocks'' 
and for ``Global Deadlocks'' can be used as the sought value of $c$ 
for the case of general deadlocks.
But it will not be the smallest one.
In the proof of the case ``Local Deadlocks'', in the setup phase, 
item (e) can be modified for the case when $A \in \mD$:
since we do not need to ensure that $y$ is infinite, 
we avoid allocating a process in state $\init_B$.
For a given locally deadlocked strong-fair run, the setup phase may produce
the globally deadlocked run, but that is allright for the case of general deadlocks.
With this note, for the general case $c = 2|Q_B \smi \{\init_B\}|$.

\end{proof}

\begin{restatable}[1-Conj, Deadlocks, Fair]{tightness}{TightConjDeadlockRestrictedFair}
\label{obs:conj:tight_deadlock_fair}
The cutoff $c=2|B|-2$ is tight for deadlock detection on strong-fair initializing
or finite runs in the 1-conjunctive systems, 
i.e., 
for any $k>2$ there is a system type $(A,B)$ with $|B|=k$ such that 
there is a strong-fair initializing deadlocked run in $(A,B)^{(1,2|B|-2)}$, 
but not in $(A,B)^{(1,2|B|-3)}$.
\end{restatable}

\begin{proof} 
Consider the same templates as in Observation~\ref{obs:conj:tight_deadlock}.
%
% AK: we can claim the below, but we need to note that 
% the monotonicity lemma for local deadlocks also holds (straightforward)
% let's comment this out for now.
%  Note that the cutoff $c=2|B|-1$ stated in the previous Lemma 
% for the case of local deadlocks is also tight.
% To prove this, take the templates from Observation~\ref{obs:conj:tight_deadlock}
% and modify slightly the template B:
% add the unguarded self-loop to $\init$.
%%%
% \begin{figure}[Htb]
% \centering
% \subfloat[Template A]{
% \centering
% \makebox[0.4\textwidth][c]{
% \scalebox{0.75}{\input{img/conj_tight_deadlock_tmplA}}
% \label{fig:conj:tight_deadlock_tmplA}
% }}
% \subfloat[Template B]{
% \centering
% \makebox[0.6\textwidth][c]{
% \scalebox{0.75}{\input{img/conj_tight_deadlock_tmplB}}
% \label{fig:conj:tight_deadlock_tmplB}
% }}
% \caption{Templates $(A,B)$ used to prove the tightness of the cutoff $c=2|B|-2$ for the deadlock detection in 1-guard conjunctive systems.
% In the figure the edge with $\forall{\neg b_1},\ldots,\forall{\neg b_k}$ denotes edges with guards $\forall{\neg b_1},\ldots,\forall{\neg b_k}$ (Observation~\ref{obs:conj:tight_deadlock}).\ak{check me}}
% \label{fig:conj:tight_dead_tmpl}
% \end{figure}
\end{proof}

\ifwithextensions      %%%%%%%%%%%%%%%%%%%%%%%%%%%%%%%%%%
\subsection{Label-dependent Cutoffs for Conjunctive Systems}

To obtain label-dependent cutoffs, we need to re-define the special role of the initial state for conjunctive guards: in this case, we assume that the initial state has a labeling $l_\initstate$, and this labeling must be part of any guard.

\begin{restatable}[Deadlock Detection, Label-based]{lem}{ConjDeadlockLemmaLabel}
\label{le:ConjunctiveBoundingLemmaDeadlocksLabel}
Let $c=2\card{\labelings_B}-2$.
If there are no local deadlocks 
in the conjunctive system $\cutoffsys$, then for any $n \ge c$ the conjunctive system $\largesys$ has no local deadlocks.
\end{restatable}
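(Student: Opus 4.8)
The plan is to mirror the state-based conjunctive deadlock bounding lemma (Lemma~\ref{le:ConjunctiveBoundingLemmaDeadlocks}), replacing concrete states by labelings throughout. The enabling observation is that in a conjunctive system whose guards refer only to labelings, whether a local transition is enabled depends solely on the \emph{set of labelings} carried by the other processes, so two processes with the same labeling are interchangeable as far as guards are concerned. Exactly as in the proof of Lemma~\ref{le:ConjunctiveBoundingLemma}, if $y$ is obtained from a run $x$ of $\largesys$ by keeping a subset of the processes, each following its original local run, then at every moment the set of labelings present in $y$ is a subset of the set present in $x$; hence no transition that was enabled in $x$ becomes disabled in $y$ (a guard ``$\forall\neg L$'' satisfied in $x$ requires $L$ absent, and $L$ then stays absent in $y$). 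This supplies one half of a sandwich.

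First I would reduce, via the monotonicity lemma (Lemma~\ref{le:ConjunctiveMonotonicityLemmaDeadlocks}) and the usual case split on which of $A$ and $B_1$ moves infinitely often or deadlocks, to the case of a locally deadlocked run $x$ of $\largesys$ with $n>c$ in which $B_1$ is disabled from some moment $d$ on and at least one process moves infinitely often (the all-deadlocked subcase is the global-deadlock construction, already available). Let $\mD$ be the deadlocked processes and collect the \emph{blocking labelings}: those $L\neq l_\initstate$ for which some process in $\mD$ has an outgoing transition guarded ``$\forall\neg L$''. By definition of the deadlock, every blocking labeling is present in $x$ at all moments $\ge d$.

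Then I would build $y$ in $\cutoffsys$ by copying the local runs of $A$ and of the deadlocked $B_1$, and for every blocking labeling $L$ devoting witness processes that reach a state carrying $L$ by moment $d$ and then stay there. To guarantee that $L$ can be kept present for \emph{all} moments $\ge d$ while staying within the subset-of-$x$ regime, one in general relays two witnesses per labeling, exactly as ``two processes flood $q$'' in the disjunctive constructions and as the self-blocking states of $\dead_2$ in Lemma~\ref{le:ConjunctiveBoundingLemmaDeadlocks}. Since $l_\initstate$ lies in every guard and hence never needs a witness, and every other labeling contributes at most two witnesses, the construction uses at most $2(\card{\labelings_B}-1)=2\card{\labelings_B}-2=c$ copies of $B$; one infinitely moving process (reusing a witness, or an extra process kept within the count by the case analysis) keeps $y$ infinite. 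Correctness reduces to two inclusions: at every moment the set of labelings present in $y$ is a subset of the set present in $x$, so no process that moved in $x$ is newly blocked; and at moments $\ge d$ this set contains every blocking labeling, so $B_1$ stays disabled. The first follows from using a subset of the processes of $x$, the second from the witnesses.

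The main obstacle is maintaining these two inclusions \emph{simultaneously}. Stuttering a witness in an $L$-state is safe only because $L$ is present in $x$ at every moment $\ge d$; the delicate point is doing this for labelings that are carried by a \emph{changing} set of processes in $x$ without ever introducing into $y$ a labeling that is \emph{absent} in $x$ at that moment and that guards a transition the infinitely moving process needs. This is precisely where a single witness per labeling can fail, forcing the relay of a second, and where, for genuinely general conjunctive guards, the clean ``$\forall\neg L$'' reasoning breaks down (cf.\ Remark~\ref{rem:general-conj-tough}). Establishing that two witnesses per labeling always suffice, and that the witnesses are reachable in $y$, is the crux; it is also the step I expect to leave the bound $2\card{\labelings_B}-2$ possibly not tight, just as in the state-based case.
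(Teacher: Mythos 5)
Your proposal matches the paper's own proof, which consists of the single remark that the argument of Lemma~\ref{le:ConjunctiveBoundingLemmaDeadlocks} carries over with visited labelings in place of visited states; your blocking-labeling witnesses, the interchangeability of equally-labeled processes, and the two set inclusions (labelings present in $y$ are a subset of those in $x$ at each moment, and every blocking labeling stays populated after $d$) are exactly the label-based reading of that state-based construction, with the same $2(\card{\labelings_B}-1)$ accounting via the special labeling $l_\initstate$ contained in all guards. You in fact supply considerably more detail than the paper, and your closing caveat --- that keeping a blocking labeling populated without ever introducing a labeling absent in $x$ may force a relay of witnesses --- flags a genuine subtlety of the labeled setting (a process can move between two \emph{distinct} states carrying the same labeling $L$ via a transition guarded ``$\forall\neg L$'', so a forever-stuttering witness could block a copied transition, a scenario the state-based argument excludes) that the paper's one-line proof silently glosses over.
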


\iffinal
\else
\begin{proof}
The proof works in the same way as for Lemma~\ref{le:ConjunctiveBoundingLemmaDeadlocks}, except that we consider visited labelings instead of states.
\end{proof}
\fi
\fi     % \ifwithextensions   %%%%%%%%%%%%%%%%%%%%%%%%5

\iffinal
\else
\subsection{Cutoffs for 2-Conjunctive-Disjoint Guard Systems}
\ak{finished, but not polished}

\begin{restatable}[Deadlock Detection, 2-Disjoint Guards]{lem}{blabla}
For 2-conjunctive-disjoint systems on strong-fair initializing runs, 
with $c\approx 2|Q_B \smi \{\init\}|$\ak{calc} and any $n>c$:
  $$
  \cutoffsys \textit{ has a deadlock} 
  \ \Implied\ 
  \largesys \textit{ has a deadlock}
  $$
\end{restatable}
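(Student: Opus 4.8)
The plan is to follow the architecture of the $1$-conjunctive fair deadlock bounding lemma (Lemma~\ref{le:FairConjunctiveBoundingLemmaDeadlocks}), adapting the setup and looping phases to the richer guard structure. As there, the global-deadlock case comes essentially for free: on a finite run fairness is vacuous, so the construction of Lemma~\ref{le:ConjunctiveBoundingLemmaDeadlocks} (whose global part never uses $1$-conjunctivity) already yields a globally deadlocked run of $\cutoffsys$ with $c = 2|Q_B\smi\{\init\}|$. The work is therefore concentrated in the local-deadlock case, and the final cutoff is obtained by taking the maximum over the two cases, exactly as in the earlier lemmas, which also explains the ``$\approx$'' (a possible $+1$ as in the $1$-conjunctive local case).

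For the local case I would start from a locally deadlocked strong-fair initializing run $x$ of $\largesys$ with $n>c$. Let $\mD\subseteq\mB$ be the deadlocked $B$-processes, $d$ the moment from which all of $\mD$ is frozen, and $\dead(x)$ the set of states they occupy. For each $P\in\mD$ every outgoing transition available for input $e_d(P)$ is now guarded by a two-element forbidden set $\{g_1,g_2\}$ instead of a single state; I collect these into a family $\mathcal G$ of guard pairs. Splitting $\dead(x)$ into $\dead_2(x)$ (states $q$ blocked by a transition that forbids $q$ itself, hence requiring a second resident as blocker) and $\dead_1(x)=\dead(x)\smi\dead_2(x)$ works verbatim, and in the setup phase I copy $x(A)$, copy one deadlocked process per state of $\dead_1(x)$ and two per state of $\dead_2(x)$, and reserve one process for $\init_B$.

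The genuinely new ingredient concerns $\mathcal G$. In the $1$-conjunctive proof the blocking set is \emph{static}: each forbidden singleton stays occupied at every moment $\ge d$, so maintaining occupancy of $DeadGuards\smi\dead$ through the looping phase suffices. For pairs this fails, since in $x$ the occupied member of a pair $\{g_1,g_2\}$ may alternate over time (cp.\ Remark~\ref{rem:general-conj-tough}). My remedy is to commit, for each pair in $\mathcal G$, to a single fixed \emph{representative} that $y$ keeps occupied at all moments $\ge d$: if the pair meets $\dead(x)$ I take that deadlock state, which the setup phase already keeps occupied; otherwise both members are non-deadlock states covered at every moment $\ge d$, so by pigeonhole one of them lies in $\visInf{\mB\smi\mD}{x}$ and I pick such a member. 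Representatives drawn from $\visInf{\mB\smi\mD}{x}$ are then kept permanently populated by the looping construction of Lemma~\ref{le:FairConjunctiveBoundingLemmaDeadlocks}: processes cycle $\init_B\to\cdots\to g\to\cdots\to\init_B$ along paths that $x$ takes after $d$, so $g$ is never vacated while every non-deadlocked process still visits $\init_B$ infinitely often (initializing) and moves infinitely often (strong-fair).

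The main obstacle, and the place where $2$-conjunctive-\emph{disjointness} is indispensable, is correctness of this commitment: the fixed representatives must (i) keep every process of $\mD$ deadlocked and (ii) not disable the transitions used by the looping processes. For (i), a representative is one member of each forbidden pair, so its permanent occupancy disables the corresponding deadlocked transition regardless of the other member, and freezing all representatives reinstates the deadlock. For (ii), a looping process traversing a template transition whose guard forbids $\{h_1,h_2\}$ needs $h_1,h_2$ empty at that instant; disjointness of the guard sets guarantees that $\{h_1,h_2\}$ is disjoint from every deadlock pair in $\mathcal G$, so the maintained representatives never sit on $h_1$ or $h_2$, and the transition that $x$ performed after $d$ stays enabled in $y$. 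Disjointness also keeps representatives of distinct pairs from coinciding, so the process budget matches the $1$-conjunctive count, namely one resident per $\dead_1$ state, two per $\dead_2$ state, and the looping processes for the remaining representatives together with one $\init_B$ process, giving $c\approx 2|Q_B\smi\{\init\}|$ copies of $B$.
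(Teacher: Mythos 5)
Your overall architecture (setup phase plus looping phase, cutoff as a maximum over the global and local cases) matches the paper's, but the one genuinely new ingredient you introduce --- committing to a single \emph{fixed representative} per blocking pair and keeping it permanently occupied --- is exactly where the proof breaks. Disjointness of the guard family gives you that any two guard sets are either \emph{equal or disjoint}; it does not give you that a transition fired after moment $d$ by a non-deadlocked process has a guard disjoint from every blocking pair, because that guard may be \emph{equal} to a blocking pair. This is not a removable corner case: in $x$, the sole blocker of a pair $\{g_1,g_2\}$ may legitimately fire $\transition{g_1}{g_2}{\forall\neg\{g_1,g_2\}}$, since the guard constrains only \emph{other} processes, so the occupancy witnessing the deadlock can strictly alternate between $g_1$ and $g_2$ forever, with handovers realized precisely by within-pair transitions guarded by the pair itself. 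Your construction then fails on both of your correctness claims: for (i), there may be no single member of the pair that is occupied at \emph{all} moments $\geq d$ in $x$, so picking the member in $\visInf{\mB\smi\mD}{x}$ gives a state visited infinitely often, not one that $x$ demonstrates can be held continuously; for (ii), the paths $\init_B \rightsquigarrow g \rightsquigarrow \init_B$ that your relay processes copy may contain transitions into, within, or out of the pair that are guarded by that very pair, and these are disabled in $y$ exactly because you parked a process on the representative. The baton-passing argument you import from the $1$-conjunctive lemma rested on the fact that no transition fired after $d$ forbids a state of $DeadGuards$; for pairs that fact is false, which is the content of Remark~\ref{rem:general-conj-tough} and of the ``flitting'' scenario the paper sketches for overlapping and embedded guards.

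This alternation problem is what the paper's proof is almost entirely about. Instead of freezing one member, it keeps each blocking guard occupied by a relay that respects the within-pair dynamics of $x$: each loop $\init \rightsquigarrow g \rightsquigarrow \init$ of an infinitely moving blocker is first normalized into a \emph{simplified loop} (removing state-repeating subloops in its three parts), after which every blocking guard is traversed either in one step $enter \to exit$ or as $enter \to \neg g \rightsquigarrow exit$; a process is stationed at the \emph{exit} state of each blocking guard on the loop; the currently moving process mimics $B_m$ only up to the $enter$ state and then \emph{waits} until $B_m$ reaches the $exit$ state in $x$ --- guaranteeing that no within-guard transition is copied while another process sits in the guard --- at which point the process at the exit state takes over the baton; a separate \emph{preparation} step handles guards traversed in a single step. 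None of this enter/exit, waiting, and hand-over machinery has a counterpart in your proposal, and without it your looping phase cannot be carried out, so the proof has a genuine gap at its central step. (Your global-deadlock reduction to Lemma~\ref{le:ConjunctiveBoundingLemmaDeadlocks} and your claim (i) --- that occupying one member of each forbidden pair suffices to keep the deadlocked processes disabled --- are fine.)
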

\begin{proof}
$BlockGuards$ are the guards that are not blocked by the deadlocked processes,
and, thus, must be blocked for the local deadlocks to occur.

Inf-moving processes visit such guards by looping 
$\init \rightsquigarrow \init$.
Thus, through every guard some process goes inf-often.
Link with a guard a process of the large system, called $B_m$, 
that visits it inf-often.

Roughly, the idea of the construction is the following.
We devote one process that starts in $\init$.
For every blocking guard, in the cutoff system, 
a process called $B_m'$ will emulate a loop 
$\init \rightsquigarrow \init$ of process $B_m$ of the large system.
In the loop of the process $B_m$, 
blocking guard states are visited in some order: 
some states are entered through and some are exited through.
First, we ensure that there is a process in every ``exit'' state of every 
blocking guard on the loop of $B_m$.
Second, let $B_m'$ mimick transitions of $B_m$ till it reaches the first
``enter'' state, 
then let the process, that is currently in the ``exit'' state, 
continue mimicking the transitions of $B_m$, and so on.
This way we reach the blocking guard under consideration and then reach $\init$.

The realitiy is more involved: 
the main complication is that the same blocking guard may be visited 
several times, thus, strictly speaking, ``enter'' and ``exit'' states
are not well-defined in the above.
Below we decribe the details.

\smallskip
\noindent
\emph{Notes.}
Since there is only a finite number of transitions and blocking guards 
but runs are infinite, after some moment:
\li
\-[1.] Any loop's transition fires inf-often.
\-[2.] Any loop's path between two blocking guards fire inf-often.
\il

Below we consider only such looping paths.

\smallskip
\noindent
\emph{Simplification.}
Given a blocking guard $g$ and a loop that visits $g$, 
divide the loop into three parts:
\li
\- $\init \rightsquigarrow g$ 
   (from $\init$ until the first visit of $g$),
\- $g \rightsquigarrow g$ 
   (from the first visit of $g$ until the last visit), and
\- $g \rightsquigarrow \init$ 
   (from the last visit until $\init$).
\il

Given a guard and a part, \emph{enter[exit]} state is 
the first[last] state of the guard visited on the path.
For the middle part, 
this equivalents to the first[last] state of the path.

Now the \emph{simplification}.
Given a part of the loop, 
remove all loops from the part that start and end in the same state.
Apply this to the first, middle, and last part.

% do we really need the simplification?
% > yes, due to several times firing of the internal transition of a guard
% > and we really need both: simplificaiton of inter- and intra-guard transitions,
% > because:
% 1. internal transition may be fired several times 
%    without visiting other blocking guards:
%    +-----+
%    |     |
%    [guard]
%
% 2. internal transition may be fired several times 
%    within the same guard:
%    +---------+
%    | .<--->. | (guard)
%    +---------+ 

Call the loop consisting of the simplified first, middle and last parts,
the \emph{simplified loop}.
Then, the simplification ensures:
\li
\- for every simplified part: 
   a blocking guard $g$ can be visited more than once
   only if $enter_g \neq exit_g$
   and the internal transitions of $g$ are not used.

\- in the simplified loop:
   if an internal blocking guard transition is fired more than once,
   then the firings are separated by a visit of another blocking guard.
\il

\smallskip
\noindent
\emph{Preparation.}
Fix a part of the simplified loop (first, middle, or last).
In this part, for each guard $g$, the transition through $g$ 
is of the form:
\li
\-[f1.] $enter \rightarrow exit$ ($g$ is visited once), or
\-[f2.] $enter \rightarrow \neg g \rightsquigarrow exit$ ($g$ is visited twice).
\il
%If the target guard $guard$ visit is of the first type, then,
%according to the notes, $enter \rightarrow exit$ is fired inf-often.
Given a blocking guard $g$ visited in the part,
the \emph{preparation of $g$ wrt. the part} is:
if transition through $g$ is of the second form -- do nothing,
if of the first form -- wait until the transition is fired
in the original run of the large system, then execute it.
The preparation ensures:
\li
\- if the process $B_g'$ is in state $enter$, 
   then in the part the transition through the guard is of the second form.
\il

\smallskip
\noindent
\emph{Main construction.}
For each blocking guard, fix a loop that visits it inf-often.
Fix a blocking guard $guard$, called the target guard, and, thus, fix the loop.
Derive the simplified loop.
Let the simplified loop be of the form:
$$ 
\underbrace{\init \ g_1' \ ... \ g_f'}_\text{the first part} \ 
\underbrace{g^\star \ g_1'' \ ... \ g_m'' \ g^\star}_\text{the middle part} \ 
\underbrace{g_1''' \ ... \ g_l''' \ \init}_\text{the last part}
$$
In the above formula, $f$, $m$, or $l$ can be $0$, but for simplicity
consider they are not.

%Prepare all processes, one by one, of the blocking guards visited in the first part 
%wrt. the first part,
%and the process of the target guard wrt. middle part.
%This ensures,
%for every $g$, 
%which is a blocking guard of the first part or the target guard,
%$B_{g}'$ is either in $exit$ or
%internal transitions of $g$ are not fired in the corresponding part.
%\footnotemark[1000].

Start with $\init$ and the process $B_\init'$.
Call the process of the cutoff system, we currently move, $B_m'$.
Initially $B_m'$ is $B_\init$.
Prepare the process in $g_1'$ wrt. the first part.
Make $B_m'$ mimick $B_m$ until it reaches $g_1'$:
\li
\- if transition through $g_1'$ in the first part is of form (f2),
   then let $B_m'$ continue mimicking $B_m$ until it reaches 
   the next blocking guard
\- if transition through $g_1'$ in the first part if of form (f1),
   then leave $B_m'$ in $enter_{g_1'}$, wait until $B_m$ reaches $exit_{g_1'}$ 
   (this ensures that we do not transit inside the guard 
   when there is another process, $B_{exit_{g_1'}}$, in it).
   Then, prepare the next blocking guard wrt. the corresponding part.
   Set $B_m' = B_{exit_{g_1'}}$.
   Then, let $B_m'$ mimick $B_m$ until it reaches the next blocking guard.
\il
Repeat this construction in a natural way until $B_m'$ reaches $\init$.

For a given target guard $g^\star$, the main construction ensures that the process 
$B_{g^\star}$ moves to the next blocking guard $g$ on the simplified loop.
But we need to ensure that $B_{q^\star}$ eventually reaches $\init$.
To this end, set $g^\star=g$ and repeat the construction.
And so on, until the desired process reaches $\init$.
Finally, do this for every blocking guard.

This concludes the description of the main construction.

Note: the setup phase of the main construction, 
      where we put one process into every blocking guard and 
      one or two processes into every deadlocked state,
      is straightforward.

Finally, the cutoff is defined by the maximal number of the blocking guards + 1,
and is of the scale $\approx 2|Q_B\smi\{\init\}|$.
\end{proof}

\fi

\iffinal
\else
\subsection{Cutoffs for Conjunctive-Embedding Guard Systems}
\ak{not finished -- run into the problem that I could not solve.
    Originally, I wanted to put every process into every guard, 
    into the least restrictive state, 
    and then do looping, but the problem happens if the transition inside the guard is self-guarded}

\begin{restatable}[Deadlock Detection, Embedding guards]{lem}{blabla}
For 2-conjunctive-embedding systems on strong-fair initializing runs, 
with $c=???$ and any $n>c$:
  $$
  \cutoffsys \textit{ has a deadlock} 
  \ \Implied\ 
  \largesys \textit{ has a deadlock}
  $$
\end{restatable}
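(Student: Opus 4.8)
The statement is the bounding direction, $\largesys$ has a deadlock $\Rightarrow$ $\cutoffsys$ has a deadlock, so the plan is to start from a locally deadlocked strong-fair initializing run $x$ of $\largesys$ with $n>c$ and construct such a run $y$ of $\cutoffsys$ using at most $c \approx 2|Q_B \smi \{\init\}|$ copies of $B$. I would follow the architecture of Lemma~\ref{le:FairConjunctiveBoundingLemmaDeadlocks}: fix the set $\mD$ of deadlocked $B$-processes, the moment $d$ after which every $P\in\mD$ stays disabled, and the set $\dead(x)$ of deadlocked states; then build $y$ in a \emph{setup phase} that installs one permanently deadlocked process per deadlocked state and seeds a pool of witnesses, followed by a \emph{looping phase} that keeps the non-deadlocked witnesses moving so that $y$ is strong-fair and infinite.

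The point where the embedding hypothesis should buy tractability is the characterization of what must stay occupied. For $P\in\mD$ deadlocked at $(q_\bot,e_\bot)$, every outgoing transition carries a guard $\neg S$ with $|S|\le 2$, and disabling all of them means hitting every such $S$. Because the forbidden sets form a chain under inclusion, the guards of $P$ have a unique minimal set $S^{\min}_P$, and occupying a single state of $S^{\min}_P$ at all moments $\ge d$ already hits every larger set in the chain, hence disables all transitions of $P$. This replaces the static set $DeadGuards$ of the $1$-conjunctive case by a family of critical sets of size $\le 2$, one per deadlocked state, which is exactly the structured substitute for the general-conjunctive obstruction flagged in Remark~\ref{rem:general-conj-tough}. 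I would split $\dead(x)$ into $\dead_1$ and $\dead_2$ as in the fair $1$-conjunctive proof, allocating one or two copies per critical set, and argue via the nesting that the distinct states involved number at most $|Q_B \smi \{\init\}|$, keeping the total near $2|Q_B \smi \{\init\}|$.

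In the looping phase I would order the critical states that are only visited infinitely often (and held by no permanently deadlocked process), keep one spare process in $\init_B$, and carry a token $\init_B \rightsquigarrow q_i \rightsquigarrow \init_B$ through them by copying, after moment $d$, the transitions of a large-system process that loops through $q_i$ and $\init_B$ infinitely often. Maintaining the invariant that the set of occupied states of $y$ always contains one state of every critical set guarantees that the copied deadlocked processes stay disabled, while the carrying guarantees that every non-deadlocked witness moves infinitely often, so that $y$ is strong-fair and initializing.

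The hard part, and the reason the naive construction fails, is the looping phase in the presence of self-guarded internal transitions between the two states of an embedding guard. If a critical set is $\{q_1,q_2\}$ and keeping it occupied forces a witness to move $q_1 \to q_2$, that internal transition may itself be guarded by $\neg\{q_1\}$ or $\neg\{q_2\}$; then transiently vacating $q_1$ to reach $q_2$ can uncover a deadlocked process whose survival relied on $q_1$ being occupied, breaking the invariant. My intended remedy is to hold two staggered witnesses per such set, so that one always occupies a guard state while the other traverses the internal edge, and to use the embedding order to fix, for each set, which state is \emph{safe to vacate first}. The main obstacle is to show that a consistent global schedule of these staggered moves exists --- that each carrying cycle returns to $\init_B$ without ever simultaneously emptying all states of any critical set; I conjecture this holds because embedded sets share states and hence the safe-to-vacate relation is acyclic, but it is precisely the step where the disjoint-guard argument of the previous subsection does not transfer, and where a tight value of $c$ (and its tightness) would have to be settled.
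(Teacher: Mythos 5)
You should know up front that the paper itself does \emph{not} prove this lemma: the statement is left with $c=???$, and the authors' proof sketch is explicitly annotated as ``not finished -- run into the problem that I could not solve,'' the unsolved problem being exactly transitions inside the embedding guard that are guarded by the guard itself. Your proposal faithfully reproduces the paper's intended architecture (setup phase plus looping phase as in Lemma~\ref{le:FairConjunctiveBoundingLemmaDeadlocks}, the invariant role of $\init_B$, baton-passing witnesses along loops $\init_B \rightsquigarrow q \rightsquigarrow \init_B$ copied from infinitely moving processes after moment $d$), and you correctly isolate the same obstacle. But it contains one outright error and stalls at the same unresolved step. The error is the chain reduction: the embedding condition only forces comparability between guards where one is \emph{contained} in another (it excludes $\{\{a,b\},\{a\},\{b\}\}$ but allows, say, $\{a,b\}$ alongside $\{c\}$), so the forbidden sets of a single deadlocked process need not form one chain, and no unique $S^{\min}_P$ exists whose occupation disables all of its transitions. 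A process can be deadlocked by transitions guarded $\forall\neg\{a,b\}$ and $\forall\neg\{c\}$ simultaneously; you must hit the minimal set of \emph{each} chain, and this is precisely where any concrete value of $c$ (which neither you nor the paper pins down) would have to come from.

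Decisively, your staggered-two-witness remedy fails on the exact scenario the authors highlight as unsolvable: if the internal edge $q_1 \to q_2$ of a critical set is guarded $\forall\neg\{q_1,q_2\}$, then the conjunctive semantics makes your invariant and your traversal mutually exclusive --- the traversing witness's transition is enabled only when \emph{no other} process is in $\{q_1,q_2\}$, so the second witness ``always occupying a guard state'' disables the first's move, rather than protecting it. In the large run the hand-off works because the holder vacates for an instant while the mover, sitting in $q_1$, itself keeps the set hit (this is the pictured run where $B_{a,b}$ leaves $a$ momentarily, $B_m$ crosses $a\to b$, and $B_{a,b}$ returns); the open question is whether this dance can be replayed with boundedly many processes while strong fairness forces every holder to also move infinitely often. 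Your acyclicity claim for the ``safe-to-vacate'' relation is asserted, not argued, and your own text concedes the global schedule is a conjecture. So the proposal is not a proof: it repairs nothing the paper left broken, and adds the incorrect unique-minimal-set step; the lemma remains open in both your account and the paper's.
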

\begin{proof}
Given a locally deadlocked strong-fair initializing run of the \largesys, 
we construct such run in \cutoffsys.

Examples of 2-conjunctive-embedding guards: 
$\{\{a,b\}, \{a\}\}$, but not $\{\{a,b\}, \{a\}, \{b\}\}$.
I.e., all guards that are embedded into some guard are comparable with $\subset$
\footnote{Why guards like $(a,b), (a), (b)$ are more difficult and $(a,b),(b)$ are easier?
          Because if $(a,b)$ needs to blocked (but not $(a)$ or $(b)$),
          then a system run may be such that forces processes to flit $a \leftrightarrow b$ 
          when some other process goes from $\init$ to $(a,b)$.
          And if the process template has only $a \rightarrow b$, 
          then the process that goes $\init \rightarrow (a,b)$ can throw the process in $(a,b)$ 
          out of the guard 
          (first, move it $a \rightarrow b$ with transition guarded $\forall \neg a$, 
           second, move it out of $b$ with $\forall \neg b$).
          In contrast, in embedded guard systems when the process moves $\init \rightarrow (a,b)$,
          it is safe to keep the blocking process in $a$ until some moment, 
          when we can move it into $b$ and then finally out of $(a,b)$.}
Also, for now, disallow more than one level of embedding: 
if $g_1 \subset g_2$, then no other $g_i \subset g_1$.

The construction is based on three notes:
\li
\-[1.] invariant state: every cycle starts and ends with the some special state of $B$ processe, 
       called invariant state
\-[2.] abstraction of transitions inside the guard:
       instead of literal repeating how processes move inside the guard,
       we repeat only ``important'' transitions
\-[3.] loops repeat: we might need two loops from the original run of the large system
       in order to complete one loop in the cutoff system
\-[4.] there is a special case of inside-the-guard transitions
\il

We copy local runs of deadlocked processes as usual.\ak{TODO}
For the deadlocked processes to be deadlocked, 
we need to block the guards $BlockGuards' = \{ g_1, ..., g_{k'} \}$.
Some of those guards are already blocked by the deadlocked processes, 
so let $BlockGuards=\{g_1, ..., g_k\} \subseteq BlockGuards'$ denote the guards that are blocked
by infinitely moving processes of the large system.
Thus, our task is to construct a run of \cutoffsys such that:
always there is a process in some state of every $g_1,...,g_k$,
\emph{and} such processes move infinitely often.

\myparagraph{1) Invariant state}

What are the loop transitions? 
Take an inf-moving process in the large system.
It visits \init inf-often, 
and it visits some of guards $\subseteq BlockGuards$ inf-often\footnotemark[10].
Thus, for every guard $g \in BlockGuards$ there is a process of the large system
that moves $\init \rightsquigarrow g \rightsquigarrow \init$ inf-often.

There may be loops of several types. 
Consider the loop type in the figure.
In this loop, 
a process enters guard $\{a,b\}$ via state $a$ and then singleton-guard $\{b\}$.
Call this process $B_m$ (``moving'' process), 
and let $B_m'$ be the process of the cutoff system that currently copies its transitions.
When $B_m$ (and $B_m'$) is outside of $\{a,b\}$ while moving along the loop towards the guard,
in the large system there is some process that blocks the guard.
In the cutoff system, suppose we have a process $B_{a,b}'$ in state $a$ 
at the moment when $B_m$ (and thus $B_m'$) is in $\init$\footnotemark[666].

Process $B_m$ moves along the loop into the guard, 
and some transitions may be guarded $\forall\neg b$\footnotemark[20]
(that is why we cannot place $B_{a,b}'$ into $b$).
What we would like is to move $B_m'$ into $\{a,b\}$ by copying transitions of $B_m$,
then move $B_{a,b}'$ out of $\{a,b\}$ using transitions of $B_m$,
while leaving $B_m'$ in ${a,b}$.

Note 1: if process $B_{a,b}'$ stays in $a$ until process $B_m'$ (and $B_m$) reaches $a$,
then the transitions of $B_m'$ on path $\init \rightsquigarrow a$ are enabled.
However, keeping $B_{a,b}$ in $a$ may result in disabling the subsequent transition
of $B_m$ from $a$ into $b$
(imagine $B_m$ does $\transition{a}{b}{\forall\neg \{a,b\}}$ ---
 $B_{a,b}'$ of the cutoff system cannot repeat this).

Since $B_m$ reaches $a$ (and has a transition guarded $\forall \neg b$), 
there is the last moment in the path $\init \rightsquigarrow a$ 
when $\forall \neg b$ needs to be enabled\footnotemark[200].
After that moment, we can move $B_{a,b}'$ from $a$ into $b$.
But this transition should be enabled.
Now two cases are possible:
\li
\- the transition $a \rightarrow b$ becomes enabled 
   while $B_m$ moves $\init \rightsquigarrow a$.
   Hence, we move process $B_{a,b}'$ when this happens for the last time.
   \footnotemark[333]

\- the transition $a \rightarrow b$ became enabled 
   before $B_m$ started $\init \rightsquigarrow a$.
\il

\hl{the following scenario is problemic: 
    imagine in the large system:
    $B_{a,b}$ stays in $a$, until $B_m$ reaches $a$, 
    then $B_{a,b}$ goes outside for a moment -- $B_m$ goes into $b$ -- $B_{a,b}$ returns into $a$, 
    and finally $B_m$ leaves.
    Later, $B_{a,b}$ will move to make it strong-fair. 
    Here is the picture ($a \rightarrow b$ is guarded $\forall \neg \{a,b\}$):}
    \includegraphics[width=2.5cm]{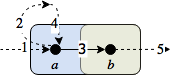}

\footnotetext[10]{If no such processes exists, copy one such process, and we are done.}
\footnotetext[20]{If no such transitions exists, we put $B_{a,b}$ into $b$, 
                  and the construction is similar to 1-conj guards looping, 
                  because the processes can pass the baton of copying $B_m$ when they meet in $b$.
                  Almost: what if there is $\transition{}{}{\forall\neg \{a,b\}}$?\ak{todo}}
\footnotetext[200]{\ak{account for the case when it was enabled before we started from \init}}
\footnotetext[333]{\ak{Here we use the assumption that process $B_{a,b}$ 
                       (of the large system) moved $a \rightarrow b$.}}
\footnotetext[666]{Careful here -- account for different schedulings and possibilities.}

\end{proof}

\fi

\iffinal
\else
\section{2-Conjunctive Proof Idea}

\paragraph{Given:} a (initializing) system with only 1-conjunctive guards, except a single 2-conjunctive guard.

\paragraph{Goal:} find cutoff for local deadlock detection.

\paragraph{Idea}:
Assume in a system \largesys there is a locally deadlocked run $x$. That is, at some time $d$ there is a set of processes with state/input combinations $\dead \subseteq Q \times \Sigma$ such that $\forall (q,e) \in \dead$ and $\forall \time \geq d$, no transition from $(q,e)$ is enabled in $x_\time$. Let $\visited_\bot$ be the set of state components of $\dead$, i.e., $\visited_\bot = \{ q \in Q \mid \exists e \in \Sigma: (q,e) \in \dead \}$. 

As in the 1-conjunctive case, this local deadlock may depend on
\begin{enumerate}[label=\alph*)]
\item at least two processes staying in each of the states in $\visited^2_\bot \subset \visited_\bot$, and
\item an additional process staying in each of the states of an additional set $BlockStates$.
\end{enumerate}

Since we want to simulate a strong-fair run $x$, we need to ensure that all processes that are not in $\dead$ at moment $d$ will move infinitely often in our simulation. In particular, we assume that the deadlock depends on the 2-conjunctive guard of the system, and none of the deadlocked processes is in either of the states $q_1$ or $q_2$ that are excluded by this guard (otherwise, the 2-conjunctive guard is not important for the local deadlock and we can fall back to the 1-conjunctive proof).

To allow infinite movement, note that for every $q \in Blockstates$, there must be a loop from $q$ to $\initstate$ and back to $q$ (written $\myloop{q}$) in the original run $x$. While moving along this loop is easy in the pure 1-conjunctive case if we remove all processes not in $\visited_\bot \cup BlockStates$, it is not so easy if we have a 2-conjunctive guard: transitions on the loop $\myloop{q}$ may depend on guards that exclude either $q_1$ or $q_2$, the two states of the 2-conjunctive guard.

First, consider the problem of a process $p$ that occupies (wlog) the state $q_1$ of the 2-conjunctive guard at moment $d$, and has to make a loop $\myloop{q_1}$. The transition out of the loop may depend on a guard that forbids $q_1$, so another process needs to move into $q_2$ before $p$ can move out of $q_1$. The next transition along the loop may depend on a guard that forbids $q_2$, so another process needs to move into $q_1$, and the first additional process then needs to move out of $q_2$, etc.

\paragraph{Question:} How many processes are necessary to simulate just the loop $\myloop{q_1}$?
\fi

%\input{encoding}
%\input{experiments-full}

%\fi

\end{document}